\documentclass[11pt]{amsart}
\usepackage{amsmath, amsthm, amssymb}
\usepackage{graphicx}
\usepackage{mathbbol}
\usepackage[usenames]{color}
\usepackage{tikz}
\usepackage{alltt}
\usetikzlibrary{shapes}
\usetikzlibrary{plotmarks}

\newtheorem{theorem}{Theorem}[section]

\newtheorem{corollary}[theorem]{Corollary}   

\newtheorem{lemma}[theorem]{Lemma}   

\newtheorem{proposition}[theorem]{Proposition}

\newtheorem{definition}[theorem]{Definition}

\newtheorem{remark}[theorem]{Remark}


\newcommand{\e}{\varepsilon}
\newcommand{\grad}{\overrightarrow{\mathrm{grad}}\,}
\newcommand{\dvol}{\,\mathrm{dvol}}
\newcommand{\R}{\mathbb{R}}
\newcommand{\geucl}{g_{\mathbb{E}}}
\newcommand{\thetafinal}{\Theta_F}
\newcommand{\phifinal}{\Phi_F}
\newcommand{\omegafinal}{\Omega_F}

\newcommand{\rn}{d_{n}}

\newcommand{\mE}{m_{\mathrm{eff}}}

\begin{document}

\title[]{The effects of self-interaction on constructing relativistic point particles}

\author{Noah Benjamin}
\address{Lewis \& Clark College}
\email{noahbenjamin@lclark.edu}

\author{Iva Stavrov Allen}
\address{Lewis \& Clark College}
\email{istavrov@lclark.edu}

\date{}

\keywords{}

\begin{abstract} We introduce a framework for studying the effects of self-interaction on the construction of point particle initial data in General Relativity. Within this framework we rigorously prove the vanishing mass claim made by Arnowitt, Deser and Misner in \cite{ZeroMass} regarding point sources. We identify a geometric structure and a scaling parameter that allow one to determine, by controlling the effects of self-interaction, when one does or does not obtain a non-zero mass.
\end{abstract}

\maketitle

\section*{Introduction}
In classical physical theories objects whose internal structure is irrelevant are commonly treated as point particles, point charges, etc. From the mathematical standpoint this is made possible by the fact that Schwarz distributions (e.g. Dirac delta distribution) are well suited for linear theories. Point particle idealization would also be useful in General Relativity, but the non-linearity of Einstein's equations makes this concept mathematically problematic. 

A notable paper addressing this issue is \cite{Strings}; in this paper metrics permitting distributional curvature are introduced and analyzed. However, the authors show that even within their wide regularity class of metrics, point-particles (sources concentrated on world-lines in space-time) are not well-defined. The authors conclude:
\begin{quote}
Indeed, it now seems likely that there is in general relativity no mathematical framework whatever for matter sources concentrated on one-dimensional surfaces in space-time.
\end{quote}

The same question is explored in the landmark 1960-62 sequence of papers by Arnowitt, Deser and Misner. Specifically, in \cite{ZeroMass} and also in \cite{ADM}, it is argued that electrically neutral point particles must have zero mass. We refer to this as the vanishing mass result. The approach taken  in \cite{ZeroMass}, as well as in this paper, deals with asymptotically Euclidean time-symmetric initial data. The Hamiltonian constraint 
$$R(g_\omega)\dvol_{g_\omega}=16 \pi \tfrac{G}{c^2} \omega$$ 
is analyzed within the conformal class of the Euclidean metric,
$$g_\omega=\theta^4 \geucl.$$
 Throughout our paper we take $\omega=\phi\dvol_{\geucl}$ to be a smooth, compactly supported matter distribution on $\R^3$ with $\phi\ge 0$. The asymptotic conditions which ensure asymptotically Euclidean data are
\begin{equation}\label{asymptotics}
\left|\partial_x^l\!\left(\theta(x)-1\right)\right|=O(|x|^{-l-1}),\ \ |x|\to \infty,\ \ l\ge0.
\end{equation}
Since $R(g_\omega)=-8\theta^{-5}\Delta_{\geucl}\theta$, the Hamiltonian constraint is equivalent to a non-linear Poisson equation 
\begin{equation}\label{TheEqn}
\theta \Delta_{\geucl}\theta \dvol_{\geucl}=-4\pi \tfrac{G}{2c^2} \omega.
\end{equation}
Observe that with the Ansatz of $\theta=1+\tfrac{G}{2c^2}V$, the approximation of \eqref{TheEqn} to first order in $\tfrac{G}{2c^2}$ simplifies to the Poisson equation,
$$\Delta_{\geucl}V=-4\pi \phi.$$
The equation \eqref{TheEqn}, paired with a boundary condition, we refer to as the Relativistic Poisson Problem (RPP). Unless otherwise stated the boundary condition is $\theta \to 1$. 

In what follows we explore the initial data obtained by taking the limit of solutions to the constraint equations corresponding to collapsing sequences of matter distributions. It should be noted that throughout the paper collapse refers not to gravitational collapse, or any dynamic process, but to the shrinking of the support of the matter distribution on each constant time slice.

In \cite{ZeroMass} the matter distribution $\omega$ is set to be a multiple $\mE\delta$ of  the Dirac delta distribution. In effect the authors argue that \eqref{TheEqn} only permits solutions when $\mE=0$, although a mathematically rigorous argument is not included. Section \ref{theta} of our paper provides such an argument. 

The reason for the vanishing mass result is, in a sense, because of interaction energies. To illustrate this we consider uncharged Brill-Lindquist  metrics (see \cite{BL})
\begin{equation}
g_{\mathrm{BL}}=\theta_{BL}^4\geucl,
\end{equation}
where $\theta_{BL}=\left(1 +\frac{G}{2c^2}\sum_{i=1}^n \frac{a_i}{|x-p_i|}\right)$ and $a_i>0$. A rough intuition behind Brill-Lindquist metrics is that they model a collection of point particles located at $x=p_i$.  Inspecting $\theta_{BL}$ suggests that one can view conformal factors as being akin to gravitational potentials. In this context, we distinguish \emph{bare mass}, $m_i$, from \emph{effective mass}, $a_i$. Brill and Lindquist attribute the discrepancy between $m_i$ and $a_i$ to \emph{interaction energy}. A direct computation shows that the asymptotic end at $x=p_i$ has ADM mass of  
$$
m_{i}=a_i\left(1+\frac{G}{2c^2}\sum_{j\neq i}\frac{a_j}{|p_i-p_j|}\right),
$$
while the asymptotic end at $x=\infty$ has the ADM mass of 
\begin{equation}\label{effectiveADM}
m=\sum a_i=\sum \frac{m_i}{\left(1+\frac{G}{2c^2}\sum_{j\neq i}\frac{a_j}{|p_i-p_j|}\right)}.
\end{equation}
In this paper it is the continuous analogue of \eqref{effectiveADM} that plays a crucial role (see \eqref{efmass}). 
 
 One can highlight the inadequacy of the framework used in \cite{ZeroMass} by inspecting the Hamiltonian constraint for the Schwarzschild body, $g_\omega=\theta_\omega^4\geucl$ where $\theta_\omega=(1+\frac{G\mE}{2c^2r})$. Recalling that $R(g_\omega)\dvol_{g_\omega}=-8\theta_\omega \Delta_{g_\omega}\theta_\omega,$ the Hamiltonian constraint for the canonical Schwarzschild geometry formally reads
$$\mE \delta+\frac{G\mE^2\delta}{2c^2r}=\omega_{\text{Schw}}.$$
We hasten to add that we are not asserting any mathematical validity of $\frac{\delta}{r}$, but it is heuristically useful for qualitative discussion nonetheless. 
This expression is surprising, as one would expect that the matter distribution corresponding to a point particle would be $\mE \delta$. However, viewing the conformal factor as analogous to a gravitational potential, the presence of $\frac{G\mE^2\delta}{2c^2r}$ suggests that the effects of gravitational self-interaction must be accounted for if one is to have non-vanishing mass. In Section \ref{NoahsSection} we provide such an account.

Our paper also presents a detailed analysis of an approximately self-similar family of distributions and the spatial geometries obtained in their limit. In this analysis we introduce a continuous parameter $\alpha$, which we interpret as determining the limiting geometry based on the degree to which one accounts for self-interaction effects. In one extreme we obtain an illuminating picture of the flat space result from \cite{ZeroMass}. In the other extreme, we obtain a Schwarzschild-type point mass. In Section \ref{NoahsSection} we make precise the connection between the parameter $\alpha$ and the effects of interaction during collapse.

\begin{figure}[h]
\centering
\begin{tikzpicture}[scale=.45]

\draw (-0.5,-6) to (6.25, -6) to (8.25,-4) to (1.5, -4) to (-0.5, -6);

\draw[thick, dashed] (2, -4.5) to [out=-30, in=100] (2.75, -6);
\draw[thick] (2.75, -6) to [out=-80, in=180] (3.75, -6.5) to [out=0, in=-100] (4.75, -6);
\draw[thick, dashed] (4.75, -6) to [out=80, in=-150] (5.75, -4.5);

\draw (8,-6) to (13.5, -6) to (15.5,-4) to (10, -4) to (8, -6);

\draw[thick, dashed] (11, -4.5) to [out=-30, in=100] (11.65, -5.25) to [out=-100, in=30] (11.25, -6);
\draw[thick] (11.25, -6) to [out=-150, in=90] (10.75, -7) to [out=-90, in=180] (11.75, -7.75) to [out=0, in=-100] (13.25, -7) to [out=80, in=-45] (13, -6);
\draw[thick, dashed] (13, -6) to [out=135, in=-90] (12.75, -5.25) to [out=80, in=-150] (13.25, -4.5);

\draw[thin] (11.65, -5.35) to [out=0, in=-150] (12.75, -5.25);

\draw (16, -5) to (19, -5);
\draw (18.8, -5.2) to (19, -5) to (18.8, -4.8);

\draw (19.5,-6) to (24.75, -6) to (26.75,-4) to (21.5, -4) to (19.5, -6);

\draw[thick, dashed] (21.9, -6) to [out=60, in=180] (23, -5.25) to [out=0, in=120] (24.1, -6);
\draw[thick] (21.9, -6) to [out=-120, in=90] (21.75, -6.5) to [out=-90, in=180] (23.25, -7.75) to [out=0, in=-90] (24.25, -6.5) to [out=90, in=-60] (24.1, -6);

\draw [fill] (23, -5.25) circle (0.1);

\end{tikzpicture}
\caption{A careful analysis of \cite{ZeroMass}; our case of $\alpha=0$.}\label{fig1}
\end{figure}
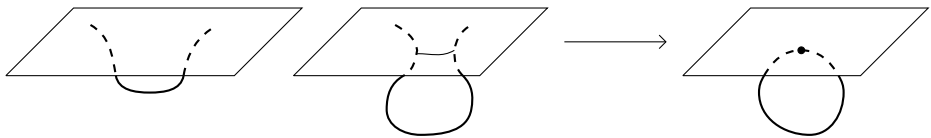

\begin{figure}[h]
\centering
\begin{tikzpicture}[scale=.45]

\draw (8,-6) to (13.5, -6) to (15.5,-4) to (10, -4) to (8, -6);

\draw[thick, dashed] (11, -4.5) to [out=-30, in=100] (11.65, -5.25) to [out=-100, in=30] (11.25, -6);
\draw[thick] (11.25, -6) to [out=-150, in=90] (10.25, -7) to [out=-90, in=180] (11.75, -7.75) to [out=0, in=-100] (13.75, -7) to [out=80, in=-45] (13, -6);
\draw[thick, dashed] (13, -6) to [out=135, in=-90] (12.75, -5.25) to [out=80, in=-150] (13.25, -4.5);

\draw[thin] (11.65, -5.35) to [out=0, in=-150] (12.75, -5.25);

\draw (16.25,-6) to (21.5, -6) to (23.5,-4) to (18.25, -4) to (16.25, -6);

\draw[thick, dashed] (19, -4.5) to [out=-30, in=100] (20.1, -5.25) to [out=-100, in=30] (19.25, -6);
\draw[thick] (19.25, -6) to [out=-150, in=90] (17.75, -7) to [out=-90, in=180] (19.75, -8.5) to [out=0, in=-100] (22.25, -7) to [out=80, in=-45] (21, -6);
\draw[thick, dashed] (21, -6) to [out=135, in=-90] (20.4, -5.25) to [out=80, in=-150] (21.25, -4.5);

\draw[thin] (20.1, -5.35) to [out=0, in=-160] (20.4, -5.3);

\draw (24, -5) to (27, -5);
\draw (26.8, -5.2) to (27, -5) to (26.8, -4.8);

\draw (27.25,-6) to (32.5, -6) to (34.5,-4) to (29.25, -4) to (27.25, -6);

\draw[thin, dashed] (30.5, -4.85) to [out=-30, in=100] (30.95, -5.25) to [out=-90, in=45] (30.75, -5.7);
\draw[thin, dashed] (31.5, -4.85) to [out=-150, in=80]  (31.05, -5.25) to [out=-90, in=135] (31.25, -5.7);

\draw [fill] (31, -5.25) circle (0.1);

\draw (27.5,-6.25) to (32.75, -6.25) to (34.75,-4.25);
\draw[dashed] (34.75,-4.25) to (29.5, -4.25) to (27.5, -6.25);

\end{tikzpicture}
\caption{The limit in the case of $0<\alpha<1$.}\label{fig2}
\end{figure}

\begin{figure}[h]
\centering
\begin{tikzpicture}[scale=.45]

\draw (8.75,-6) to (14, -6) to (16,-4) to (10.75, -4) to (8.75, -6);

\draw[thick, dashed] (11.5, -4.5) to [out=-30, in=100] (12.15, -5.25) to [out=-100, in=30] (11.75, -6);
\draw[thick] (11.75, -6) to [out=-150, in=90] (10.75, -7) to [out=-90, in=180] (12.25, -8) to [out=0, in=-100] (14.25, -7) to [out=80, in=-45] (13.5, -6);
\draw[thick, dashed] (13.5, -6) to [out=135, in=-90] (13.25, -5.25) to [out=80, in=-150] (13.75, -4.5);

\draw[thin] (12.15, -5.25) to [out=-10, in=-160] (13.25, -5.25);

\draw (16.75,-6) to (21.5, -6) to (23.5,-4) to (18.75, -4) to (16.75, -6);

\draw[thick, dashed] (19, -4.5) to [out=-30, in=100] (19.8, -5.6) to [out=-100, in=30] (19.75, -6);
\draw[thick] (19.75, -6) to [out=-120, in=75] (18, -8) to [out=-90, in=180] (19.75, -9.5) to [out=0, in=-100] (22.5, -8) to [out=100, in=-75] (20.8, -6);
\draw[thick, dashed] (20.8, -6) to [out=105, in=-90] (20.7, -5.6) to [out=80, in=-150] (21.45, -4.37);

\draw[thin] (19.8, -5.8) to [out=0, in=-165] (20.7, -5.75);

\draw (23.25, -6.5) to (26, -6.5);
\draw (25.8, -6.7) to (26, -6.5) to (25.8, -6.3);

\draw (26.75,-6) to (31.5, -6) to (33.5,-4) to (28.75, -4) to (26.75, -6);

\draw[thick, dashed] (29, -4.5) to [out=-30, in=90] (29.75, -6);
\draw[thick] (29.75, -6) to [out=-90, in=25] (28.75, -8);
\draw[thick] (31.75, -8) to [out=160, in=-90] (30.6, -6);
\draw[thick, dashed] (30.6, -6) to [out=90, in=-150] (31.45, -4.37);

\draw[thick] (29.8, -6.15) to [out=-20, in=-160] (30.6, -6.15);

\draw (26.5,-9) to (31.5, -9) to (33.5,-7) to (28.5, -7) to (26.5, -9);

\end{tikzpicture}
\caption{The limit in the case of $\alpha=1$.}\label{fig3}
\end{figure}
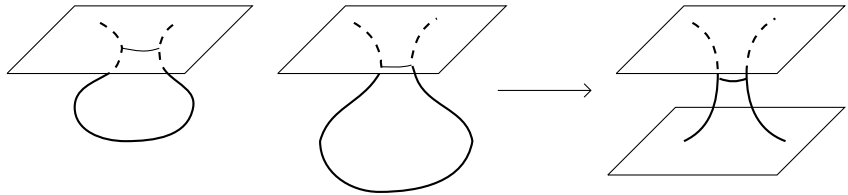

As mentioned, Section \ref{theta} is devoted to the rigorous reconstruction of the vanishing mass result. Section \ref{ssf} lays out the approximately self-similar framework and develops analytical tools for use in subsequent sections. Section \ref{NoahsSection} presents the detailed analysis of the parameter $\alpha$ as well as its interpretation. Finally, Section \ref{horizon} presents some connections to the current literature, and Appendices \ref{minsurf-ivaisms}  and B, contain detailed analysis of the event horizon which we deemed too technical to include in the body of the paper. Section \ref{horizon} and the appendices are authored by Iva Stavrov independently.

\subsection*{Acknowledgments}
This research is funded by John S. Rogers Science Research Program at Lewis \& Clark College. We thank Toby Aldape, Paul T. Allen, Mohamed Anber and Christina Sormani for useful conversations.

\section{Reconstructing the vanishing mass result}\label{theta}

Our reconstruction of the vanishing mass result takes the following course. First we prove existence and uniqueness to the RPP under the appropriate boundary condition, then we establish convergent subsequences in Sobolev spaces and finally invoke a standard argument to obtain full sequential convergence to the Euclidean metric.

\begin{proposition}\label{NoahThm}
Suppose that $\omega=\phi\dvol_{\geucl}$ is a smooth, compactly supported distribution on $\R^3$ with $\phi\ge 0$. Then there exists a unique positive solution of \eqref{TheEqn} satisfying the asymptotic conditions \eqref{asymptotics}.  
\end{proposition}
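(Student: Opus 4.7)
The plan is to recast \eqref{TheEqn} as $\Delta\theta=-K\phi/\theta$ with $K=2\pi G/c^2>0$, which is valid for any positive $\theta$. Any positive solution with $\theta\to 1$ is then automatically superharmonic, so by the maximum principle $\theta\geq 1$ on $\R^3$. I would then build such a solution by the sub/super-solution method: the natural subsolution is $\theta_-\equiv 1$, and the natural supersolution is $\theta_+=1+W$, where $W(x)=\int K\phi(y)/(4\pi|x-y|)\,dy$ is the Newtonian potential solving $\Delta W=-K\phi$ with $W\to 0$. Indeed $\theta_+\Delta\theta_+=-(1+W)K\phi\leq -K\phi$ and $\theta_-\Delta\theta_-=0\geq -K\phi$, and $W\geq 0$ gives $\theta_-\leq\theta_+$.

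For existence, I invert $-\Delta$ against the Newtonian kernel to define the anti-monotone map
$$T\theta(x)=1+\int\frac{K\phi(y)}{4\pi|x-y|\,\theta(y)}\,dy$$
on the closed convex order interval $X=\{\theta\in C^0(\R^3):1\leq\theta\leq 1+W\}$. A direct check shows $T(X)\subset X$; the compact support of $\phi$ combined with interior elliptic regularity gives uniform $C^{2,\alpha}_{\mathrm{loc}}$ bounds on $T(X)$, and the envelope $|T\theta-1|\leq W\to 0$ supplies the equi-decay needed for compactness in $C^0(\R^3)$. Schauder's fixed-point theorem then produces a fixed point $\theta\in X$, which is a positive solution of \eqref{TheEqn}. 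The asymptotic conditions $|\partial^l(\theta-1)|=O(|x|^{-l-1})$ then follow by differentiating the representation $\theta-1=\int K\phi(y)/(4\pi|x-y|\theta(y))\,dy$ under the integral sign and expanding $1/|x-y|$ in inverse powers of $|x|$, using that $\phi$ is compactly supported and $\theta$ is bounded above and below.

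For uniqueness, suppose $\theta_1,\theta_2$ are two admissible positive solutions and set $w=\theta_1-\theta_2$. Subtracting the two instances of \eqref{TheEqn} and using $\theta_i\Delta\theta_i=-K\phi$ yields
$$\Delta w=\frac{K\phi}{\theta_1\theta_2}\,w.$$
Multiplying by $w$ and integrating over $B_R$ gives
$$\int_{B_R}|\nabla w|^2\,dx+\int_{B_R}\frac{K\phi}{\theta_1\theta_2}\,w^2\,dx=\int_{\partial B_R}w\,\partial_\nu w\,dS.$$
By \eqref{asymptotics} the boundary term is $O(R^{-1})$, so in the limit $R\to\infty$ both non-negative integrals on the left must vanish; hence $\nabla w\equiv 0$, so $w$ is constant, and the boundary condition forces $w\equiv 0$.

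The main obstacle is the existence step, where the function-space framework has to be tuned carefully: large enough to accept sub/super-solution sandwiching, small enough to yield compactness of $T$, and refined enough to propagate the full decay \eqref{asymptotics} on every derivative through the integral representation. A viable alternative to Schauder is to exploit the anti-monotonicity of $T$, equivalently the monotonicity of $T^2$, on $[\theta_-,\theta_+]$ and construct the solution by monotone iteration starting from $\theta_-$; once uniqueness is established, the even and odd iterates must share the common limit. Uniqueness itself, by contrast, reduces to the clean energy identity above as soon as the asymptotic decay has been verified.
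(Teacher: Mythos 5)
Your argument is correct, but it takes a genuinely different route from the paper's, even though both are built around the very same integral map $T\theta(x)=1+\tfrac{G}{2c^2}\int \tfrac{\phi(y)}{|x-y|\theta(y)}\,dy$ (the paper's recursion \eqref{recursion}). The paper runs the anti-monotone iteration from $\theta_0\equiv 1$, extracts $C^0$-limits of the even and odd subsequences via interior elliptic regularity and Rellich, and then faces the characteristic difficulty of anti-monotone schemes: the two limits $\theta_-\le\theta_+$ a priori only solve the \emph{coupled} system \eqref{CoupledEqns}, so an extra argument (the convex combination $k\theta_-+(1-k)\theta_+$ and the maximum principle) is needed to force $\theta_-=\theta_+$; uniqueness is then a strong-maximum-principle argument that uses only $\theta_1-\theta_2\to 0$. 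Your Schauder fixed-point argument on the order interval $X=\{1\le\theta\le 1+W\}$ bypasses the identification step entirely, since compactness plus convexity produces a genuine fixed point of $T$ in one stroke; the price is that you must verify continuity of $T$ and precompactness of $T(X)$ in a topology encoding decay at infinity, which your envelope $|T\theta-1|\le W$ supplies. One small technical correction: for $\theta\in X$ merely continuous, $\phi/\theta$ is only $C^0$, so uniform $C^{2,\alpha}_{\mathrm{loc}}$ bounds on $T(X)$ are not justified as stated; uniform $W^{2,p}_{\mathrm{loc}}$ (or $C^{1,\alpha}_{\mathrm{loc}}$, directly from the Newtonian kernel) bounds do hold and are all that your compactness argument needs. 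Your uniqueness proof via the energy identity is also fine, but note it consumes more of the hypothesis than the paper's: it needs the decay of both $w$ and $\nabla w$ from \eqref{asymptotics}, whereas the maximum-principle argument needs only $w\to 0$. Finally, the aside suggesting that in the monotone-iteration alternative ``once uniqueness is established, the even and odd iterates must share the common limit'' does not work as stated: uniqueness for \eqref{TheEqn} says nothing about $\theta_\pm$ unless they already solve \eqref{TheEqn} individually, which is exactly what is in question — this is precisely the gap the paper closes with its convex-combination trick; since your main route is Schauder, this does not affect your proof.
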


\begin{proof}
We employ a slight modification of the standard method of sub- and super-solutions. To prove existence consider the sequence of smooth functions $\theta_m$ defined by 
$\theta_0(x)\equiv 1$ and 
\begin{equation}\label{recursion}
\theta_{m+1}(x):=1+\frac{G}{2c^2}\int_{y\in \R^3} \frac{\omega(y) }{|x-y| \theta_m(y)}.
\end{equation}
By construction $\theta_{m+1}$ solves 
$$\Delta_{\geucl}\theta_{m+1}= -4\pi \frac{G}{2c^2}\cdot \frac{\phi}{\theta_m}.$$
Induction shows $\theta_m(x)\ge 1$ and 
$$\theta_0(x)\le \theta_2(x)\le \theta_4(x)\le ... \le \theta_5(x) \le \theta_3(x)\le \theta_1(x).$$

Fix compact sets $K\subseteq \mathrm{Int}(K')\subseteq K'$; without loss of generality we may assume that $\mathrm{supp}(\omega)\subseteq K$. The sequences $\theta_{m+1}$ and $-4\pi \tfrac{G}{2c^2}\phi\theta_{m}^{-1}$ are bounded in $L^2(K')$. By the interior elliptic regularity estimates we know that $\theta_m$ is bounded in $H^2(K)$. By Rellich Lemma and Sobolev inequality a subsequence of $\theta_m$ (and thus a subsequence of $\theta_{2m}$ or of $\theta_{2m+1}$) must be convergent in $C^0(K)$. Since $\theta_{2m}$ and $\theta_{2m+1}$ are monotone, at least one of them converges to some $\theta\in C^0(K)$. In fact, because of the recursive relationship \eqref{recursion} we know that both $\theta_{2m}$ and $\theta_{2m+1}$ converge in $C^0(K)$. We denote their limits by $\theta_-$ and $\theta_+$ respectively. Note that $\theta_-\le \theta_+$.

Since $\mathrm{supp}(\omega)\subseteq K$ the definition \eqref{recursion} implies that 
\begin{equation}\label{intrep}
\theta_{\pm}(x)=1+\frac{G}{2c^2}\int_{y\in \R^3} \frac{\omega(y) }{|x-y| \theta_{\mp}(y)}
\end{equation}
for all $x\in \R^3$. This integral representation of $\theta_{\pm}$ further implies that $\theta_{\pm}(x)$ are smooth and solve 
\begin{equation}\label{CoupledEqns}
\Delta_{\geucl}\theta_\pm=-4\pi \frac{G}{2c^2}\cdot \frac{\phi}{\theta_\mp}.
\end{equation}
In fact, the functions $\theta_{\pm}$ satisfy the asymptotic conditions \eqref{asymptotics} as can be seen by differentiating \eqref{intrep} under the integral sign. 

We now show that $\theta_-=\theta_+$. The function $\theta_--\theta_+$ is  non-positive and asymptotically equal to $0$. Suppose $\theta_-\neq \theta_+$. Then for some positive constant $k$ the function 
$$\theta_++k(\theta_--\theta_+)=k\theta_-+(1-k)\theta_+$$
achieves an interior minimum. In fact, by choosing $k$ sufficiently small we may assume that the function $k\theta_-+(1-k)\theta_+$ achieves a strictly positive interior minimum. Note that at this particular point of minimum we have $\Delta_{\geucl}(k\theta_-+(1-k)\theta_+)\ge 0$ while $-4\pi \frac{G}{2c^2}\phi\cdot \frac{k\theta_-+(1-k)\theta_+}{\theta_-\theta_+}<0$. On the other hand, \eqref{CoupledEqns} imply 
$$\Delta_{\geucl}(k\theta_-+(1-k)\theta_+) =-4\pi \tfrac{G}{2c^2}\phi\cdot  \tfrac{k\theta_-+(1-k)\theta_+}{\theta_-\theta_+}.$$
This contradiction shows that $\theta_-=\theta_+$, and proves the existence of solutions of \eqref{TheEqn}. 

To prove uniqueness we use the Strong Maximum Principle. If there were two positive solutions $\theta_1, \theta_2> 0$ satisfying \eqref{TheEqn}, their difference would satisfy 
\begin{equation}\label{NLP:uniqueness}
\Delta_{\geucl}(\theta_1-\theta_2)=
4\pi \frac{G}{2c^2}\cdot \frac{\phi}{\theta_1\theta_2}\cdot (\theta_1-\theta_2) 
\end{equation}
If we had $\theta_1-\theta_2 \neq 0$ somewhere, then -- without loss of generality -- the function $\theta_1-\theta_2$ would reach a positive internal maximum. However, since $4\pi \frac{G}{2c^2}\cdot \frac{\phi}{\theta_1\theta_2}\ge 0$ the Strong Maximum Principle implies that $\theta_1-\theta_2$ is a constant. This is a contradiction due to $\theta_1-\theta_2 \to 0$ as $|x|\to \infty$.
\end{proof}

\begin{definition}\label{DefnSeq}
In this Section a sequence of matter distributions $\omega_n=\phi_n\dvol_{\geucl}$ on a Euclidean background metric $\geucl$ are said to be collapsing if:
\begin{enumerate}
\item[A1] $\phi_{n}\ge 0$ for all $n$;
\medbreak
\item[A2] $\int_{\R^3}\omega_n=m$ for all $n$;
\medbreak
\item[A3] For all open sets $\mathcal{U}$ containing the origin there exists $N(\mathcal{U})$ such that for all $n\ge N(\mathcal{U})$ we have $\mathrm{supp}(\omega_n)\subseteq \mathcal{U}$.
 
\end{enumerate}
\end{definition}

It should be noted that under the stated conditions we have 
\begin{equation}\label{diracdelta}
\int_{\R^3} \varphi \omega_n \to m\varphi(0)
\end{equation}
for all test functions $\varphi$ on $\R^3$. Namely, fix a test function $\varphi$ and let $\e>0$. Then for some open set $\mathcal{U}\ni 0$ and all $x\in \mathcal{U}$ we have $|\varphi(x)-\varphi(0)|\le \e/m$. In particular, it follows that
$$\left|\left(\int_{\R^3} \varphi \omega_n\right) - m\varphi(0)\right|=\left|\int_{\R^3}\left(\varphi(x)-\varphi(0)\right)\omega_n\right|\le \int_{\R^3} (\e/m) \omega_n=\e$$
for all $n\ge N(\mathcal{U})$. 

Each $\omega_n$ will correspond, through the RPP, to a conformal factor $\theta_n$ on $\geucl$. We show that the sequence $\theta_n$ converges to the constant function $\theta_\infty=1$. In other words, the metrics $g_{\omega_n}$ converge to the Euclidean metric $\geucl$. In effect, this is the rigorous counterpart of the vanishing mass result in \cite{ZeroMass}.

\begin{theorem}\label{NoahTobyThm}
Consider a sequence $\omega_n$ of distributions satisfying the conditions of Definition \ref{DefnSeq}. The resulting sequence of solutions $\theta_n$ of the RPP converges to $1$ uniformly with all derivatives on all compact subsets of $\R^3\smallsetminus\{0\}$. 
\end{theorem}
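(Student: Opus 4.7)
The plan is to exploit the integral representation \eqref{intrep} established in the proof of Proposition \ref{NoahThm}, which by uniqueness reads (for $\omega=\omega_n$)
\begin{equation*}
\theta_n(x)=1+\frac{G}{2c^2}\int_{\R^3}\frac{\phi_n(y)}{|x-y|\,\theta_n(y)}\dvol_{\geucl}(y).
\end{equation*}
The single quantity to control is the \emph{screened mass}
\begin{equation*}
I_n:=\int_{\R^3}\frac{\phi_n}{\theta_n}\dvol_{\geucl},
\end{equation*}
because once $I_n\to 0$ the integral representation immediately forces pointwise convergence of $\theta_n$ to $1$ on any compact set that stays away from the origin.

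The central step is to establish $I_n\to 0$ by a self-consistency argument. Set $r_n:=\sup\{|y|:y\in \mathrm{supp}(\omega_n)\}$; by (A3) one has $r_n\to 0$. For any $y\in \mathrm{supp}(\omega_n)$ every $z\in \mathrm{supp}(\omega_n)$ satisfies $|y-z|\le 2r_n$, and so the integral representation evaluated at $y$ yields
\begin{equation*}
\theta_n(y)\;\ge\;1+\frac{G}{2c^2}\cdot \frac{1}{2r_n}\int_{\R^3}\frac{\phi_n(z)}{\theta_n(z)}\dvol_{\geucl}(z)\;=\;1+\frac{G\,I_n}{4c^2\,r_n}.
\end{equation*}
Inserting this lower bound back into the definition of $I_n$ and using $\int \phi_n=m$ gives the self-bounding inequality $I_n\le m/\bigl(1+\tfrac{G I_n}{4c^2 r_n}\bigr)$, which rearranges to $I_n^2\le \tfrac{4c^2 m\,r_n}{G}$, so $I_n=O(\sqrt{r_n})\to 0$. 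The physical intuition is that if the screened mass stayed bounded below, the conformal factor would have to blow up on the shrinking support, which in turn would force the screened mass to vanish, a contradiction.

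Now fix a compact set $K\subseteq \R^3\smallsetminus\{0\}$ and set $d:=\mathrm{dist}(K,0)>0$. For $n$ large enough that $r_n<d/2$ we have $|x-y|\ge d/2$ for all $x\in K$ and all $y\in \mathrm{supp}(\omega_n)$, and hence
\begin{equation*}
0\;\le\;\theta_n(x)-1\;\le\;\frac{G}{c^2\,d}\,I_n\;\longrightarrow\;0
\end{equation*}
uniformly on $K$. This is $C^0(K)$ convergence. To upgrade to $C^\infty$ convergence, observe that for large $n$ the function $\theta_n-1$ is harmonic on $\R^3\smallsetminus\overline{B_{d/2}(0)}$, which contains $K$ in its interior; the standard interior estimates for harmonic functions then bound $\|\partial^\alpha(\theta_n-1)\|_{C^0(K)}$ by a constant (depending on $\alpha$ and $K$) times $\|\theta_n-1\|_{C^0(K')}$ for a slight thickening $K'$ of $K$, and the right-hand side tends to zero.

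The only genuinely delicate step is the self-bounding estimate for $I_n$. The obstacle to resist is the temptation to treat $\phi_n/\theta_n$ as an independent source: this would yield only boundedness of $\theta_n-1$, not decay. The essential feature of the nonlinearity $1/\theta_n$ is that the concentration of matter forces $\theta_n$ to be large on the support, and this in turn \emph{subtracts} effective mass from the Newtonian source, driving the ADM mass of the limit configuration to zero.
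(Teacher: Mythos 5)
Your proof is correct, but it follows a genuinely different route from the paper's. The paper argues by compactness: it first establishes the annular bounds of Lemma \ref{greensfctlemma} and Proposition \ref{proposition} (including the $|x|^{-1/2}$ bound obtained from $\Delta_{\geucl}(\theta_n^2)\ge -8\pi\tfrac{G}{2c^2}\phi_n$), then uses interior elliptic regularity, the Rellich lemma and a diagonal argument to extract a subsequence converging to some $\theta_\infty$ on compacta of $\R^3\smallsetminus\{0\}$, shows via a cut-off/integration-by-parts argument near the origin that $\theta_\infty$ is weakly (hence strongly) harmonic with $\theta_\infty\to1$ at infinity, concludes $\theta_\infty=1$, and finally upgrades to full sequential convergence with all derivatives by a subsequence-contradiction argument and induction on Sobolev regularity (Proposition \ref{babyNoahTobyThm} and the proof of Theorem \ref{NoahTobyThm}). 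You instead work directly from the representation formula \eqref{intrep2} (which is justified by Proposition \ref{NoahThm} and Green's formula under \eqref{asymptotics}) and prove a quantitative self-bounding estimate for the screened mass $I_n=\int\phi_n\theta_n^{-1}\dvol_{\geucl}$: evaluating the representation on $\mathrm{supp}(\omega_n)$ gives $\theta_n\ge 1+\tfrac{GI_n}{4c^2r_n}$ there, hence $I_n\le m\bigl(1+\tfrac{GI_n}{4c^2r_n}\bigr)^{-1}$ and $I_n=O(\sqrt{r_n})$, which yields the uniform decay $0\le\theta_n-1\le C(K)\,I_n$ on compacta away from the origin; derivatives then come from interior estimates for harmonic functions, since $\Delta_{\geucl}\theta_n=0$ off the shrinking support. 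What your approach buys: it is shorter, fully quantitative (an explicit rate $O(\sqrt{r_n})$; note that $I_n$ is precisely $m_{\mathrm{ADM}}(\theta_n^4\geucl)$, cf.\ Corollary \ref{adm1}, so you even get a decay rate for the mass), it bypasses the $|x|^{-1/2}$ estimate and the delicate weak-harmonicity argument at the origin, and it makes the screening mechanism behind the vanishing mass result explicit, in the spirit of \eqref{effectiveADM} and \eqref{efmass}. What the paper's compactness route buys is softness: it uses only the crude bounds of Proposition \ref{proposition} and would survive in settings where no self-consistency estimate is available. Two minor points to tidy: state the $C^0$ bound on a slight thickening $K'$ of $K$ with $\mathrm{dist}(K',0)>0$ before invoking the interior harmonic estimates (your derivation gives this verbatim), and note that the degenerate case $m=0$ forces $\phi_n\equiv0$, $\theta_n\equiv1$, so the division by $m$ in your $\e/m$-free argument is never an issue.
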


In the following lemmas, we establish several important properties of the sequence of functions $\theta_n$. We then present the proof of Theorem \ref{NoahTobyThm}.

\begin{lemma}
\label{greensfctlemma}
For all $0<R<1$ there is an integer $N(R)$ with   
\begin{equation} \label{inequality}
\int_{y\in\R^3} \frac{\omega_{n}(y)}{|x-y|} \leq m\left(\frac{1}{|x|} +1\right)
\end{equation}
for all $n \geq N$ and all $x \in B(0,\tfrac{1}{R})\smallsetminus B(0,R)$.
\end{lemma}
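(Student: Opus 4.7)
The plan is to exploit the collapsing condition A3 to shrink the support of $\omega_n$ into a ball $B(0,\varepsilon)$ so small that the Newtonian kernel $1/|x-y|$, evaluated at source points $y$ near the origin and field points $x$ in the annulus $R \le |x| \le 1/R$, is uniformly controlled by $1/|x|+1$.  Then the bound will follow by pulling this pointwise estimate out of the integral and invoking A2.

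More concretely, I would first fix $R \in (0,1)$ and look for a pointwise inequality of the form $\tfrac{1}{|x-y|} \le \tfrac{1}{|x|}+1$, valid whenever $|y|\le \varepsilon$ and $|x|\ge R$.  Using the reverse triangle inequality, $|x-y|\ge |x|-\varepsilon$, and requiring this to exceed $0$, the inequality reduces to $\varepsilon \le \tfrac{|x|^{2}}{1+|x|}$. The function $t\mapsto t^{2}/(1+t)$ is increasing on $(0,\infty)$, so on $|x|\ge R$ its minimum is $R^{2}/(1+R)$.  Any choice $\varepsilon \le R^{2}/(1+R)$ (for instance $\varepsilon := R^{2}/(1+R)$ itself) then does the job, and automatically satisfies $\varepsilon<R$ so that $|x|-\varepsilon>0$.

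Having fixed such an $\varepsilon$, I would apply A3 to the open set $\mathcal{U}=B(0,\varepsilon)$ to obtain an integer $N(R) := N(\mathcal{U})$ with the property that $\mathrm{supp}(\omega_n) \subseteq B(0,\varepsilon)$ for all $n\ge N(R)$.  For $x$ with $R \le |x| \le 1/R$ and $n \ge N(R)$, the pointwise bound applies everywhere on the support of the integrand, so
\begin{equation*}
\int_{y\in\R^3} \frac{\omega_n(y)}{|x-y|} \le \left(\frac{1}{|x|}+1\right)\int_{y\in\R^3}\omega_n = m\left(\frac{1}{|x|}+1\right),
\end{equation*}
where the last equality is A2.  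The upper radius $1/R$ is never actually invoked, so the conclusion is in fact a little stronger than stated, but the form $R \le |x| \le 1/R$ is the symmetric annulus that will be convenient later in Section \ref{theta}.

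There is no serious obstacle here: the whole argument is an elementary triangle-inequality computation, and the only input beyond calculus is the ``support-shrinking'' content of A3.  The only mild care is making the choice of $\varepsilon$ depend on $R$ in a way that survives all $|x|\ge R$ uniformly, which is why I want to minimize $|x|^2/(1+|x|)$ over the relevant range rather than just work at a generic point.
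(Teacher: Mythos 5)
Your proposal is correct and follows essentially the same route as the paper: fix $R$, use A3 to shrink the support of $\omega_n$ into a ball of radius depending on $R$, control the kernel $\tfrac{1}{|x-y|}$ via the reverse triangle inequality, and integrate using A1 and A2. The only difference is cosmetic — the paper estimates the difference $\tfrac{1}{|x-y|}-\tfrac{1}{|x|}$ (getting the error term $mR\le m$), while you bound $\tfrac{1}{|x-y|}\le \tfrac{1}{|x|}+1$ directly with the support radius $R^2/(1+R)$ in place of the paper's $R^3/2$.
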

\begin{proof}
Fix $0<R<1$ and consider the larger annular region 
$$B(0,\tfrac{2}{R^3})\smallsetminus B(0,\tfrac{R^3}{2}).$$ For $n$ sufficiently large we have 
$$\mathrm{supp}(\omega_n)\subseteq B(0,\tfrac{R^3}{2})$$
and consequently
$$\begin{aligned}
\left(\int_{y\in\R^3} \frac{\omega_{n}(y)}{|x-y|}\right) -\frac{m}{|x|}=&
\int_{y\in\R^3}\left(\frac{1}{|x-y|}-\frac{1}{|x|}\right)\omega_{n}(y)\\
\le & \int_{y\in\R^3}\frac{|y|}{(|x|-|y|)\cdot |x|}\omega_n(y)\\
\le &\int_{y\in\R^3}\frac{R^3}{R(2R-R^3)}\omega_n(y)=\frac{mR}{2-R^2}\le mR\le m. \qedhere
\end{aligned}$$
\end{proof}

\begin{proposition}
\label{proposition}
There exists a constant $C$ such that for all $0<R<1$ there is an integer $N(R)$ with   
\begin{enumerate}
\item \label{0} $1\leq \theta_n(x)\leq 1+\frac{C}{|x|}$
\item \label{1} $\theta_n(x) \leq C\left(\frac{1}{|x|^{1/2}}+1\right)$ 
\item \label{2} $|\partial_{x}\theta_n|\leq C\left(\frac{1}{|x|^{3/2}}+1\right)$ 
\end{enumerate}
for all $n \geq N$ and all $x \in B(0,\tfrac{1}{R})\smallsetminus B(0,R)$.
\end{proposition}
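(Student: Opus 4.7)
I would prove the three bounds in order, since each later part leans on the earlier ones. Let $\psi_n(x) := \tfrac{G}{2c^2}\int \omega_n(y)/|x-y|\,dy$ denote the Newtonian potential attached to $\omega_n$. A careful reading of the proof of Lemma \ref{greensfctlemma} actually gives the sharper estimate $\psi_n(x) \le \tfrac{Gm}{2c^2}(|x|^{-1}+R)$; in the annulus $B(0,1/R)\smallsetminus B(0,R)$ we have $R \le 1/|x|$, so this further collapses to $\psi_n(x) \le Gm/(c^2|x|)$. Part (1) is then immediate: the integral representation \eqref{intrep} together with $\theta_n\ge 1$ (from Proposition \ref{NoahThm}) forces $\theta_n \le 1+\psi_n$, yielding the claim with $C = Gm/c^2$.

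\textbf{Part (2), the main obstacle.} The rate $|x|^{-1/2}$ is strictly faster than the $|x|^{-1}$ produced by the naive integral-representation bound, so the improvement must come from the non-linearity of \eqref{TheEqn} rather than from linear potential theory. I would obtain it by passing to $\theta_n^2$: from $\theta_n\Delta\theta_n = -4\pi\tfrac{G}{2c^2}\phi_n$ one computes
$$\Delta(\theta_n^2) = 2|\nabla\theta_n|^2 + 2\theta_n\Delta\theta_n \ge -8\pi\tfrac{G}{2c^2}\phi_n = 2\Delta\psi_n,$$
so $\theta_n^2 - 2\psi_n$ is subharmonic on $\R^3$ and tends to $1$ at infinity. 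The maximum principle then produces $\theta_n^2 \le 1+2\psi_n$, and combining with the bound on $\psi_n$ and the elementary inequality $\sqrt{a+b} \le \sqrt{a}+\sqrt{b}$ yields $\theta_n(x) \le C(|x|^{-1/2}+1)$. The trick of using the quadratic structure of \eqref{TheEqn} to discard a favorable gradient term and convert the problem into a linear sub-mean-value property is the only genuinely new input in the proof.

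\textbf{Part (3).} Once (2) is in hand, the gradient bound follows from interior elliptic regularity. By assumption A3 I would enlarge $N(R)$ so that $\mathrm{supp}(\omega_n) \subseteq B(0,R/4)$; then for every $x$ in the annulus the ball $B(x,|x|/2)$ is disjoint from $\mathrm{supp}(\omega_n)$, hence $\theta_n$ is harmonic there. The standard interior gradient estimate for harmonic functions in dimension three gives
$$|\partial_x \theta_n(x)| \le \frac{6}{|x|}\sup_{y\in B(x,|x|/2)}\theta_n(y),$$
and since $|y| \ge |x|/2$ throughout this ball, part (2) controls the supremum by $C(|x|^{-1/2}+1)$. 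Multiplying produces a bound of order $|x|^{-3/2}+|x|^{-1}$; the cross term $|x|^{-1}$ is dominated by $|x|^{-3/2}$ when $|x| \le 1$ and by the constant $1$ when $|x| > 1$, giving the stated estimate.
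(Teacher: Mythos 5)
Your argument is correct, and parts (1) and (2) follow essentially the paper's own route: part (1) is the same integral-representation bound, and your part (2) — making $\theta_n^2-2\psi_n$ subharmonic by discarding the $2|\nabla\theta_n|^2$ term and comparing with the Newtonian potential via the maximum principle — is the same key trick the paper implements through Green's representation formula for $\theta_n^2$ combined with Lemma \ref{greensfctlemma}; the two phrasings are interchangeable. Part (3) is where you genuinely diverge. The paper differentiates the integral representation \eqref{intrep2} under the integral sign, estimates $\left|\partial_x\tfrac{1}{|x-y|}\right|\le \tfrac{2}{|x|\,|x-y|}$ using that $\mathrm{supp}(\omega_n)\subseteq B(0,R/2)$, and arrives at the pointwise inequality $|\partial_x\theta_n|\le \tfrac{2}{|x|}(\theta_n-1)$, after which (2) finishes the proof. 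You instead exploit that $\theta_n$ is harmonic on $B(x,|x|/2)$ for large $n$ and apply the interior gradient estimate for harmonic functions; this is equally valid and arguably more self-contained, though it only uses harmonicity away from the source, whereas the paper's kernel estimate would survive even if the support merely stayed in $B(0,R/2)$ rather than shrinking further.

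One small patch is needed in your part (3): you invoke part (2) at points $y\in B(x,|x|/2)$, but such $y$ satisfy only $R/2\le |y|\le 3/(2R)$ and may therefore lie outside the annulus $B(0,\tfrac1R)\smallsetminus B(0,R)$ on which (2) was established. Since the constant $C$ in (2) is independent of $R$, this is harmless: apply (1)--(2) with $R$ replaced by $R/2$ (so they hold on $B(0,\tfrac{2}{R})\smallsetminus B(0,\tfrac{R}{2})$, which contains every such ball) and take $N(R)$ to be the larger of the two thresholds. With that adjustment the proof goes through as written.
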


\begin{proof}
Throughout the proof we fix $0<R<1$ and assume that $n$ is sufficiently large so that $\mathrm{supp}(\omega_n)\subseteq B(0,R/2)$. For $y\in \mathrm{supp}(\omega_n)$ we then have $|y|\le R/2\le |x|/2$ and consequently 
\begin{equation}\label{rndeqn3}
\frac{1}{|x-y|}\le \frac{1}{|x|-|y|}\le \frac{2}{|x|}.
\end{equation}
 
Consider the representation formula 
\begin{equation}\label{intrep2}
\theta_n(x)=1+\frac{G}{2c^2}\int_{y\in \R^3} \frac{\omega_n(y)}{|x-y| \theta_{n}(y)}.
\end{equation}
We have 
$$\begin{aligned}
0\le \theta_n(x)-1\le &\frac{G}{2c^2} \int_{y\in \R^3}\frac{\omega_n(y)}{|x-y|}\\
\le &\frac{G}{2c^2}\cdot \frac{2}{|x|}\cdot  \int_{y\in \R^3}\omega_n(y) = \left(2m\cdot \frac{G}{2c^2}\right) \frac{1}{|x|}.
\end{aligned}$$
This completes the proof of the first of our claims. 

To prove the next claim we consider $\Delta_{\geucl}(\theta_{n}^2)$:
\begin{equation} \nonumber
\Delta_{\geucl}(\theta_{n}^2) = 2\theta_{n} \Delta_{\geucl} \theta_n + 2|d\theta_n|^2_{\geucl} \geq -8\pi \tfrac{G}{2c^2} \phi_n.
\end{equation}
Since each $\theta_{n}$ satisfies asymptotic conditions \eqref{asymptotics}, so does $\theta_n^2$. In particular, Green's representation formula applies to $\theta_n^2$ and we have that
\begin{equation}
\nonumber
\theta_{n}^2(x) =1-\frac{1}{4\pi}\int_{y\in\R^3}  \frac{1}{|x-y|}\Delta_{\geucl}(\theta_{n}^2)(y)\,\dvol_{\geucl}
\leq 1+\frac{G}{c^2} \int_{y\in \R^3} \frac{\omega_n(y)}{|x-y|}.
\end{equation}
Property \eqref{1} is now a direct consequence of Lemma \ref{greensfctlemma}.

Differentiation of \eqref{intrep2} under the integral sign yields
\begin{equation}
\nonumber
\partial_x\theta_n(x)=\frac{G}{2c^2}\int_{y\in \R^3}\partial_{x}\left(\frac{1}{|x-y|}\right) \frac{\omega_n(y)}{\theta_n(y)}.
\end{equation}
Given our assumption on $n$ we have  
\begin{equation}\label{rndeqn1}
\partial_x \theta_n(x)  = \frac{G}{2c^2} \int_{y\in B(0,R/2)}\partial_{x}\left(\frac{1}{|x-y|}\right)\frac{\omega_n(y)}{\theta_n(y)}.
\end{equation}
A direct computation shows that $$\left|\partial_{x}\left(\frac{1}{|x-y|}\right)\right| \leq \frac{1}{|x-y|^2}.$$ 
Since in our case $\frac{1}{|x-y|}\le \frac{2}{|x|}$ (see \eqref{rndeqn3})
we see that 
$$\left|\partial_{x}\left(\frac{1}{|x-y|}\right)\right| \leq \frac{2}{|x|\cdot |x-y|}.$$
Combining with \eqref{rndeqn1} produces 
$$\begin{aligned}
\left|\partial_x \theta_n(x)\right|\le&\frac{G}{2c^2}\cdot \frac{2}{|x|}\cdot  \int_{y\in B(0,R/2)}\frac{\omega_n(y)}{|x-y|\theta_n(y)}\\
=&\frac{G}{2c^2}\cdot \frac{2}{|x|}\cdot \int_{y\in \R^3}\frac{\omega_n(y)}{|x-y|\theta_n(y)}\\
=&\frac{2}{|x|} (\theta_n(x)-1).
\end{aligned}$$
The claim \eqref{2} of our Proposition is now an immediate consequence of the claim \eqref{1}. 
\end{proof}

The following is the main ingredient in the proof of Theorem \ref{NoahTobyThm}. 

\begin{proposition}\label{babyNoahTobyThm}
Consider a sequence of collapsing matter distributions $\omega_n$. The resulting sequence of solutions $\theta_n$ of the RPP has a subsequence which converges to $1$ in $H^2(K)$ for all compact subsets $K$ of $\R^3\smallsetminus\{0\}$. 
\end{proposition}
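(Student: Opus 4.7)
The plan is to exploit condition (A3) of Definition \ref{DefnSeq}: for any fixed compact $K\subseteq \R^3\setminus\{0\}$, the matter $\omega_n$ is eventually supported strictly away from $K$, so $\theta_n$ becomes harmonic on a neighborhood of $K$ for $n$ large. I can then marry the pointwise bounds of Proposition \ref{proposition} with interior regularity for harmonic functions, reducing the whole problem to an argument about bounded harmonic functions on $\R^3\setminus\{0\}$.

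\textbf{Compactness.} First I would fix a compact $K\subseteq \R^3\setminus\{0\}$ and slightly enlarge it to a compact $K'$ with $K\subseteq \mathrm{Int}(K')\subseteq \R^3\setminus\{0\}$. Condition (A3) guarantees $\mathrm{supp}(\omega_n)\cap K'=\emptyset$ for all sufficiently large $n$, so $\Delta_{\geucl}\theta_n=0$ on $K'$ for such $n$. Proposition \ref{proposition}(0) supplies a uniform $L^\infty(K')$ bound, and interior estimates for harmonic functions (via the mean-value property or Schauder) upgrade this to uniform $C^k(K)$ bounds for every $k$. A standard Arzel\`a--Ascoli and diagonalization argument across an exhaustion of $\R^3\setminus\{0\}$ by such compacta then extracts a subsequence $\theta_{n_j}$ converging in $C^k_{\mathrm{loc}}(\R^3\setminus\{0\})$ for every $k$, and in particular in $H^2(K)$ for every compact $K\subseteq \R^3\setminus\{0\}$.

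\textbf{Identifying the limit.} Denote the limit by $\theta_\infty$. Since the convergence is locally $C^2$ and each $\theta_{n_j}$ is eventually harmonic on a neighborhood of $K$, the limit $\theta_\infty$ is harmonic on all of $\R^3\setminus\{0\}$. Passing Proposition \ref{proposition} to the limit yields $1\le \theta_\infty(x)\le C(1+|x|^{-1/2})$ for every $x\neq 0$. Because $|x|^{-1/2}=o(|x|^{-1})$ as $x\to 0$, which is strictly slower than the critical rate for an isolated harmonic singularity in $\R^3$, the removable singularity theorem extends $\theta_\infty$ to a harmonic function on all of $\R^3$. This extension is continuous, hence bounded, on $\overline{B(0,1)}$ and bounded by $1+C$ on $\R^3\setminus B(0,1)$ by Proposition \ref{proposition}(0); Liouville's theorem then forces $\theta_\infty$ to be constant, and the asymptotic behavior $\theta_\infty(x)\to 1$ pins the constant to $1$.

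\textbf{The expected obstacle.} The delicate point is ruling out limits of the form $\theta_\infty(x)=1+a/|x|$ with $a>0$: these are perfectly good positive harmonic functions on $\R^3\setminus\{0\}$ asymptotic to $1$ at infinity, and they arise naturally as Green's-function-type limits carrying residual ADM mass $a$. The crude bound $\theta_\infty\le 1+C/|x|$ from Proposition \ref{proposition}(0) alone leaves any $a\in[0,C]$ compatible with the limit and does not suffice. What saves the argument is the sharper growth rate $O(|x|^{-1/2})$ supplied by Proposition \ref{proposition}(1), which lies strictly below the critical rate $|x|^{-1}$ for removable harmonic singularities in three dimensions; this is what both activates removability and eliminates every nontrivial multiple of $1/|x|$.
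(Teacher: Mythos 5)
Your proposal is correct, and its second half takes a genuinely different route from the paper. The compactness step is essentially the same in substance: the paper applies interior elliptic regularity (with the right-hand side vanishing once $\mathrm{supp}(\omega_n)$ misses $K'$) plus the Rellich Lemma and a diagonal argument over an exhaustion, which amounts to the same harmonic-interior-estimate extraction you describe. The difference is in identifying $\theta_\infty=1$. The paper never invokes a removable singularity theorem for the limit; instead it works at the level of the approximating sequence, showing $\int_{\R^3}\Delta_{\geucl}(\varphi)\,\theta_\infty\dvol_{\geucl}=0$ for every test function by splitting at $|x|=s$, integrating by parts twice on $\{|x|\ge s\}$, killing the bulk term via $\Delta_{\geucl}\theta_{n,n}=0$ there, and estimating the two flux terms on $\{|x|=s\}$ as $\mathcal{O}(s^{1/2})$ and $\mathcal{O}(s^{3/2})$ — the latter using the gradient bound of Proposition \ref{proposition}\eqref{2} — before letting $s\to 0$; this makes $\theta_\infty$ weakly, hence strongly, harmonic across the origin. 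You instead pass to the limit first, note $\theta_\infty$ is harmonic on $\R^3\smallsetminus\{0\}$ with $\theta_\infty=\mathcal{O}(|x|^{-1/2})=o(|x|^{-1})$ near the origin, and quote the classical removable singularity theorem, finishing with Liouville and the asymptotics from Proposition \ref{proposition}\eqref{0}. Your route is cleaner and dispenses with the derivative estimate \eqref{2} of Proposition \ref{proposition} at this stage (the paper needs it for the flux term), at the cost of importing the removability theorem as a black box; the paper's argument is, in effect, a self-contained proof of that removability adapted to the sequence, and its boundary-term bookkeeping is exactly where your correctly identified obstacle — excluding limits $1+a/|x|$ via the subcritical $|x|^{-1/2}$ bound — is carried out by hand.
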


 The proof of Proposition \ref{babyNoahTobyThm} consists first of establishing convergence of a subsequence of $\theta_n$ to some $\theta_{\infty}$ on all compact $K \subseteq \R^3\smallsetminus \{0\}$, and then showing that $\theta_{\infty}=1$.

\begin{proof}
Fix a chain of compact subsets 
$$K_{0} \subseteq \mathrm{Int}(K_0')\subseteq K_{0}' \subseteq \mathrm{Int}(K_1)\subseteq K_{1} \subseteq \mathrm{Int}(K_1')\subseteq K_{1}' \subseteq ... \subseteq \R^3 \smallsetminus \{0\}$$
with $\bigcup_i K_i=\R^3\smallsetminus\{0\}$.  
Interior elliptic regularity gives
\begin{equation}\nonumber
\|\theta_n\|_{H^2(K_0)} \lesssim \|\phi_n\theta_{n}^{-1}\|_{L^2(K_{0}')}+\|\theta_n\|_{L^2(K_{0}')}.
\end{equation}
In fact, for $n$ sufficiently large so that $\mathrm{supp}(\omega_n)\cap K_0'=\emptyset$ we have 
\begin{equation*}
\|\theta_n\|_{H^2(K_{0})} \lesssim \|\theta_n\|_{L^2(K_{0}')}.
\end{equation*}
Boundedness of $\theta_n$ on all compact sets (see part \eqref{1} of Proposition \ref{proposition}) shows boundedness of $\theta_n$ in $H^2(K_{0})$. Rellich Lemma ensures a convergent subsequence of elements $\theta_{n,0}$ in $H^2(K_{0})$. An inductive argument allows us to construct a subsequence of elements $\theta_{n,i}$ of $\theta_{n,i-1}$ such that $\theta_{n,i}$ converges in $H^2(K_{i})$ for all $i \geq 1$. Since $H^2(K_{i}) \subset C^0(K_{i})$ we have that 
$$\displaystyle{\lim_{n \to \infty}} \theta_{n,i}=\lim_{n \to \infty} \theta_{n,i-1}=\theta_{\infty}.$$

Consider the sequence $\theta_{n,n}$. For compact sets $K\subseteq \R^3\smallsetminus \{0\}$  there is some $i$ with $K \subseteq K_{i}$. As a subsequence of $\theta_{n,i}$ the sequence $\theta_{n,n}$ converges to $\theta_\infty$ in $H^2(K_i)$. Consequently, $\theta_{n,n}$ converges to $\theta_\infty$ in $H^2(K)$. We conclude that we have a subsequence $\theta_{n,n}$ of elements of $\theta_n$ which converges to $\theta_{\infty}$ in $C^0(K)$ for all compact $K \subset \R^3 \smallsetminus \{0\}$. 

Furthermore, it follows from Proposition \ref{proposition} that for some constant C independent of K we have 
\begin{equation}
\label{theta infinity bound}
1 \leq \theta_{\infty}(x) \leq 1+\frac{C}{|x|} \text{\ \ and\ \ } \theta_\infty(x)\leq C\left(\frac{1}{|x|^{1/2}}+1\right),
\end{equation}
on each compact $K \subset \R^3 \smallsetminus \{0\}$. Thus, estimates \eqref{theta infinity bound} applies on all of $\R^3\setminus \{0\}$.

In what follows we prove that $\theta_{\infty}=1$ by showing that
\begin{equation}\label{WeakEqn}
\int_{\R^3} \Delta_{\geucl}(\varphi) \theta_{\infty} \dvol_{\geucl} = 0,
\end{equation}
where $\varphi$ is a test function on $\R^3$.

Fix a test function $\varphi$ and a value of $s>0$. Observe that $\theta_{n,n}\to \theta_\infty$ on $\mathrm{supp}(\varphi)\smallsetminus B(0,s)$ so that
\begin{equation*}\begin{aligned}
&\int_{\R^3} \Delta_{\geucl}(\varphi) \theta_{\infty}  \dvol_{\geucl}\\ 
=&\int_{|x|\le s} \Delta_{\geucl}(\varphi) \theta_\infty \dvol_{\geucl} + \lim_{n \to \infty} \int_{|x|\ge s} \Delta_{\geucl}(\varphi) \theta_{n,n} \dvol_{\geucl}.
\end{aligned}\end{equation*}
By \eqref{theta infinity bound} we have $\theta_{\infty}(x)=\mathcal{O}(\frac{1}{|x|^{1/2}})$ as $|x|\to 0$ and thus
\begin{equation}\label{rndeqn2}
 \int_{|x|\le s}\Delta_{\geucl}(\varphi) \theta_{\infty}  \dvol_{\geucl} = \mathcal{O}(s^{5/2}) \text{\ \ as\ \ } s\to 0.
\end{equation}
Integrating by parts twice converts the integral over $|x|\ge s$  into 
\begin{equation}
\nonumber
\int_{|x|\ge s} \varphi \Delta_{\geucl}(\theta_{n,n}) \dvol_{\geucl} \pm \int_{|x|=s} \varphi \, \grad(\theta_{n,n}) \cdot \vec{\mathrm{dA}} \pm \int_{|x|=s} \theta_{n,n} \, \grad(\varphi) \cdot \vec{\mathrm{dA}}.
\end{equation} 
Furthermore, since 
\begin{equation}\label{rndeqn4}
\Delta_{\geucl}(\theta_n)=-4\pi \tfrac{G}{2c^2}\cdot \tfrac{\phi_n}{\theta_n}=0
\end{equation}
on $|x|\ge s$ for $n$ is sufficiently large, we see that 
$$\lim_{n\to \infty} \int_{|x|\ge s} \varphi\Delta_{\geucl}(\theta_{n,n})\dvol_{\geucl}=0.$$
By Proposition \ref{proposition} we have that on $\{|x|=s\}$ $$\grad(\theta_n) \cdot \vec{\mathrm{dA}} =\mathcal{O}(s^{1/2}) \,\ \  \text{and} \, \ \  \theta_n \vec{\mathrm{dA}}=\mathcal{O}(s^{3/2})$$ 
as $s\to 0$. Overall, we obtain 
$$\lim_{n \to \infty} \int_{|x|\ge s} \Delta_{\geucl}(\varphi) \theta_{n,n} \dvol_{\geucl}=\mathcal{O}(s^{1/2}).$$
Combining with \eqref{rndeqn2} produces 
\begin{equation*}
\int_{\R^3} \theta_{\infty} \Delta_{\geucl}(\varphi) \dvol_{\geucl} =  \mathcal{O}(s^{1/2}) \text{\ \ as\ \ } s \to 0.
\end{equation*}
Taking the limit as $s\to 0$ proves \eqref{WeakEqn}.

It follows that $\theta_\infty$ is a weak -- and consequently strong -- solution of $\Delta_{\geucl}\theta_\infty=0$. By \eqref{theta infinity bound} we know that $\theta_\infty(x)\to 1$ as $|x|\to \infty$. Consequently, we must have $\theta_\infty=1$. 
\end{proof}

We are finally ready to prove Theorem \ref{NoahTobyThm}. 

\begin{proof}[Proof of Theorem \ref{NoahTobyThm}]
It remains to prove that for all compact $K\subset \R^3\smallsetminus \{0\}$ and all $k\ge 0$ we have convergence in $H^k(K)$ of the full sequence $\theta_n$ to $\theta_\infty=1$. This is done inductively on $k$, with the base case being $k=2$. 

To address the base case we suppose the opposite: that there exists some $\e_0>0$ such that for all $k\in \mathbb{N}$ there is some $n_k\ge k$ with 
\begin{equation}\label{rndeqn5}
\|\theta_{n_k}-1\|_{H^2(K)}\ge \e_0.
\end{equation}
Consider the sequence $\{\omega_{n_k}\}_{k\in \mathbb{N}}$ and the resulting functions $\{\theta_{n_k}\}_{k\in \mathbb{N}}$. By Proposition \ref{babyNoahTobyThm} we know there is a subsequence of $\theta_{n_k}$ which converges to $\theta_\infty=1$. However, this contradicts \eqref{rndeqn5}. 

Suppose that for some $k\ge 2$ and all compact $K\subset \R^3\smallsetminus \{0\}$ we have the convergence in $H^k(K)$ of the full sequence $\theta_n$ towards $\theta_\infty=1$. Now fix a compact subset $K\subset \R^3\smallsetminus\{0\}$, and let $K'\subset \R^3\smallsetminus\{0\}$ be compact with $K\subseteq \mathrm{Int}(K')\subseteq K'$. Note that for $n$ large enough the equality \eqref{rndeqn4} holds on $K'$ so that 
$$\|\theta_n-1\|_{H^{k+2}(K)}\lesssim \|\theta_n-1\|_{H^k(K')}.$$
Since by the inductive hypothesis $\theta_n\to 1$ in $H^k(K')$, we see that $\theta_n\to 1$ in $H^{k+2}(K)$. This completes our inductive proof.
\end{proof}

While the vanishing mass claim is now rigorously established, the fact that the prescribed matter should have no gravitational effect in the limit remains a troubling observation, and ultimately suggests that we do not yet have a complete understanding of the situation.

In the next section we restrict attention to approximately self-similar distributions, defined in Definition \ref{framework}, to produce a more revealing and detailed analysis. 

\section{The approximately self-similar framework}\label{ssf}
Here we set up the approximately self-similar framework and provide an example to illustrate its importance.
As before we let $\omega_n=\phi_n\dvol_{\geucl}$ with $\phi_n\ge 0$ be a sequence of distributions, this time supported on $B_{\geucl}(0,r_n)$ where $r_n\to 0$. We denote by $\Omega_n$ the sequences of approximately self-similar distributions, by $\Omega_F$ their limit, and by $\Theta_n$ and $\Theta_F$ the conformal factors arising from solving the RPP with the natural boundary condition. The details are presented in Propositions \ref{framework} through \ref{thetaexpands}.
We define self-similarity as follows.

\begin{definition}\label{framework}
Let $\omegafinal$ be a distribution on $\R^3$ and  $\rn=\left(\tfrac{G}{2c^2}\cdot\tfrac{m}{r_n}\right)^{-1}$, where $m := \int_{\R^3}^{}\omegafinal$. A sequence of distributions $\omega_n$ is said to collapse to a point of $\omegafinal$-type at the rate of $\alpha\ge 0$ if for the dilation $\mathcal{H}_{\rn}: x\mapsto \rn\,x$ and the sequence of distributions $\Omega_n$ defined by 
$$\Omega_n=(\rn)^\alpha\cdot \mathcal{H}_{\rn}^*\omega_n,$$
we have 
$$\Omega_n \to \omegafinal$$
with latter being uniform on compacts with all the derivatives. 
\end{definition}

As the example of Brill-Lindquist metrics (see the Introduction) suggests, the interaction effects blowing up as the support of $\Omega_n$ goes to zero is responsible for the ADM mass vanishing in the limit. The quantity $d_n$ contains the information about the type of point that the distributions are collapsing to, $\Omega_F$, and the rate of the collapse $r_n$. This invites the interpretation that $d_n$ is a scaling factor designed to counteract the effects of interaction for a specific collapse, captured by the conformal factor $\Theta_F$. Inspired by \eqref{effectiveADM} we make the definition 
\begin{equation}\label{efmass}
\mE=\int_{\R^3}^{} \frac{\Omega_F}{\Theta_F}
\end{equation}
The following example illustrates how the choice of $\Omega_F$ impacts $\mE$, making it clear that an approximately self-similar framework is necessary.

We consider a distribution $\Omega_{F_1}$ defined on $D=B_{\geucl}(0,1)$ and the dilation $\mathcal{H}:x \mapsto \frac{x}{2}$. We also have the distribution $\Omega_{F_2}=\mathcal{H}^*\Omega_{F_1}$ which is defined on $\mathcal{H}^*D=B_{\geucl}(0,2)$. Pulling the RPP back under $\mathcal{H}$ yields $$(\mathcal{H}^*\Theta_{F_1} ) 4\Delta_{\geucl}(\mathcal{H}^*\Theta_{F_1})  \frac{1}{8}\dvol\geucl=-4\pi \tfrac{G}{2c^2} \Omega_{F_2},$$ from which we infer $\Theta_{F_2}=\sqrt{2}\mathcal{H}^*\Theta_{F_1}$. We compute
$$
m_{\mathrm{eff,}2}= \int_{\R^3}^{} \frac{\Omega_{F_2}}{\Theta_{F_2}}
=\int_{\R^3}^{}\frac{\mathcal{H}^*\Omega_{F_1}}{\sqrt{2}\mathcal{H}^*\Theta_{F_1}}=\frac{1}{\sqrt{2}}m_{\mathrm{eff,}1}. 
$$
Once we prove Corollary \ref{adm1}, recalling that the scaling parameter $d_n^{ \ \alpha}$ is proportional to $r_n^{ \ \alpha}$, we obtain that the ratio of the ADM masses corresponding to collapsing sequences arising from $\Omega_{F_1}$ and $\Omega_{F_2}$  satisfies
$$\frac{m_{\mathrm{ADM,}1}}{m_{\mathrm{ADM,}2}} \to \frac{1}{2^{1-\frac{\alpha}{2}}}.$$ Thus the ADM mass decreases as the support of the distribution gets smaller.

 We proceed to  build on Proposition \ref{NoahThm}, establishing existence and uniqueness for the RPP in the self-similar framework with the appropriate boundary condition.  

In the analysis of $\alpha$ the series expansions for the sequences of conformal factors $\theta_n$ and $\Theta_n$ as well as $\Theta_F$ are critical. To avoid disrupting the narrative in Section \ref{NoahsSection} we develop these expansions now. We first show that $\Theta_n$ exists and expands, then that $\theta_n$ expands, and finally that $\Theta_n$ converges and $\Theta_F$ expands.

\begin{proposition}\label{Theta}
Suppose that $a_n$ is a sequence of positive real numbers and that $\Omega_n=\Phi_n\dvol_{\geucl}$ with $\Phi_n\ge 0$ is a  sequence of distributions supported in a compact domain $D$. Then there exists a corresponding sequence $\Theta_n$ of solutions of the boundary value problems
$$\Theta_n\Delta_{\geucl}\Theta_n \dvol_{\geucl}=-4\pi \tfrac{G}{2c^2}\, \Omega_n,\ \ \text{with}\ \ \lim_{|x|\to \infty}\Theta_n= a_n.$$
Furthermore, there exists a constant $C$ such that for all $x$ satisfying $|x|\ge C \mathrm{diam}(D)$ and for all $n$ we have 
\begin{equation}\label{asymptotics-for-n}
\Theta_n(x)=a_n+\frac{1}{|x|}\frac{G}{2c^2}\int_{y}\frac{\Omega_n(y)}{\Theta_n(y)}+\sum_{l=1}^\infty  \frac{G}{2c^2}\frac{C_l(x)}{|x|^{2l+1}} \int_{y} P_l(y)\,\frac{\Omega_n(y)}{\Theta_n(y)},
\end{equation}
where $C_l$ and $P_l$ are universal homogenous polynomials of degree $l$.
\end{proposition}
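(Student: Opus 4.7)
The plan is to reduce existence and uniqueness to Proposition \ref{NoahThm} by a simple rescaling, and then derive the expansion from the integral representation via the classical Newtonian multipole expansion of $1/|x-y|$.

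For existence and uniqueness, I would set $\tilde{\Theta}_n := \Theta_n/a_n$ and $\tilde{\Omega}_n := \Omega_n/a_n^2$. A direct substitution shows that the boundary value problem for $\Theta_n$ becomes
$$\tilde{\Theta}_n \Delta_{\geucl} \tilde{\Theta}_n \dvol_{\geucl} = -4\pi \tfrac{G}{2c^2}\,\tilde{\Omega}_n,$$
with $\tilde{\Theta}_n \to 1$ at infinity. Since $\tilde{\Omega}_n$ is smooth, non-negative, and compactly supported, Proposition \ref{NoahThm} produces a unique positive $\tilde{\Theta}_n$ satisfying the asymptotic conditions \eqref{asymptotics}, and we recover $\Theta_n = a_n \tilde{\Theta}_n$. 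Equivalently, one can run the sub-/super-solution iteration of that proof directly on $\Theta_n$ with initial guess $a_n$ in place of $1$; that route delivers, as a direct consequence of passing to the limit in the recursion, the integral representation
\begin{equation}\label{IntRepTheta}
\Theta_n(x) = a_n + \tfrac{G}{2c^2} \int_{y\in \R^3} \frac{\Omega_n(y)}{|x-y|\,\Theta_n(y)}.
\end{equation}
In particular $\Theta_n \ge a_n > 0$ throughout $\R^3$, so the integrand is well-defined and $\Omega_n/\Theta_n$ is a non-negative measure supported in $D$.

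For the asymptotic expansion, I would invoke the classical multipole decomposition
$$\frac{1}{|x-y|} = \frac{1}{|x|} + \sum_{l=1}^\infty \frac{C_l(x)\,P_l(y)}{|x|^{2l+1}},$$
where each term arises as the degree-$l$ piece of the Taylor series in $y$ of $1/|x-y|$ about $y=0$, suitably regrouped so that the denominator carries the full radial weight $|x|^{2l+1}$ and the numerator consists of homogeneous polynomials of degree $l$ in $x$ and in $y$ (in general a finite sum of such products, which we fold into the notation). Choosing $C$ so that $|y| \le |x|/2$ whenever $y \in D$ and $|x| \ge C\,\mathrm{diam}(D)$, the series converges uniformly and absolutely on the support of $\Omega_n$ with a geometric dominating series. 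Substituting into \eqref{IntRepTheta} and integrating term by term produces exactly \eqref{asymptotics-for-n}.

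The main obstacle is essentially bookkeeping rather than hard analysis: one must extract the Cartesian polynomials $C_l(x)P_l(y)$ cleanly from the multipole expansion and verify that the threshold $C$ can be taken independently of $n$. The latter is immediate because the convergence condition $|y|/|x| \le 1/2$ depends only on the fixed compact set $D$, and the pointwise lower bound $\Theta_n \ge a_n$ justifies term-by-term integration against $\Omega_n/\Theta_n$ by dominated convergence.
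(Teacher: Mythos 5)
Your proposal is correct and follows essentially the same route as the paper: existence comes from rescaling (the paper applies Proposition \ref{NoahThm} to $a_n^{-2}\Omega_n$ and multiplies by $a_n$, which is exactly your $\tilde{\Theta}_n$, $\tilde{\Omega}_n$ substitution), and the expansion comes from inserting the series for $1/|x-y|$ into the Green's representation formula and integrating term by term. The only cosmetic difference is that you cite the classical multipole expansion while the paper expands $(1+T(x,y))^{-1/2}$ as a binomial series; these are the same decomposition regrouped, at the same level of rigor regarding the $C_l(x)P_l(y)$ bookkeeping.
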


\begin{proof}
Considering the solutions of
$$\Theta_n \Delta_{\geucl}\Theta_n \dvol_{\geucl}=-4\pi \tfrac{G}{2c^2} \cdot a_n^{-2}\Omega_n$$
given by Theorem \ref{NoahThm} and the function $a_n \Theta_n$ gives a straightforward proof of existence of $\Theta_n$.
 We now address the series expansion of $\Theta_n$. To do se we recall that $a_n$ is the boundary condition and thus we have the Green's representation formula 

$$\Theta_n(x)=a_n+\frac{G}{2c^2}\int_{y\in D} \frac{\Omega_n(y)}{|x-y|\Theta_n(y)}.$$
Consider $y \in D$ and $x \notin D$. Note that the expression $$\frac{1}{|x-y|}=\frac{1}{|x|} \cdot \frac{1}{\left|\frac{x}{|x|}-\frac{y}{|x|}\right|}$$ can be written as $$\frac{1}{|x|} \cdot \frac{1}{\sqrt{1+T(x,y)}}$$ where $T(x,y)=\frac{1}{|x|^2} \left( -2\langle x,y \rangle+|y|^2 \right)$. It follows that we can expand $\frac{1}{|x-y|}$ using a power series, which converges when $$\left|-2\frac{\langle x,y \rangle}{|x|^2}+\frac{|y|^2}{|x|^2}\right|<1,$$ for instance if $\frac{|y|}{|x|}<\frac{1}{3}$. Inserting this expansion into the representation formula for $\Theta_n$ yields $$\Theta_n(x)=a_n+\frac{1}{|x|}\frac{G}{2c^2}\int_{y}\frac{\Omega_n(y)}{\Theta_n(y)}+\sum_{l=1}^\infty \frac{C_l\left(x\right)}{|x|^{2l+1}} \frac{G}{2c^2}\int_{y}\frac{P_l(y) \Omega_n(y)}{\Theta_n(y)}$$
where $C_l$ and $P_l$ are homogeneous polynomials of degree $l$.

This completes the proof of existence, uniqueness and the expansion of $\Theta_n$.
\end{proof}

We now build on this result to obtain a series expansion for $\theta_n$.

\begin{proposition}\label{thetaexpands}

For some constant $C$ and for all $|y|\geq C \cdot d_n \cdot \mathrm{diam}(D)$ we have $$\theta_n(y)=1+\frac{1}{|y|}\frac{G}{2c^2}\int_{z}^{}\frac{\Omega_n(z)}{\Theta_n(z)}d_n^{\frac{1-\alpha}{2}}+\sum_{l=1}^{\infty}\frac{C_l(y)}{|y|^{2l+1}}\frac{G}{2c^2}\int_{z} \frac{P_l(z) \Omega_n(z)}{\Theta_n(z)}d_n^{l+\frac{1-\alpha}{2}},$$
where $C_l$ and $P_l$ are universal homogenous polynomials of degree $l$.

\end{proposition}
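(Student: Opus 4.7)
The plan is to reduce the expansion of $\theta_n$ to the expansion of $\Theta_n$ established in Proposition \ref{Theta}, via a scaling identity between $\theta_n$ and $\Theta_n$ that is essentially forced on us by the definition of $\Omega_n$.

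First I would derive this scaling identity. Pulling the RPP $\theta_n \Delta_{\geucl} \theta_n \dvol_{\geucl} = -4\pi \tfrac{G}{2c^2} \omega_n$ back under the dilation $\mathcal{H}_{d_n}: x \mapsto d_n x$, and using the standard scalings $\mathcal{H}_{d_n}^*(\Delta_{\geucl} f) = d_n^{-2}\Delta_{\geucl}(\mathcal{H}_{d_n}^* f)$ and $\mathcal{H}_{d_n}^* \dvol_{\geucl} = d_n^3 \dvol_{\geucl}$, together with $\mathcal{H}_{d_n}^*\omega_n = d_n^{-\alpha} \Omega_n$ from Definition \ref{framework}, the pullback $v(x) := \theta_n(d_n x)$ satisfies
\begin{equation*}
v\,\Delta_{\geucl} v\, \dvol_{\geucl} = -4\pi \tfrac{G}{2c^2}\, d_n^{-\alpha-1}\,\Omega_n,\qquad v(x)\to 1 \text{ as } |x|\to\infty.
\end{equation*}
Rescaling by $d_n^{(\alpha+1)/2}$ absorbs the anomalous power of $d_n$: the function $u := d_n^{(\alpha+1)/2} v$ solves the RPP defining $\Theta_n$ (source $\Omega_n$, boundary condition $u\to d_n^{(\alpha+1)/2}$). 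This identifies $a_n = d_n^{(\alpha+1)/2}$ as the natural boundary datum for $\Theta_n$ in the self-similar framework, and uniqueness (a direct adaptation of Proposition \ref{NoahThm} to this boundary condition) yields the key identity
\begin{equation*}
\theta_n(y) = d_n^{-(\alpha+1)/2}\,\Theta_n(y/d_n).
\end{equation*}

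Second, I would substitute this identity into the expansion of $\Theta_n$ from Proposition \ref{Theta}, which is valid whenever $|y/d_n| \ge C\,\mathrm{diam}(D)$, i.e.\ precisely when $|y| \ge C d_n \mathrm{diam}(D)$. Homogeneity of the universal polynomial $C_l$ of degree $l$ gives $C_l(y/d_n) = d_n^{-l} C_l(y)$ and $|y/d_n|^{2l+1} = d_n^{-(2l+1)} |y|^{2l+1}$, so the ratio $C_l(y/d_n)/|y/d_n|^{2l+1}$ acquires a factor $d_n^{l+1}$; the polynomials $P_l$ sit inside the $z$-integrals against $\Omega_n/\Theta_n$ and are untouched by the substitution. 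Combining with the overall prefactor $d_n^{-(\alpha+1)/2}$ one gets $d_n^{-(\alpha+1)/2} a_n = 1$ for the constant term, $d_n^{\,1-(\alpha+1)/2} = d_n^{(1-\alpha)/2}$ for the monopole ($1/|y|$) term, and $d_n^{\,l+1-(\alpha+1)/2} = d_n^{\,l+(1-\alpha)/2}$ for the $l$th multipole, matching the stated series term by term.

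The main obstacle is the first step: carefully tracking the several powers of $d_n$ coming from the Laplacian, the volume form, and the $d_n^\alpha$-weighting built into the definition of $\Omega_n$, and identifying a boundary datum $a_n$ that renders $\theta_n(d_n \cdot)$ simply a global rescaling of $\Theta_n$. Once the relation $\theta_n(y) = d_n^{-(\alpha+1)/2}\Theta_n(y/d_n)$ has been established, the proposition reduces to routine substitution into Proposition \ref{Theta} and bookkeeping of exponents.
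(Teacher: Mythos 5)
Your proposal is correct and follows essentially the same route as the paper: the paper likewise identifies $\Theta_n = d_n^{\frac{\alpha+1}{2}}\mathcal{H}_{d_n}^*\theta_n$ as the solution of the RPP with source $\Omega_n$ and boundary value $a_n = d_n^{\frac{\alpha+1}{2}}$, applies Proposition \ref{Theta}, and then changes variables $x \to y/d_n$ and divides by $d_n^{\frac{\alpha+1}{2}}$. Your explicit appeal to uniqueness to pin down the identity $\theta_n(y) = d_n^{-\frac{\alpha+1}{2}}\Theta_n(y/d_n)$, and your exponent bookkeeping via homogeneity of $C_l$, match the paper's argument in substance.
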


\begin{proof}
We consider the sequence of functions $$\Theta_n(x)=d_n^{\frac{\alpha+1}{2}}\mathcal{H}_{d_n}^* \theta_n(x)=d_n^{\frac{\alpha+1}{2}} \theta_n(d_nx)$$  Noting that $\Delta_{\mathcal{H}_{\rn}^*\geucl}= \frac{1}{\rn^2} \Delta_{\geucl}$ and that $\dvol_{\mathcal{H}_{\rn}^*\geucl}=\rn^3\dvol_{\geucl}$, we see that under $\mathcal{H}_{\rn}^*$, the left hand side of the RPP becomes $$\rn \left( \mathcal{H}_{\rn}^* \theta_n \right) \Delta_{\geucl} \left( \mathcal{H}_{\rn}^* \theta_n \right)\dvol_{\geucl}.$$ All together, recalling that $\Omega_n=d_n^\alpha \left( \mathcal{H}_{d_n}^* \omega_n \right)$, we have that $\Theta_n$ satisfies
\begin{equation}
\Theta_n \Delta_{\geucl}\Theta_n \dvol_{\geucl}=-4\pi \tfrac{G}{2c^2} \cdot \Omega_n,\ \  \lim_{|x| \to \infty} \Theta_n=\rn^{\frac{\alpha+1}{2}}.
\end{equation} 
We now take $a_n=\rn^{\frac{\alpha+1}{2}}$ and apply Proposition \ref{Theta} at which point the change of variables $x \to \frac{y}{d_n}$  and division by $d_n^{\frac{\alpha+1}{2}}$ completes the proof. 
\end{proof}
The last task in this section is to establish the convergence of $\Theta_n$ and the asymptotics and expansion of its limit.
\begin{proposition}\label{Fconvergence}
The sequence $\Theta_n$ converges to $\Theta_F$ uniformly with all derivatives on all of $\R^3$, where $\Theta_F$ solves the RPP with Dirichlet boundary conditions. Furthermore there exists a constant $C$ such that for all $|x| \geq C\mathrm{diam}(\mathrm{D})$ we have 
$$\Theta_F(x)=\frac{G}{2c^2}\cdot \frac{\mE}{|x|}+\sum_{l=1}^\infty \frac{C_l(x)}{|x|^{2l+1}}\frac{G}{2c^2}\int_{\R^3} \frac{P_l(y)\Omega_F(y)}{\Theta_F(y)}
,$$ where $C_l$ and $P_l$ are universal homogenous polynomials of degree $l$.

\end{proposition}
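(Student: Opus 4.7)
The plan is to adapt the strategy of Theorem \ref{NoahTobyThm}: establish uniform bounds on $\Theta_n$, use elliptic compactness to extract a convergent subsequence, identify its limit as the unique positive solution $\Theta_F$ of the RPP with $\Theta_F\to 0$ at infinity (the relevant Dirichlet boundary condition), upgrade to full sequential convergence by uniqueness, and pass to the limit in the expansion from Proposition \ref{thetaexpands}.

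For the uniform estimates, I would proceed along the lines of Proposition \ref{proposition}. Superharmonicity of $\Theta_n$ gives $\Theta_n\ge a_n$ with $a_n=d_n^{(\alpha+1)/2}$, and the subharmonicity of $\Theta_n^2$ up to the $\Omega_n$ term, combined with the representation
$$\Theta_n(x)=a_n+\frac{G}{2c^2}\int_y\frac{\Omega_n(y)}{|x-y|\Theta_n(y)},$$
yields an upper bound $\Theta_n(x)\le C(|x|^{-1/2}+1)$ with $C$ independent of $n$. In particular $\Theta_n\le M$ on the fixed compact $D$ uniformly in $n$, and feeding this back into the integral representation produces
$$\Theta_n(x)\ge \frac{G}{2Mc^2}\int_D\frac{\Omega_n(y)}{|x-y|}\dvol_{\geucl},$$
whose right side stays uniformly bounded below on $D$ thanks to smooth convergence $\Omega_n\to\Omega_F$ with fixed total mass. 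Interior elliptic regularity applied to $\Delta\Theta_n=-4\pi\tfrac{G}{2c^2}\Phi_n/\Theta_n$ then produces uniform $H^k_{\mathrm{loc}}$ bounds for every $k$, and a standard diagonal Rellich--Sobolev argument furnishes a subsequence $\Theta_{n_j}\to\Theta_F$ in $C^\infty_{\mathrm{loc}}(\R^3)$. The limit solves $\Theta_F\Delta\Theta_F\dvol_{\geucl}=-4\pi\tfrac{G}{2c^2}\Omega_F$ since both $\Omega_n$ and $1/\Theta_n$ converge smoothly on $D$.

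Passing to the limit in \eqref{asymptotics-for-n} gives the claimed expansion: since $a_n\to 0$ and $\Omega_n/\Theta_n\to\Omega_F/\Theta_F$ smoothly on $D$, every term of the series converges to the corresponding term in the formula for $\Theta_F$, with leading coefficient $\tfrac{G}{2c^2}\mE$; in particular $\Theta_F\to 0$ at infinity. For uniqueness of $\Theta_F$, the strong maximum principle argument of Proposition \ref{NoahThm}, applied to the difference of two positive decaying solutions, forces $\Theta_F=\tilde\Theta_F$, upgrading subsequential convergence to the full $C^\infty_{\mathrm{loc}}$ statement. Uniform convergence with all derivatives on all of $\R^3$ then follows by combining the $C^\infty_{\mathrm{loc}}$ convergence on compacta with the uniform control in the exterior region $|x|\ge C\,\mathrm{diam}(D)$ provided by the expansion \eqref{asymptotics-for-n}.

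The main obstacle is the uniform positive lower bound on $\Theta_n$ on $D$, since the natural maximum-principle bound $\Theta_n\ge a_n$ degenerates to $0$ as $n\to\infty$; without a non-degenerate lower bound the nonlinearity $\Phi_n/\Theta_n$ cannot be controlled in the limit. The bootstrap via the integral representation resolves this, but it requires both the uniform upper bound on $\Theta_n$ on $D$ (a consequence of the subharmonic majorization of $\Theta_n^2$) and uniform positivity of the Newtonian potential of $\Omega_n$ on $D$, which holds because $\Omega_n\to\Omega_F$ smoothly with total mass $m$.
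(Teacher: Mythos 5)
Your proposal follows essentially the same route as the paper's proof: the uniform $L^\infty$ bound via the subharmonic estimate for $\Theta_n^2$ together with the representation formula, the uniform positive lower bound on $D$ obtained by feeding that bound back into the integral representation, a diagonal Rellich/interior-elliptic-regularity compactness argument identifying the limit as a solution, and passage to the limit in the expansion \eqref{asymptotics-for-n} using $\int_D P_l\,\Omega_n/\Theta_n \to \int_D P_l\,\Omega_F/\Theta_F$. The only notable (and welcome) difference is the last step: the paper upgrades to full sequential convergence by a contradiction argument that re-runs the compactness construction on a recalcitrant subsequence, whereas you identify all subsequential limits through an explicit strong-maximum-principle uniqueness statement for positive decaying solutions, which makes precise a point the paper leaves implicit.
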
 

\begin{proof}
To prove existence of $\Theta_F$ observe that $\Theta_n^2$ satisfies 
$$\Delta_{\geucl}(\Theta_n^2)\ge -4\pi \tfrac{G}{2c^2} \Phi_n \text{\ \ and\ \ } \Theta_n^2(x)\to a_n^2 \text{\ \ as\ \ } x\to \infty.$$ Thus, by the representation formula we have 
\begin{equation}\label{bound-on-tpsi}
\Theta_n^2(x)\le a_n^2+\tfrac{G}{2c^2}\int_{\R^3}\frac{\Omega_n(y)}{|x-y|}\le a_n^2+\tfrac{G}{2c^2}\int_{D}\frac{\phifinal(y)+1}{|x-y|}\dvol_y,
\end{equation}
at least for sufficiently large $n$. It follows that the functions $\Theta_n$ are bounded in $L^\infty(\R^3)$.
Next, observe that on $D$ the functions $\Theta_n$ are actually bounded uniformly from below. To see this observe that the representation formula implies
\begin{equation}\label{psi-n-bdd}
\begin{aligned}
\Theta_n(x)&\ge \tfrac{G}{2c^2} \int_{\R^3} \frac{\Omega_n(y)}{|x-y|\Theta_n(y)}\\
&\ge \tfrac{G}{2c^2C}\int_{D} \frac{\Omega_n(y)}{|x-y|},
\end{aligned}
\end{equation}
where $C$ is the $L^\infty$-bound on $\Theta_n$. Our claim now follows from the fact that 
$$\int_{D} \frac{\Omega_n(y)}{|x-y|}\to \int_{D} \frac{\omegafinal(y)}{|x-y|}$$ and the fact that 
$\int_{D} \frac{\omegafinal(y)}{|x-y|}$ 
reaches its positive minimum $C_{\mathrm{min}}$ over $x\in D$: 
$$\Theta_n(x)\ge \frac{C_{\mathrm{min}}G}{4c^2C},\ \ n\gg 1.$$ 

Consider a sequence of compact subsets of $\R^3$ with 
$$K_1\subseteq \mathrm{Int}(K_2)\subseteq K_2\subseteq \mathrm{Int}(K_3)\subseteq ... \text{\ \ and\ \ } \cup_{i}K_i=\R^3.$$
Through (several) applications of interior elliptic regularity and Rellich Lemma we inductively generate subsequences $\Theta^{(i)}_n$ of $\Theta_n$ which 
\begin{enumerate}
\item[a)] converge in $H^4(K_i)\subseteq C^2(K_i)$ as $n\to \infty$, and 
\medbreak
\item[b)] are subsequences of the previously constructed subsequences $\Theta^{(i-1)}_n$. 
\end{enumerate}
The diagonal subsequence $\Theta^{(n)}_n$ converges uniformly with two derivatives on each compact subset of $\R^3$. The limit function $\thetafinal$ thus  solves the equation 
$$\thetafinal\Delta_{\geucl}\thetafinal=-4\pi\tfrac{G}{2c^2}\,\phifinal.$$

We now show that the whole sequence $\Theta_n$ converges to $\thetafinal$ on all compact sets $K$. To do so we suppose the opposite, that there is some $\e_0>0$ and a subsequence $\Theta_{n_k}$ with 
\begin{equation}\label{rnd17}
\|\Theta_{n_k}-\thetafinal\|_{L^\infty(K)}\ge \e_0
\end{equation} for all $k$. By applying what we have already proven to sequences $a_{n_k}$ and $\Phi_{n_k}$ we 
obtain a subsequence of $\Theta_{n_k}$ which converges to $\thetafinal$. The latter contradicts \eqref{rnd17} and proves that $\Theta_n\to \Theta_F$ in $L^\infty(K)$ for all compact $K$. A standard application of interior elliptic regularity gives
$$\|\Theta_n-\Theta_F\|_{H^{l+2}(K)}\le C\left(\left\|\tfrac{\Omega_n}{\Theta_n}-\tfrac{\Omega_F}{\Theta_F}\right\|_{H^l(K')}+\left\|\Theta_n-\Theta_F\right\|_{L^2(K')}\right).$$ This shows that the convergence over compact subsets is in fact with all  derivatives. 

Note that the asymptotic behavior of $\Theta_F$, as well as the claimed convergence of $\Theta_n$ to $\Theta_F$ with all derivatives on $\R^3$, follows from  \eqref{asymptotics-for-n} due to

$$\int_{D} P_l\,\frac{\Omega_n}{\Theta_n} \to \int_{D} P_l\,\frac{\Omega_F}{\Theta_F} \text{\ \ \ i.e.\ \ \ } \int_{\R^3} P_l\,\frac{\Omega_n}{\Theta_n} \to \int_{\R^3} P_l\,\frac{\Omega_F}{\Theta_F}.\qedhere$$
\end{proof}

As we make frequent use of the expansions developed above, it is worthwhile to collect the constant terms before we proceed. We define $$b_{0,n}:=\frac{G}{2c^2} \int_{\R^3}^{} \frac{\Omega_n}{\Theta_n} \ \  , \ b_{l,n}:=\frac{G}{2c^2} \int_{\R^3}^{} \frac{P_l \Omega_n}{\Theta_n} \ \ , \ b_l:=\frac{G}{2c^2} \int_{\R^3}^{} \frac{P_l \Omega_F}{\Theta_F}.$$
In addition to cleaning up the series expansions we also have the notational convenience that $b_{0,n}\to \frac{G\mE}{2c^2}$. 

Using these definitions the series expansions are can now be written 
$$
\begin{aligned}
& \Theta_n(x)=a_n+\frac{b_{0,n}}{|x|}+\sum_{l=1}^\infty \frac{b_{l,n}C_l\left(x\right)}{|x|^{2l+1}}, \\
& \theta_n(y)=1+\frac{b_{0,n}d_n^{\frac{1-\alpha}{2}}}{|y|}+\sum_{l=1}^{\infty}\frac{b_{l,n}C_l(y)d_n^{l+\frac{1-\alpha}{2}}}{|y|^{2l+1}}, \\
& \Theta_F(x)=\frac{G}{2c^2}\cdot \frac{\mE}{|x|}+\sum_{l=1}^\infty \frac{b_l C_l(x)}{|x|^{2l+1}}.
\end{aligned}
$$

With these expansions in hand we are ready to study the parameter $\alpha$.

\section{The analysis of the parameter $\alpha$} \label{NoahsSection}
To extract as much detail as possible we study specific subsets of $\R^3$ in contrast to the macroscopic approach taken in Section \ref{theta}. To that end, we view $$(\R^3, g_n)=(\R^3, \theta_n^4\geucl)$$ as arising from embedding model geometries into $\R^3$. We notate embeddings by lower case latin letters with the subscript $n$. Their domains are notated by the corresponding capital letter with subscript $n$, for example $\mathrm{i}_n:\mathrm{I}_n \to \R^3$. One of the focal points of the analysis is whether or not the mass vanishes in the limit for the various values of $\alpha$. We can deal with this now by truncating our series expansion to first order and reading off the ADM mass of $g_n$. 
\begin{corollary}\label{adm1}
We have 
$$m_{\mathrm{ADM}}(\theta_n^4\geucl)=\frac{2c^2}{G}b_{0,n}d_n^{\frac{1-\alpha}{2}}=d_n^{\frac{1-\alpha}{2}}\int_{}^{}\frac{\Omega_n}{\Theta_n}.$$
and consequently 
$$m_{\mathrm{ADM}}\to 
\begin{cases}
0, & \text{when}\ \ \alpha<1\\
\mE, & \text{when}\ \ \alpha=1\\
\infty, & \text{when}\ \ \alpha>1.
\end{cases}$$
\end{corollary}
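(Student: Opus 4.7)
The plan is to read the ADM mass directly off the asymptotic expansion of $\theta_n$ furnished by Proposition \ref{thetaexpands}, and then pass to the limit using Proposition \ref{Fconvergence} together with the collapse hypothesis of Definition \ref{framework}.

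First I would invoke the standard computation of the ADM mass for a conformally flat, asymptotically Euclidean metric $g = \theta^4 \geucl$ in isotropic form: if $\theta(y) = 1 + \tfrac{A}{|y|} + O(|y|^{-2})$ with derivatives decaying as in \eqref{asymptotics}, then $m_{\mathrm{ADM}}(g) = \tfrac{2c^2}{G} A$ in the normalization used throughout the paper. The asymptotic conditions \eqref{asymptotics} are built into our construction of $\theta_n$, so this formula legitimately applies.

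Next I would extract the coefficient $A$ from the expansion
$$\theta_n(y) = 1 + \frac{b_{0,n}\, d_n^{(1-\alpha)/2}}{|y|} + \sum_{l \geq 1} \frac{b_{l,n}\, C_l(y)\, d_n^{\,l+(1-\alpha)/2}}{|y|^{2l+1}}$$
provided by Proposition \ref{thetaexpands}. Since each polynomial $C_l$ is homogeneous of degree $l$, the $l\geq 1$ terms decay like $|y|^{-l-1}$ and contribute nothing to the ADM integral. Therefore $A = b_{0,n}\, d_n^{(1-\alpha)/2}$, and recalling the defining identity $\tfrac{2c^2}{G} b_{0,n} = \int \Omega_n / \Theta_n$ yields the stated formula for $m_{\mathrm{ADM}}(\theta_n^4 \geucl)$.

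Finally I would analyze the limit. By Proposition \ref{Fconvergence}, together with the compact support condition in Definition \ref{framework}, for $n$ sufficiently large the supports of $\Omega_n$ and of $\Omega_F$ all lie in a common compact set $D$, the convergences $\Omega_n \to \Omega_F$ and $\Theta_n \to \Theta_F$ hold uniformly on $D$, and $\Theta_F$ is bounded below by a positive constant on $D$; hence $\int \Omega_n/\Theta_n \to \int \Omega_F/\Theta_F = \mE$. Since $r_n \to 0$ forces $d_n = 2c^2 r_n/(Gm) \to 0$, the factor $d_n^{(1-\alpha)/2}$ tends to $0$, $1$, or $\infty$ exactly in the three regimes $\alpha < 1$, $\alpha = 1$, $\alpha > 1$, producing the trichotomy. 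The argument is essentially a direct reading-off of asymptotic data established in Section \ref{ssf}; the only point that requires any care is confirming that the higher multipole terms in the expansion decay fast enough to be invisible to the ADM mass, but this is immediate from the homogeneity information already embedded in Proposition \ref{thetaexpands}.
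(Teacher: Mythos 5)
Your proposal is correct and follows essentially the same route the paper intends: the corollary is obtained by truncating the expansion of Proposition \ref{thetaexpands} to first order, reading off the $1/|y|$ coefficient as $\frac{G}{2c^2}m_{\mathrm{ADM}}$ (the normalization fixed by the Brill--Lindquist and Schwarzschild examples), and then using $b_{0,n}\to \frac{G\mE}{2c^2}$ from Proposition \ref{Fconvergence} together with $d_n\to 0$ to obtain the trichotomy. The only detail worth a passing remark is that the $\alpha>1$ case implicitly uses $\mE>0$, which holds since $\Omega_F$ is nontrivial, nonnegative, and $\Theta_F$ is bounded above on its support.
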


 Before presenting the analysis on $\alpha$ we pause to make a note to the reader about the parameters $\e$ and $\delta$ in Figures \ref{fig4}--\ref{fig6}.  

The appearance of these parameters has to do with the claims of convergence being stated with respect to the $C^k$ norm on the model spaces. The parameter $\delta$ is introduced because our series expansion of $\theta_n$ fails as we approach $|x|=b_{0,n}d_n$. We use $\delta$ to create a buffer zone, past which we are safe to use the expansion. Essentially, once $\alpha$ and $k \in \mathbb{N}\cup\{ 0\}$ are fixed, we have an $\e > 0$ and $\delta>0$ such that the convergences are $C^k$ on the decomposition of $\R^3$ induced by this particular choice of $\e$ and $\delta$. The existence and specific choice of these parameters only appear sensible \emph{after} one proves the next set of propositions. 
With that said, we put our faith in the customary cooperation of the reader as we unapologetically present these choices now, devoid of their context. We require 
$$\frac{1-\alpha}{2}\cdot \frac{k}{k+1}<\e<\frac{1-\alpha}{2} \text{\ \ and\ \ }0<\delta<\text{min}\left\{\alpha,\frac{\alpha+1}{2(k+1)}\right \}.$$

We present the case of $0<\alpha<1$ first because it features the most complications. 

\subsection*{The case $0<\alpha<1$} \ 

Note that in this regime Corollary \ref{adm1} tells us that the ADM mass will vanish in the limit.

\begin{figure}[h]
\centering
\begin{tikzpicture}[scale=1.05]

\draw (16.5,-5.75) to (23.5, -5.75) to (25,-3.75) to (18, -3.75) to (16.5, -5.75);

\draw[thick, dashed] (19.5, -4.75) to [out=-30, in=100] (20, -5.25) to [out=-100, in=30] (19.6, -5.75);
\draw[thick] (19.6, -5.75) to [out=-150, in=90] (17.5, -7) to [out=-90, in=180] (19.75, -8.75) to [out=0, in=-100] (22.5, -7) to [out=80, in=-45] (20.7, -5.75);
\draw[thick, dashed] (20.7, -5.75) to [out=135, in=-90] (20.5, -5.25) to [out=80, in=-150] (21, -4.75);

\draw[thin] (20, -5.25) to [out=-30, in=-120] (20.5, -5.2);

\draw[thin] (19, -4.75) to [out=-90, in=180] (20.25, -5.5) to [out=0, in=-90] (21.5, -4.75) to [out=90, in=0] (20.25, -4.25) to [out=180, in=90] (19, -4.75); 

\draw[thin] (19, -6) to [out=-60, in=180] (20.25, -6.5) to [out=0, in=-120] (21.25, -6.1);  

\draw[thin] (18, -6.3) to [out=-60, in=-170] (20.75, -7.3) to [out=10, in=-120] (22.25, -6.45);

\node[right] at (21.5, -4.3) {\scalebox{0.65}{$|x|=b_{0,n}d_n^{\frac{1-\alpha}{2}-\e}$}};

\node[right] at (20.75, -5.15) {\scalebox{0.65}{$|x|=b_{0,n}d_n^{\frac{1-\alpha}{2}}$}};

\node[right] at (21.25, -5.95) {\scalebox{0.65}{$|x|=b_{0,n}d_n^{\frac{1-\alpha}{2}+\e}$}};

\node[right] at (22.25, -6.4) {\scalebox{0.65}{$|x|=b_{0,n}d_n^{1-\delta}$}};

\node at (20, -8){\scalebox{0.57}{$\mathrm{Im}(v_n)$}};

\node at (20.5, -7){\scalebox{0.57}{$\mathrm{Im}(e_n)$}};

\node at (20.25, -6){\scalebox{0.57}{$\mathrm{Im}(\mathrm{j}_n)$}};

\node at (17.75, -5.5){\scalebox{0.57}{$\mathrm{Im}(\mathrm{i}_n)$}};

\end{tikzpicture}
\caption{Regions addressed in Propositions \ref{In} -- \ref{Vn}.}\label{fig4}
\end{figure}
Here the embeddings and domains are

$$\begin{aligned}
& \mathrm{I_n}=\{|y| \geq b_{0,n}d_n^{\frac{1-\alpha}{2}-\e}\} &&\mathrm{i}_n:y \mapsto y  \\
& \mathrm{J_n}=\{b_{0,n}d_n^{\e} \leq |y| \leq b_{0,n}d_n^{-\e}\} &&\mathrm{j}_n:y \mapsto d_n^{\frac{1-\alpha}{2}}y \\
& \mathrm{E_n}=\{b_{0,n}d_n^{\frac{1-\alpha}{2}-\e} \leq |y| \leq b_{0,n}d_n^{\delta-\alpha}\} &&\mathrm{e}_n:y \mapsto \frac{b_{0,n}^2d_n^{1-\alpha}y}{|y|^2} \\
& \mathrm{V_n}=\{|y| \leq b_{0,n}d_n^{-\delta}\} &&\mathrm{v}_n:y \mapsto d_ny. \\
\end{aligned}$$
A direct computation shows that $\mathrm{Im}(\mathrm{I_n})\cup\mathrm{Im}(\mathrm{J_n})\cup\mathrm{Im}(\mathrm{E_n})\cup\mathrm{Im}(\mathrm{V_n})=\R^3$.

\begin{proposition}\label{In}
On $\mathrm{I_n}$ we have $$\left\|\mathrm{i}_n^*g_n-\geucl\right\|_{C^k(\mathrm{I_n},\geucl)} \to 0.$$ 
\end{proposition}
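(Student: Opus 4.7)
The map $\mathrm{i}_n$ is the inclusion, so $\mathrm{i}_n^* g_n - \geucl = (\theta_n^4 - 1)\geucl$. By the Leibniz rule and the fact that, as we shall see, $\theta_n$ is uniformly bounded in $n$ on $\mathrm{I}_n$, the statement reduces to $\|\theta_n - 1\|_{C^k(\mathrm{I}_n)} \to 0$. The plan is to apply the series expansion from Proposition \ref{thetaexpands} and bound each of its terms (with $k$ derivatives) using the defining lower bound $|y| \geq b_{0,n} d_n^{(1-\alpha)/2 - \varepsilon}$ on $\mathrm{I}_n$.

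First I would verify that the expansion is valid on all of $\mathrm{I}_n$ for $n$ large. It requires $|y| \geq C d_n \mathrm{diam}(D)$, and since $(1-\alpha)/2 - \varepsilon < 1$ with $d_n \to 0$, the quantity $b_{0,n} d_n^{(1-\alpha)/2 - \varepsilon}$ dominates $C d_n \mathrm{diam}(D)$ for $n$ large. Next I would treat the leading-order term $b_{0,n} d_n^{(1-\alpha)/2}/|y|$: after $k$ differentiations its magnitude is $\lesssim d_n^{(1-\alpha)/2}/|y|^{k+1}$, and the lower bound on $|y|$ converts this to $\lesssim d_n^{(k+1)\varepsilon - k(1-\alpha)/2}/b_{0,n}^{k}$. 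The hypothesis $\varepsilon > \tfrac{1-\alpha}{2}\cdot \tfrac{k}{k+1}$ is precisely what is needed to make the exponent of $d_n$ positive; combined with $b_{0,n} \to G\mE/(2c^2)$ being bounded away from $0$, this contribution vanishes uniformly on $\mathrm{I}_n$.

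For the higher-order terms ($l \geq 1$), the homogeneity of $C_l$ gives a $k$-derivative bound of order $|b_{l,n}| d_n^{l+(1-\alpha)/2}/|y|^{l+1+k}$, and the same substitution yields $d_n^{l - (l+k)(1-\alpha)/2 + (l+1+k)\varepsilon}$ up to constants. A direct calculation shows that this exponent is minimized at the endpoint $\varepsilon = \tfrac{(1-\alpha)k}{2(k+1)}$ and equals $l\cdot \tfrac{2k+1+\alpha}{2(k+1)}$ there; hence it is strictly positive for every $l \geq 1$ and every admissible $\varepsilon$. Uniform boundedness of $|b_{l,n}|$ (from Proposition \ref{Fconvergence}) combined with the geometric decay of the expansion in the region where $|y|/(d_n\mathrm{diam}(D)) \to \infty$ allow one to sum the tail into a single vanishing bound.

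The main obstacle I anticipate is exponent bookkeeping rather than hard analysis: the leading term is the bottleneck, and $\varepsilon$ has been calibrated precisely so that its $k$-th derivative decays at the required rate. Once that calibration is in place the $l \geq 1$ terms decay automatically, and combining these estimates with the reduction in the first paragraph completes the argument.
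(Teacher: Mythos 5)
Your argument is correct and follows essentially the same route as the paper's proof: invoke the expansion of Proposition \ref{thetaexpands} (valid on $\mathrm{I}_n$ since $|y|\ge b_{0,n}d_n^{\frac{1-\alpha}{2}-\e}\ge C d_n\,\mathrm{diam}(D)$ for large $n$), bound the $k$-th derivatives of each term using the lower bound on $|y|$, and observe that the calibration of $\e$ makes every exponent of $d_n$ positive. Your separate treatment of the $l\ge 1$ terms and the explicit reduction from $\theta_n^4-1$ to $\theta_n-1$ are only slightly more detailed versions of what the paper does.
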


\begin{proof}
Recalling that $\mathrm{i_n}=\mathrm{Id}$, we have $|y|=|x| \geq b_{0,n}d_n^{\frac{1-\alpha}{2}-\e} \geq Cb_{0,n}d_n$, the series expansion of $\theta_n$ from Proposition \ref{thetaexpands} is valid. For all $y \in \mathrm{I}_n$ we have $$\left\|\mathrm{i}_n^*\theta_n-1 \right\| \leq \left(d_n^\e+\sum_{l=0}^{\infty}h\cdot \frac{b_{l,n}d_n^{l(\frac{1+\alpha}{2}+\e)}}{b_{0,n}^{l+1}}\right),$$ where $h$ is a universal constant. Taking the limit of the upper bound yields $\left\|\mathrm{i}_n^*\theta_n-1 \right\|_{C^k(\mathrm{I_n},\geucl)} \to 0$ for $k=0$. Computing the $k^{\text{th}}$ derivative of $\theta_n$ from the expansion yields
 $$
|\partial^k\theta_n| \leq \sum_{l=0}^{\infty}\frac{b_{l,n}d_n^{l+\frac{1-\alpha}{2}}}{|y|^{l+k+1}}=\mathcal{O}(d_n^{ \ l(\frac{1+\alpha}{2}+\e)+\e(k+1)-k\frac{1-\alpha}{2}}).
$$
Our choice of $\e$ ensures that the exponent on $d_n$ is positive and thus that $|\partial^k(\mathrm{i_n}^*g_n)| \to 0$ for $k\geq1$, which completes the proof.
\end{proof}
\begin{proposition}\label{Jn} 
On $\mathrm{J}_n$ we have 
$$\|d_n^{-(1-\alpha)}\mathrm{j}_{n}^*g_n - 
g_{\mathrm{Schw}}\|_{C^k(\mathrm{J_n}, g_{\mathrm{Schw}})}\to 0,$$ 
where $g_{\mathrm{Schw}}=(1+\tfrac{G\mE}{2c^2|y|})^4\geucl$.
\end{proposition}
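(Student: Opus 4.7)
The first step is to unpack the pullback. Since $\mathrm{j}_n(y)=d_n^{(1-\alpha)/2}y$ is a Euclidean homothety, $\mathrm{j}_n^*\geucl=d_n^{1-\alpha}\geucl$, and therefore
$$
d_n^{-(1-\alpha)}\mathrm{j}_n^*g_n = \bigl(\theta_n(d_n^{(1-\alpha)/2}y)\bigr)^{4}\,\geucl.
$$
Because $g_{\mathrm{Schw}}=\bigl(1+\tfrac{G\mE}{2c^2|y|}\bigr)^{4}\geucl$ is also conformally flat, the tensor-level claim reduces to controlling the discrepancy of the two conformal factors: it suffices to prove that $y\mapsto \theta_n(d_n^{(1-\alpha)/2}y)$ converges to $1+\tfrac{G\mE}{2c^2|y|}$ in a sufficiently strong $C^k$ sense on $\mathrm{J}_n$ that the metric-level convergence with respect to $g_{\mathrm{Schw}}$ follows.

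For $y\in\mathrm{J}_n$ the rescaled point $x=d_n^{(1-\alpha)/2}y$ satisfies $|x|\ge b_{0,n}d_n^{(1-\alpha)/2+\e}$, which is much larger than $d_n\cdot\mathrm{diam}(D)$ for large $n$, so the series expansion of Proposition \ref{thetaexpands} applies. Substituting and using the homogeneity $C_l(d_n^{(1-\alpha)/2}y)=d_n^{l(1-\alpha)/2}C_l(y)$, a short exponent computation produces
$$
\theta_n\bigl(d_n^{(1-\alpha)/2}y\bigr)=1+\frac{b_{0,n}}{|y|}+\sum_{l=1}^{\infty}\frac{b_{l,n}\,C_l(y)}{|y|^{2l+1}}\,d_n^{l(1+\alpha)/2}.
$$
The first two terms converge uniformly on $\mathrm{J}_n$ to $1+\tfrac{G\mE}{2c^2|y|}$ since $b_{0,n}\to \tfrac{G\mE}{2c^2}$. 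To estimate the tail, combine the lower bound $|y|\ge b_{0,n}d_n^{\e}$ with $|C_l(y)|\lesssim |y|^l$: the $l$-th term is controlled by a constant multiple of $d_n^{l(1+\alpha)/2-(l+1)\e}$. Since $\e<\tfrac{1-\alpha}{2}<\tfrac{1+\alpha}{2}$, these exponents are strictly positive and grow linearly in $l$, so the tail vanishes uniformly on $\mathrm{J}_n$.

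The main obstacle lies in promoting this pointwise bound to a genuine $C^k$-estimate with respect to $g_{\mathrm{Schw}}$ on the moving domain $\mathrm{J}_n$. Each Euclidean derivative of the tail supplies an additional factor of $|y|^{-1}\lesssim d_n^{-\e}$, and the conversion from Euclidean derivatives to $g_{\mathrm{Schw}}$-covariant derivatives brings in Christoffel symbols whose size is also regulated by the Schwarzschild conformal factor, which itself can grow like a negative power of $d_n$ near the inner boundary of $\mathrm{J}_n$. Verifying that, for each fixed $k$, one can select $\e$ in the prescribed interval so that every exponent of $d_n$ appearing after these operations remains strictly positive is essentially bookkeeping, and this is precisely the role of the carefully calibrated bounds $\tfrac{1-\alpha}{2}\cdot\tfrac{k}{k+1}<\e<\tfrac{1-\alpha}{2}$ and $0<\delta<\min\{\alpha,\tfrac{\alpha+1}{2(k+1)}\}$.
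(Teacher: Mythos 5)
Your treatment of the outer half of $\mathrm{J}_n$ (roughly $b_{0,n}\le |y|\le b_{0,n}d_n^{-\e}$) matches the paper's first step, but there is a genuine gap in the sub-horizon region $b_{0,n}d_n^{\e}\le |y|\le b_{0,n}$, and it is exactly the part you defer to ``bookkeeping.'' Two of your intermediate claims fail there. First, the assertion that $1+\tfrac{b_{0,n}}{|y|}$ converges uniformly on $\mathrm{J}_n$ to $1+\tfrac{G\mE}{2c^2|y|}$ requires bounding $\bigl|b_{0,n}-\tfrac{G\mE}{2c^2}\bigr|\cdot|y|^{-1}$ with $|y|$ as small as $b_{0,n}d_n^{\e}$; this needs a \emph{rate} of convergence of $b_{0,n}\to\tfrac{G\mE}{2c^2}$ faster than $d_n^{\e}$, which is neither claimed nor available from Proposition \ref{Fconvergence}. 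Second, your tail bound $d_n^{\,l(1+\alpha)/2-(l+1)\e}$ is not positive-exponent for all admissible $\e$: for $l=1$ positivity needs $\e<\tfrac{1+\alpha}{4}$, while the prescribed window forces $\e>\tfrac{1-\alpha}{2}\cdot\tfrac{k}{k+1}$, so for small $\alpha$ and moderate $k$ (e.g.\ $\alpha\le 1/7$, $k\ge 2$) every admissible $\e$ makes that exponent nonpositive; indeed at $|y|\sim b_{0,n}d_n^{\e}$ the $l=1$ term is genuinely of size $d_n^{(1+\alpha)/2-2\e}$, which can diverge. The proposition is still true because what must be small is not the absolute difference of conformal factors but the difference \emph{weighted by} $g_{\mathrm{Schw}}$ (equivalently the relative error $\Theta^4/\theta_{\mathrm{Schw}}^4-1$ and its weighted derivatives), and near the inner boundary $\theta_{\mathrm{Schw}}\sim d_n^{-\e}$ supplies exactly the compensating factor; your reduction to plain $C^k$ convergence of the conformal factors discards this.

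The paper resolves this with a specific idea you are missing: it exploits the conformal inversion symmetry of the Schwarzschild slice about its horizon, $\Lambda_h: z\mapsto b_{0,n}^2 z/|z|^2$, which maps $\{b_{0,n}\le|z|\le b_{0,n}d_n^{-\e}\}$ onto the troublesome region $\{b_{0,n}d_n^{\e}\le|y|\le b_{0,n}\}$. Pulling both $d_n^{-(1-\alpha)}\mathrm{j}_n^*g_n$ and $g_{\mathrm{Schw}}$ back under $\Lambda_h$ produces metrics of the form $\bigl(1+\tfrac{b_{0,n}}{|z|}+\mathcal{O}(d_n^{\frac{1+\alpha}{2}-\e})\bigr)^4\geucl$ and $\bigl(\tfrac{G\mE/2c^2}{b_{0,n}}+\tfrac{b_{0,n}}{|z|}\bigr)^4\geucl$ on the \emph{outer} annulus, where $g_{\mathrm{Schw}}$ is uniformly comparable to $\geucl$, the Christoffel symbols are bounded, and only $b_{0,n}\to\tfrac{G\mE}{2c^2}$ (with no rate) is needed; there the comparison proceeds as in your outer-region argument. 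Without this inversion step, or an equivalent explicit weighted estimate, your argument does not establish the claimed $C^k(\mathrm{J}_n,g_{\mathrm{Schw}})$ convergence.
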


  \begin{proof}
Our strategy in this proof is somewhat different and so we take a moment to discuss it qualitatively, as a similar situation arises in Proposition \ref{Vn}. Our region is composed of two sections, above the horizon, where the analysis is straightforward due to the Schwarzschild geometry being relatively close to Euclidean, and below the horizon, where the analysis is complicated by taking norms with respect to $g_{\text{Schw}}$. We will exploit the symmetry about the event horizon, given by $\Lambda_h:z \mapsto \frac{b_{0,n}^2z}{|z|^2}$, to transport the region below the horizon into territory where the analysis is easier. Our proof thus relies on the fact that on both $\mathrm{Im}(\mathrm{J_n})$ and $\mathrm{Im}\left(\Lambda_h(\mathrm{J_n})\right)$, we have $|x| \geq Cb_{0,n}d_n$, and therefore the expansion of $\theta_n$ given in Proposition \ref{thetaexpands} is applicable. First, focusing on the region $$\{b_{0,n} \leq |y| \leq b_{0,n}d_n^{-\e}\},$$ we compute $d_n^{-(1-\alpha)} \mathrm{j}_n^* \left( \theta_n^4 \geucl \right)=\theta_n(d_n^{\frac{1-\alpha}{2}}y)^4\geucl$ via the series expansion to obtain $$\left(1+\frac{b_{0,n}}{|y|}+\sum_{l=1}^{\infty}\frac{\left(d_n^{\frac{1+\alpha}{2}}\right)^lC_l(y)b_{l,n}}{|y|^{2l+1}} \right)^4 \geucl.$$ We have that the sum over $l$, and all of its derivatives, can be written as $\mathcal{O}\left( d_n^{\frac{1+\alpha}{2}}\right)$. Furthermore, boundedness of the Christoffel symbols arising from the Schwarzschild metric implies the existence of some constant $B$ such that all together    \begin{multline*}\left\| d_n^{-(1-\alpha)}\mathrm{j}_{n}^*g_n -g_{\mathrm{Sch}} \right\|_{C^k(\mathrm{J_n},g_{\text{Schw}})}
  \\ \leq B \cdot \left\| \left(1+\frac{b_{0,n}}{|y|}+\mathcal{O}(d_n^{\frac{1+\alpha}{2}})\right)^4 \geucl-\left( 1+\frac{G\mE}{2c^2|y|}\right)^4\geucl \right\|_{C^k(\mathrm{J_n},\geucl)}.\end{multline*} 
  Taking the limit of the right hand side, noting that $b_{0,n} \to \frac{G\mE}{2c^2}$, proves the claim on $\{b_{0,n} \leq |y| \leq b_{0,n}d_n^{-\e}\}$. We finish the proof  
  $$\Lambda_h:\{b_{0,n} \leq |z| \leq b_{0,n}d_n^{-\e}\} \to \{b_{0,n}d_n^\e \leq |y| \leq b_{0,n}\}.$$ A direct computation yields, upon simplification, 
  $$\begin{aligned}
  &\Lambda_h^*(d_n^{-(1-\alpha)}\mathrm{j}_{n}^*g_n)=\left(1+\frac{b_{0,n}}{|z|}+\mathcal{O}(d_n^{\frac{1+\alpha}{2}-\e})\right)^4\geucl \to g_{\text{Schw}},\\
 & \Lambda_h^*(g_{\text{Schw}})=\left(\frac{\frac{G\mE}{2c^2}}{b_{0,n}}+\frac{b_{0,n}}{|z|}\right)^4\geucl \to g_{\text{Schw}}.
 \end{aligned}$$
The convergence is with all derivatives as in the first case. Thus we can follow the example from the first region to complete the proof.
 \end{proof}
Before proceeding to the analysis on $\mathrm{E}_n$ we pause to draw attention to the following computation.
\begin{remark}\label{length}
We let $0\leq \alpha <1$ and compute the length, denoted $L_n$ of the region $\mathrm{Im}(\mathrm{j}_n)$. We use the metric $\mathrm{j}_n^*g_n$. Noting that this is simply the metric from Proposition \ref{Jn} scaled by $d_n^{1-\alpha}$ we have that 
$$
L_n=d_n^{\frac{1-\alpha}{2}}\int_{b_{0,n}d_n^\e}^{b_{0,n}d_n^{-\e}} \left( 1+\frac{b_{0,n}}{r}+\mathcal{O}(d_n^{\frac{1+\alpha}{2}})\right)^2 \ dr.
$$
A direct computation yields 
$$
L_n \leq \mathcal{O}(1)(d_n^{\frac{1-\alpha}{2}-\e}-d_n^{\frac{1-\alpha}{2}+\e})+\mathcal{O}(d_n^{\frac{1-\alpha}{2}})\ln(d_n^{-2\e}),
$$
which tends to zero.
Thus for $0 \leq \alpha <1$, the region $\mathrm{Im}(\mathrm{j}_n)$ vanishes in the limit.
\end{remark}
\begin{proposition}\label{En}
On $\mathrm{E}_n$ we have $$\left\|e_n^*g_n-\geucl\right\|_{C^k(E_n,\geucl)}\to 0.$$
\end{proposition}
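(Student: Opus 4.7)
The plan is to recognize that $\mathrm{e}_n$ is precisely the inversion about the expected horizon of $g_n$. Setting $M_n:=b_{0,n}d_n^{(1-\alpha)/2}$ (the horizon radius), the standard conformal formula for spherical inversions gives $\mathrm{e}_n^*\geucl = (M_n^4/|y|^4)\geucl$, so that
$$\mathrm{e}_n^* g_n \;=\; \left(\tfrac{M_n}{|y|}\,\theta_n(\mathrm{e}_n(y))\right)^{4}\geucl.$$
The problem therefore reduces to showing that the conformal factor $F_n(y):=\tfrac{M_n}{|y|}\theta_n(\mathrm{e}_n(y))$ converges to $1$ in $C^k(\mathrm{E}_n)$.

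Next I would apply Proposition \ref{thetaexpands} to $\theta_n(\mathrm{e}_n(y))$, which is legitimate because $\mathrm{e}_n(\mathrm{E}_n) \subseteq \{b_{0,n}d_n^{1-\delta}\le |x|\le b_{0,n}d_n^{(1-\alpha)/2+\e}\}$ lies well outside $\mathrm{supp}(\omega_n)$ for large $n$. Substituting $|\mathrm{e}_n(y)| = M_n^2/|y|$ and using the homogeneity identity $C_l(\mathrm{e}_n(y)) = (M_n^2/|y|^2)^l\, C_l(y)$, the first two terms of the expansion collapse exactly to $1+|y|/M_n$; this is a reprise of the Schwarzschild horizon symmetry exploited in Proposition \ref{Jn}. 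After simplification one obtains
$$F_n(y) \;=\; 1 + \tfrac{M_n}{|y|} + \sum_{l=1}^\infty \tfrac{b_{l,n}\,C_l(y)}{b_{0,n}^{2l+1}}\,d_n^{\,l\alpha}.$$

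To finish, I would estimate each error piece separately. The lower bound $|y|\ge M_n d_n^{-\e}$ yields $|\partial^j(M_n/|y|)| = \mathcal{O}(d_n^{\,(j+1)\e - j(1-\alpha)/2})$ for $0\le j\le k$, an exponent that is positive for all such $j$ precisely under the standing constraint $\e > \tfrac{k}{k+1}\cdot\tfrac{1-\alpha}{2}$. For each series term, $C_l$ is homogeneous of degree $l$, so its $k$-th derivative is identically zero for $l<k$ and is bounded on $\mathrm{E}_n$ by a constant multiple of $|y|^{l-k}$ for $l\ge k$; inserting $|y|\le b_{0,n}d_n^{\delta-\alpha}$ yields an overall bound of $\mathcal{O}(d_n^{\,(l-k)\delta + k\alpha}/b_{0,n}^{l+k+1})$, which vanishes since both $\alpha$ and $\delta$ are strictly positive.

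The main obstacle is the bookkeeping: the prescribed ranges of $\e$ and $\delta$ must produce positive $d_n$-exponents for every derivative order $j\le k$ simultaneously, and the $l$-series must be summed uniformly in $n$. The latter is handled by the geometric factor $d_n^{\,l\delta}$ built into the per-term estimate, which dominates any polynomial-in-$l$ growth of $b_{l,n}$ inherited from the universal polynomials $P_l$. Once these pieces are assembled, $F_n\to 1$ in $C^k(\mathrm{E}_n)$, hence $F_n^4\to 1$, and the claim $\mathrm{e}_n^* g_n \to \geucl$ follows.
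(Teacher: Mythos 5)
Your proposal is correct and follows essentially the same route as the paper: pull $g_n$ back through the inversion $\mathrm{e}_n$, insert the expansion of Proposition \ref{thetaexpands} (valid because $|\mathrm{e}_n(y)|\ge b_{0,n}d_n^{1-\delta}$ on $\mathrm{E}_n$), use the homogeneity of $C_l$ to simplify the conformal factor, and estimate the leading term and the $l$-series using the standing constraints on $\e$ and $\delta$. In fact your computation is more explicit than the paper's, which simply records the resulting conformal factor (with powers of $b_{0,n}$ that match your $b_{0,n}^{2l+1}$ only up to an apparent typo) and then invokes the same estimates as in Proposition \ref{In}.
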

\begin{proof}
 From $|y| \leq b_{0,n}d_n^{\delta-\alpha}$ we have $ |x| \geq b_{0,n}d_n^{1-\delta}\geq Cb_{0,n}d_n$ and thus the expansion of $\theta_n$ holds. We compute  $$e_n^*g_n=\left( 1+\frac{b_{0,n}d_n^{\frac{1-\alpha}{2}}}{|y|}+\sum_{l=0}^{\infty}\frac{C_l(y)b_{l,n}d_n^{l\alpha}}{b_{0,n}^l}\right)^4\geucl.$$ Following the same logic as in Proposition \ref{In} we have for all $y \in \mathrm{E}_n$,
 \begin{align*}
 &|e_n^*g_n-\geucl|=\mathcal{O}(d_n^{ \ \e}) \\
 &|\partial^k(e_n^*g_n)|=\mathcal{O}(d_n^{ \ l(\frac{1+\alpha}{2}+\e)+\e(k+1)-k\frac{1-\alpha}{2}}).
 \end{align*}
 Our choices of $\e$ and $\delta$ now ensure $$\left\|e_n^*g_n-\geucl\right\|_{C^k(E_n,\geucl)}\to 0,$$ completing the proof.
\end{proof}

\begin{proposition}\label{Vn}
On $\mathrm{V}_n$ we have that $$\left\|d_n^{2\alpha}v_n^*g_n-\Theta_F^4\geucl\right\|_{C^k(V_n,\Theta_F^4\geucl)} \to 0.$$
\end{proposition}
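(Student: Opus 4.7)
My plan is to reduce the claim to an estimate on $\Theta_n-\Theta_F$ and then handle the inner and outer portions of $V_n$ separately.

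First I unwind the pull-back. Since $v_n(y)=d_n y$ we have $v_n^*g_n = d_n^2\,\theta_n(d_n y)^4\,\geucl$, and the identity $\Theta_n(y)=d_n^{(\alpha+1)/2}\theta_n(d_n y)$ from the proof of Proposition \ref{thetaexpands} gives
$$d_n^{2\alpha}\,v_n^* g_n = \Theta_n(y)^4\,\geucl,$$
so the task reduces to proving $\bigl\|(\Theta_n^4-\Theta_F^4)\geucl\bigr\|_{C^k(V_n,\,\Theta_F^4\geucl)}\to 0$. I fix $R>0$ large enough that the series expansions of Propositions \ref{Theta} and \ref{Fconvergence} are valid for $|y|\ge R$ and split $V_n = V_n^{\mathrm{in}}\cup V_n^{\mathrm{out}}$ with $V_n^{\mathrm{in}}:=V_n\cap\{|y|\le 2R\}$ and $V_n^{\mathrm{out}}:=V_n\cap\{|y|\ge R\}$.

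On the compact set $V_n^{\mathrm{in}}$ the function $\Theta_F$ is smooth and strictly positive, so it is bounded below and the $C^k$ norm with respect to $\Theta_F^4\geucl$ is equivalent to the one with respect to $\geucl$; the uniform-with-all-derivatives convergence $\Theta_n\to \Theta_F$ from Proposition \ref{Fconvergence} then disposes of this piece immediately. For the outer piece I exploit the fact that $\Theta_F^4\geucl$ is asymptotically Euclidean near $|y|=\infty$: under the inversion $\psi\colon z\mapsto b_0^2 z/|z|^2$, with $b_0:=\tfrac{G\mE}{2c^2}$, the pull-back is again conformally flat, $\psi^*(\Theta_F^4\geucl) = \widetilde\Theta_F^4\geucl$, where $\widetilde\Theta_F(z):=(b_0/|z|)\Theta_F(\psi(z))$ satisfies $\widetilde\Theta_F(z)\to 1$ as $z\to 0$. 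The image of $V_n^{\mathrm{out}}$ is an annulus $b_0 d_n^\delta\lesssim|z|\lesssim b_0^2/R$ on which $\widetilde\Theta_F$ is bounded above and below independently of $n$, so the weighted $C^k$ norm is equivalent to the flat $C^k$ norm in $z$-coordinates. Applying the inversion to the expansion of $\Theta_n$ and using $a_n=d_n^{(\alpha+1)/2}$ yields
$$\widetilde\Theta_n(z)-\widetilde\Theta_F(z) = \frac{a_n b_0}{|z|} + \Bigl(\frac{b_{0,n}}{b_0}-1\Bigr) + O(|z|),$$
whose $j$-th Euclidean derivative is bounded by $\lesssim a_n/|z|^{j+1}$ plus lower-order pieces controlled by $b_{l,n}-b_l\to 0$. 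Maximizing over $|z|\ge b_0 d_n^\delta$ gives
$$\bigl\|(\Theta_n^4-\Theta_F^4)\geucl\bigr\|_{C^j(V_n^{\mathrm{out}},\,\Theta_F^4\geucl)} \lesssim d_n^{(\alpha+1)/2 - \delta(j+1)} + o(1),$$
which tends to zero for every $0\le j\le k$ precisely because $\delta<(\alpha+1)/(2(k+1))$.

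I expect the main obstacle to be the outer piece. Uniform convergence of $\Theta_n$ to $\Theta_F$ on all of $\R^3$ is not by itself sufficient, because the weighted norm carries factors $\Theta_F^{-(2+j)}\sim (|y|/b_0)^{2+j}$ at the $j$-th order of differentiation and the outer boundary of $V_n$ recedes to $|y|\sim d_n^{-\delta}$, so these weights blow up like $d_n^{-(2+j)\delta}$. The surviving $a_n$-term in the expansion of $\Theta_n-\Theta_F$ is the dominant contribution, and the point is to verify carefully that $\delta<(\alpha+1)/(2(k+1))$ is just enough for the smallness coming from $a_n = d_n^{(\alpha+1)/2}$ to win against the weight at every order $j\le k$.
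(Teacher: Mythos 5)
Your proposal is correct and follows essentially the same route as the paper: the identity $d_n^{2\alpha}v_n^*g_n=\Theta_n^4\geucl$, a split of $\mathrm{V}_n$ into a fixed compact ball (handled by Proposition \ref{Fconvergence} and norm equivalence) and an annulus on which the expansions hold, and an inversion (the paper uses $z\mapsto z/|z|^2$, you use the scaled version $z\mapsto b_0^2z/|z|^2$, an immaterial difference) under which the dominant error is the $a_n$-term, giving the same bound $\mathcal{O}\bigl(d_n^{\frac{1+\alpha}{2}-(k+1)\delta}\bigr)$ and the same use of $\delta<\frac{1+\alpha}{2(k+1)}$. No substantive gap.
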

\begin{proof}
Direct computation yields 
$$d_n^{2\alpha}v_n^*g_n=\Theta_n^4\geucl,$$
and so it may be tempting to directly apply Proposition \ref{Fconvergence}. However, we run into trouble as we move away from the origin due to the metric $\Theta_F^4\geucl$ becoming less and less Euclidean. With this in mind we employ the approach of Proposition \ref{Jn}, splitting $\mathrm{V}_n$ into two regions by excising a compact ball $\{ |y| \leq Q \}$. Being compact, our choice of norm on this region is unimportant. Working with the Euclidean metric, the proof on the ball follows directly from Proposition \ref{Fconvergence}. 

To deal with the region outside the ball $\{ |y| \leq Q \}$ we specifically choose $Q > C\cdot \mathrm{diam}(D)$ so that the series expansions of $\Theta_n$ and $\Theta_F$ are both valid on the remaining annulus 
$$\mathrm{V}_n':=\{Q \leq |y| \leq b_{0,n}d_n^{-\delta} \}.$$ 
We now transport the remaining problem into terrain where we have more control by means of the inversion $\Lambda_1:z\mapsto \frac{z}{|z|^2}$. We obtain 
$$\Lambda_1^*(\Theta_n^4\geucl)=\left(\frac{d_n^{\frac{1+\alpha}{2}}}{|z|}+b_{0,n} + \sum_{l=0}^{\infty}C_l(z)b_{l,n}\right)^4\geucl,$$
and $$\Lambda_1^*(\Theta_F^4\geucl)=\left(\frac{G\mE}{2c^2}+\sum_{l=0}^{\infty}C_l(z)b_{l}\right)^4\geucl.$$
In particular, we see that $\Lambda_1^*(\Theta_F^4\geucl)$ and $\geucl$ yield equivalent norms on $\Lambda_1^{-1}(\mathrm{V}_n')=\{\frac{d_n^{\delta}}{b_{0,n}} \leq |z| \leq \frac{1}{Q} \}$.
Recall that $b_{0,n} \to \tfrac{G\mE}{2c^2}$, $b_{l,n} \to b_l$ and $\left|\partial^k\left(\tfrac{d_n^{\frac{1+\alpha}{2}}}{|z|}\right)\right|=\mathcal{O}\left(\tfrac{d_n^{\frac{1+\alpha}{2}}}{|z|^{k+1}}\right)$. At worst we have $|z|=\frac{d_n^{ \ \delta}}{b_{0,n}}$ and thus
$$\left\|\Theta_n^4\geucl-\Theta_F^4\geucl\right\|_{C^k(\mathrm{V'_n},\Theta_F^4\geucl)} =\mathcal{O}\left(d_n^{\frac{1+\alpha}{2}-(k+1)\delta}\right).$$
The claimed convergence is now a consequence of $\delta<\min\{\frac{1+\alpha}{2(k+1)}, \alpha\}$.
\end{proof}

Before moving on to the $\alpha=0$ case, we make a remark about the region $\mathrm{V_n}$. 
\begin{remark}\label{bottomsheet}
 Recalling that $d_n^{\frac{1+\alpha}{2}}\mathcal{H}_{d_n}^*\theta_n=\Theta_n$ we have that $$\mathcal{H}_{d_n^{1+\alpha}}^*(\Theta_n^4\geucl)=\Theta_n(d_n^\alpha x)^4\geucl \to \Theta_F(0)^4\geucl,$$ from which we infer the existence of a local coordinate system centered at the origin in which the metric converges to $\geucl$. While Figure \ref{fig2} depicts the collapse viewed from infinity, this result implies that from the standpoint of the center of the collapse, the limit picture would be flat Euclidean space. 
\end{remark}

\subsection*{The case $\alpha=0$} \ 

This case is analogous to the vanishing mass result in \cite{ZeroMass}.

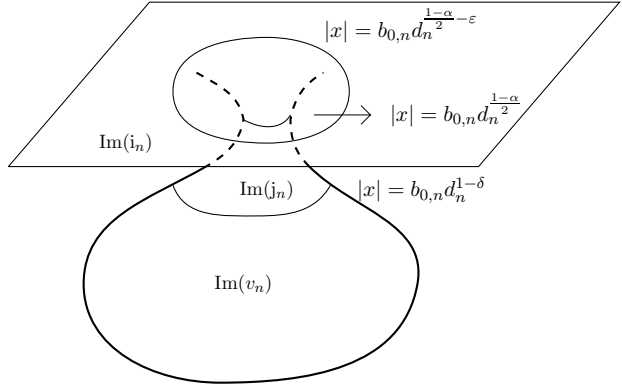
\begin{figure}[h]
\centering
\begin{tikzpicture}[scale=1.25]

\draw (17.5,-5.75) to (22.5, -5.75) to (24,-4) to (19, -4) to (17.5, -5.75);

\draw[thick, dashed] (19.5, -4.75) to [out=-30, in=100] (20, -5.25) to [out=-100, in=30] (19.6, -5.75);
\draw[thick] (19.6, -5.75) to [out=-150, in=90] (18.3, -7) to [out=-90, in=180] (19.75, -8.05) to [out=0, in=-100] (21.85, -7) to [out=80, in=-45] (20.7, -5.75);
\draw[thick, dashed] (20.7, -5.75) to [out=135, in=-90] (20.5, -5.25) to [out=80, in=-150] (20.85, -4.75);

\draw[thin] (20, -5.25) to [out=-30, in=-120] (20.5, -5.2);

\draw[thin] (19.25, -4.95) to [out=-90, in=180] (20.25, -5.5) to [out=0, in=-90] (21.125, -4.95) to [out=90, in=0] (20.25, -4.375) to [out=180, in=90] (19.25, -4.95); 

\draw[thin] (19.25, -5.95) to [out=-70, in=180] (20.15, -6.275) to [out=0, in=-120] (20.925, -5.95);

\node[right] at (20.75, -4.25) {\scalebox{0.75}{$|x|=b_{0,n}d_n^{\frac{1-\alpha}{2}-\e}$}};

\node[right] at (21.45, -5.15) {\scalebox{0.75}{$|x|=b_{0,n}d_n^{\frac{1-\alpha}{2}}$}};

\draw[ultra thin] (20.75, -5.2) to (21.35, -5.2) to (21.25, -5.1);
\draw[ultra thin] (21.35, -5.2) to (21.25, -5.3);

\node[right] at (21.1,-6) {\scalebox{0.75}{$|x|=b_{0,n}d_n^{1-\delta}$}};

\node at (20, -7){\scalebox{0.67}{$\mathrm{Im}(v_n)$}};

\node at (20.25, -6){\scalebox{0.67}{$\mathrm{Im}(\mathrm{j}_n)$}};

\node at (18.75, -5.5){\scalebox{0.67}{$\mathrm{Im}(\mathrm{i}_n)$}};

\end{tikzpicture}
\caption{Regions addressed in Proposition \ref{bubble}.}\label{fig5}
\end{figure}

In this case the embeddings and domains are
$$\begin{aligned}
& \mathrm{I_n}=\{|y| \geq b_{0,n}d_n^{\frac{1}{2}-\e}\} &&\mathrm{i}_n:y \mapsto y  \\
& \mathrm{J_n}=\{b_{0,n}d_n^{\frac{1}{2}-\delta} \leq |y| \leq b_{0,n}d_n^{-\e}\} &&\mathrm{j}_n:y \mapsto d_n^{\frac{1}{2}}y \\
& \mathrm{V_n}=\{|y| \leq b_{0,n}d_n^{-\delta}\} &&\mathrm{v}_n:y \mapsto d_ny
\end{aligned}$$
A direct computation shows that $\mathrm{Im}(\mathrm{I_n})\cup\mathrm{Im}(\mathrm{J_n})\cup\mathrm{Im}(\mathrm{V_n})=\R^3$. 

\begin{proposition}\label{bubble}
We have that
$$\begin{aligned}
&\left\|\mathrm{i}_n^*g_n-\geucl\right\|_{C^k(\mathrm{I_n},\geucl)} &&\to 0, \\
&\|d_n^{-1}\mathrm{j}_{n}^*g_n -g_{\mathrm{Schw}}\|_{C^k(\mathrm{J_n}, g_{\mathrm{Schw}})} &&\to 0, \\
&\left\|v_n^*g_n-\Theta_F^4\geucl\right\|_{C^k(V_n,\Theta_F^4\geucl)} &&\to 0.
 \end{aligned}$$
\end{proposition}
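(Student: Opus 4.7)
\noindent\emph{Proof proposal.} The plan is to mirror the three Propositions \ref{In}, \ref{Jn}, \ref{Vn} with $\alpha=0$. No analog of the region $\mathrm{E_n}$ is needed in this regime: the lower end of $\mathrm{J_n}$, namely $b_{0,n}d_n^{1/2-\delta}$, corresponds in $x$-coordinates to $b_{0,n}d_n^{1-\delta}$, which meets the upper end of $\mathrm{V_n}$ directly. Here and below I use the constraints $\frac{k}{2(k+1)}<\e<\frac{1}{2}$ and $0<\delta<\frac{1}{2(k+1)}$, which are the natural $\alpha=0$ specializations of the conditions stated in the introduction to this section.

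For the first claim, I re-run the argument of Proposition \ref{In}. Since $|y|\ge b_{0,n}d_n^{1/2-\e}\gg b_{0,n}d_n$, the expansion of $\theta_n$ from Proposition \ref{thetaexpands} applies on $\mathrm{I_n}$. A term-by-term bound on $\theta_n-1$ and its derivatives produces, at leading order, a power $d_n^{\e(k+1)-k/2}$ of $d_n$ (with higher-$l$ terms carrying additional factors of $d_n^{1/2+\e}$); by the choice of $\e$ these all vanish as $n\to\infty$.

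For the second claim, I follow the horizon-symmetry strategy of Proposition \ref{Jn}, splitting $\mathrm{J_n}$ at the Schwarzschild throat $|y|=b_{0,n}$. On the outer portion $\{b_{0,n}\le |y|\le b_{0,n}d_n^{-\e}\}$, the expansion of $\theta_n$ gives
$$
d_n^{-1}\mathrm{j}_n^*g_n=\Bigl(1+\tfrac{b_{0,n}}{|y|}+\mathcal{O}(d_n^{1/2})\Bigr)^4\geucl,
$$
the error persisting through all $y$-derivatives because each higher-$l$ term in the series brings an additional factor of $d_n^{1/2}$; boundedness of the Schwarzschild Christoffel symbols then gives $C^k$-convergence to $g_{\mathrm{Schw}}$. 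On the inner portion $\{b_{0,n}d_n^{1/2-\delta}\le |y|\le b_{0,n}\}$ I pull back by the inversion $\Lambda_h:z\mapsto b_{0,n}^2 z/|z|^2$, landing in the above-horizon-extended annulus $\{b_{0,n}\le |z|\le b_{0,n}d_n^{\delta-1/2}\}$. The point that needs checking is that the expansion of $\theta_n$ remains valid on this extended range, which it does since the corresponding $x$-coordinate stays above $b_{0,n}d_n^{1-\delta}\ge Cd_n$; the same computation together with the limit $\Lambda_h^*g_{\mathrm{Schw}}\to g_{\mathrm{Schw}}$ then closes the argument.

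For the third claim, I follow Proposition \ref{Vn} verbatim, using $\mathrm{v}_n^*g_n=\Theta_n^4\geucl$ (since $\alpha=0$ makes the prefactor $d_n^{2\alpha}=1$). I excise a compact ball $\{|y|\le Q\}$ for $Q>C\cdot\mathrm{diam}(D)$, on which Proposition \ref{Fconvergence} yields immediate $C^k$-convergence, and then pull the remaining annulus back under the inversion $\Lambda_1:z\mapsto z/|z|^2$ to the region $\{d_n^{\delta}/b_{0,n}\le |z|\le 1/Q\}$. The pulled-back metrics become
\begin{align*}
\Lambda_1^*(\Theta_n^4\geucl) &= \Bigl(\tfrac{d_n^{1/2}}{|z|}+b_{0,n}+\sum_{l\ge 1}b_{l,n}C_l(z)\Bigr)^4\geucl, \\
\Lambda_1^*(\Theta_F^4\geucl) &= \Bigl(\tfrac{G\mE}{2c^2}+\sum_{l\ge 1}b_l C_l(z)\Bigr)^4\geucl.
\end{align*}
The main obstacle is the leading term $d_n^{1/2}/|z|$: its $k$-th derivative at the worst-case boundary $|z|=d_n^{\delta}/b_{0,n}$ is of order $d_n^{1/2-(k+1)\delta}$, which is precisely why the condition $\delta<1/(2(k+1))$ is imposed. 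Combined with $b_{0,n}\to G\mE/(2c^2)$ and $b_{l,n}\to b_l$, this delivers the desired $C^k$-convergence.
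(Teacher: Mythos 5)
Your proposal is correct and follows essentially the same route as the paper, whose proof simply says to proceed as in Propositions \ref{In}, \ref{Jn} and \ref{Vn}, with the only flagged change being that the sum term in the below-horizon part of the Proposition \ref{Jn} argument is now $\mathcal{O}(d_n^{\delta})$ rather than $\mathcal{O}(d_n^{\frac{1+\alpha}{2}-\e})$. You spell out the same structure more explicitly (no $\mathrm{E_n}$ region, extended $\mathrm{J_n}$, adjusted constraints $\tfrac{k}{2(k+1)}<\e<\tfrac12$ and $0<\delta<\tfrac{1}{2(k+1)}$, validity of the expansion down to $|x|\ge b_{0,n}d_n^{1-\delta}$); your phrase ``the same computation'' on the inner annulus indeed yields the weaker but still vanishing rate $\mathcal{O}(d_n^{\delta})$, which is exactly the point the paper records.
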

\begin{proof}
We proceed as in Propositions \ref{In}, \ref{Jn} and \ref{Vn} with the exception that the sum term in the proof of Proposition \ref{Jn} now $\mathcal{O}(d_n^\delta)$ rather than $\mathcal{O}(d_n^{\frac{1+\alpha}{2}-\e})$. \end{proof}

The following proposition justifies the depiction in Figure \ref{fig1}.
\begin{proposition}\label{smoothbubble}
Consider the function $\thetafinal$ addressed in the Proposition \ref{Fconvergence}. 
\begin{enumerate}
\item The metric $\thetafinal^4 \geucl$ defines a smooth metric on $S^3$. 
\medbreak
\item The integral of the scalar curvature of $\thetafinal^4 \geucl$ is $16\pi \tfrac{G}{c^2}\int_{\R^3}\omegafinal$.
\end{enumerate}
\end{proposition}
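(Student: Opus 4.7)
The plan is to leverage the asymptotic series expansion of $\thetafinal$ given at the end of Section \ref{ssf} together with the inversion $\Lambda_1:z\mapsto z/|z|^2$, exactly as used in the proof of Proposition \ref{Vn}. Since $\thetafinal$ is smooth and strictly positive on $\R^3$ (by Propositions \ref{NoahThm} and \ref{Fconvergence}), $\thetafinal^4\geucl$ is already a smooth Riemannian metric on $\R^3$; the only new work is to verify that it extends smoothly to the added point at infinity to define a metric on $S^3\cong \R^3\cup\{\infty\}$.

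For Part (1), I would use the inversion $\Lambda_1$ as a coordinate chart centered at $\{\infty\}$. A standard computation gives $\Lambda_1^*\geucl=|z|^{-4}\geucl$, so
\[
\Lambda_1^*(\thetafinal^4\geucl)=\bigl(|z|\,\thetafinal(z/|z|^2)\bigr)^4\geucl.
\]
Substituting the expansion
$\thetafinal(x)=\tfrac{G}{2c^2}\tfrac{\mE}{|x|}+\sum_{l\ge 1}\tfrac{b_l C_l(x)}{|x|^{2l+1}}$
and using that $C_l$ is homogeneous of degree $l$ (so $C_l(z/|z|^2)=C_l(z)/|z|^{2l}$) shows that the inverted conformal factor equals
\[
|z|\,\thetafinal(z/|z|^2)=\tfrac{G\mE}{2c^2}+\sum_{l\ge 1} b_l C_l(z),
\]
which is a power series convergent on a neighborhood of $z=0$ and takes the strictly positive value $\tfrac{G\mE}{2c^2}>0$ at $z=0$ (note $\mE>0$ since $\Omega_F\ge 0$ is nontrivial and $\Theta_F>0$). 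Hence the metric extends smoothly and non-degenerately across $\{\infty\}$.

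For Part (2), I would apply the Hamiltonian constraint directly to $\thetafinal$. Since $\thetafinal$ solves $\thetafinal\Delta_{\geucl}\thetafinal\dvol_{\geucl}=-4\pi\tfrac{G}{2c^2}\omegafinal$ (Proposition \ref{Fconvergence}), and $R(\thetafinal^4\geucl)\dvol_{\thetafinal^4\geucl}=-8\thetafinal\Delta_{\geucl}\thetafinal\dvol_{\geucl}$, we obtain
\[
R(\thetafinal^4\geucl)\dvol_{\thetafinal^4\geucl}=16\pi\tfrac{G}{c^2}\omegafinal
\]
as measures on $\R^3$. Integrating over $\R^3$ gives $16\pi\tfrac{G}{c^2}\int_{\R^3}\omegafinal$; and since the metric on $S^3$ is smooth (by Part (1)) while $\{\infty\}$ is a single point of measure zero for the smooth volume form, $\int_{S^3}R\dvol=\int_{\R^3}R\dvol$, yielding the claim.

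The main technical point is Part (1): specifically, the fact that the powers $|x|^{-(2l+1)}$ appearing in the expansion of $\thetafinal$ are precisely canceled after pullback by $\Lambda_1$ thanks to the homogeneity of the $C_l$ and the conformal factor $|z|^{-4}$ of the inversion. One should also record that the expansion in Proposition \ref{Fconvergence} converges smoothly (not just formally) in a punctured neighborhood of infinity, so the resulting Taylor series at $z=0$ defines a genuinely smooth function there, not just a formal asymptotic expansion.
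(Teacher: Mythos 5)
Your proposal follows the paper's own proof essentially verbatim: pull back through the inversion $\Lambda_1$, use the homogeneity of the $C_l$ to see that the conformal factor extends analytically across infinity with the positive value $\tfrac{G\mE}{2c^2}$, and integrate the Hamiltonian constraint $-8\thetafinal\Delta_{\geucl}\thetafinal\dvol_{\geucl}=16\pi\tfrac{G}{c^2}\omegafinal$ for the curvature statement. One small slip: since $\Lambda_1^*\geucl=|z|^{-4}\geucl$, the inverted conformal factor is $\thetafinal(z/|z|^2)\cdot\tfrac{1}{|z|}$ rather than $|z|\,\thetafinal(z/|z|^2)$; your displayed expansion $\tfrac{G\mE}{2c^2}+\sum_{l\ge1}b_lC_l(z)$ is precisely the expansion of the former, so the argument goes through unchanged once that factor is corrected.
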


\begin{proof}
Our strategy is to pull back the metric $\thetafinal^4\geucl$ along the inversion with respect to the unit sphere, $x\mapsto \frac{x}{|x|^2}$. Under this inversion the metric $\geucl$ pulls back to $\tfrac{1}{|x|^4}\geucl$ and so to prove our claim it suffices to prove that 
$$\thetafinal\left(\frac{x}{|x|^2}\right)\cdot \frac{1}{|x|}$$
is smooth at $x=0$. In fact, the function is actually analytic near $x=0$. To see this we apply the asymptotic expansion from Proposition \ref{Fconvergence}, assuming that $|x|$ is sufficiently small so that $\tfrac{x}{|x|^2}$ is substantially outside the support of $\omegafinal$. We have 
$$\thetafinal\left(\frac{x}{|x|^2}\right)\cdot \frac{1}{|x|}=\frac{G\mE}{2c^2}+\sum_{l=1}^\infty b_l\cdot C_l\left(\frac{x}{|x|^2}\right)|x|^{2l}=\frac{G\mE}{2c^2}+\sum_{l=1}^\infty b_l\cdot C_l(x).$$
The last equality follows from the fact that $C_l$ is a homogeneous polynomial of degree $l$. This completes the proof of the first of our two claims. The proof of the second claim is a simple computation:
$$\int_{\R^3} R(\thetafinal^4\geucl)\,\dvol_{\thetafinal^4\geucl}=\int_{\R^3} (-8\thetafinal \Delta_{\geucl}\thetafinal)\,\dvol_{\geucl}=\int_{\R^3} 16 \pi \tfrac{G}{c^2}\omegafinal.\qedhere$$
\end{proof}

In the case $\alpha=0$ the region on which the Schwarzschild metric is arising, $\mathrm{J_n}$, vanishes in the limit as a result of Remark \ref{length}. This leaves the limit geometries of $\mathrm{I_n}$ and $\mathrm{V_n}$ which, by inspection of Proposition \ref{bubble}, are $(\R^3,\geucl)$ and $(\R^3,\Theta_F^4\geucl)$ respectively. This reveals a geometric interpretation of the vanishing mass result in that the contents of the bubble are not picked up by a computation of ADM mass. 

\subsection*{The case $\alpha=1$} 

By Corollary \ref{adm1} we see that in this case the ADM mass does not vanish but instead converges to $\mE$. 

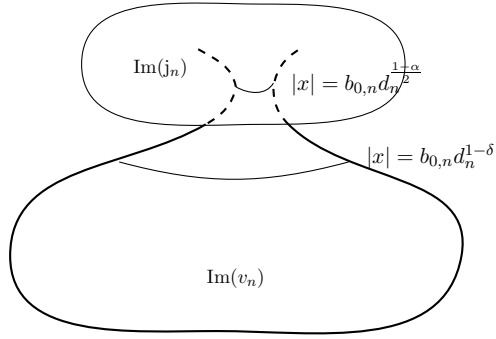
\begin{figure}[h]
\centering
\begin{tikzpicture}[scale=1]

\draw[thick, dashed] (19.5, -4.75) to [out=-30, in=100] (20, -5.25) to [out=-100, in=30] (19.6, -5.75);
\draw[thick] (19.6, -5.75) to [out=-150, in=90] (17, -7.5) to [out=-90, in=180] (19.75, -8.5) to [out=0, in=-100] (23, -7.5) to [out=80, in=-45] (20.7, -5.75);
\draw[thick, dashed] (20.7, -5.75) to [out=135, in=-90] (20.5, -5.25) to [out=80, in=-150] (20.85, -4.75);

\draw[thin] (20, -5.25) to [out=-30, in=-120] (20.5, -5.2);

\draw[thin] (17.95, -4.95) to [out=-90, in=180] (20.35, -5.75) to [out=0, in=-90] (22.25, -4.95) to [out=90, in=0] (20.35, -4.15) to [out=180, in=90] (17.95, -4.95); 

\draw[thin] (18.45, -6.25) to [out=-15, in=-165] (21.5, -6.25);

\node[right] at (20.6, -5.15) {\scalebox{0.75}{$|x|=b_{0,n}d_n^{\frac{1-\alpha}{2}}$}};

\node[right] at (21.6,-6.175) {\scalebox{0.75}{$|x|=b_{0,n}d_n^{1-\delta}$}};

\node at (20, -7.8){\scalebox{0.67}{$\mathrm{Im}(v_n)$}};

\node at (19, -5){\scalebox{0.67}{$\mathrm{Im}(\mathrm{j}_n)$}};

\end{tikzpicture}
\caption{Regions addressed in Proposition \ref{sch}.}\label{fig6}
\end{figure}

The embeddings and domains are 
$$\begin{aligned}
& \mathrm{J_n}=\{b_{0,n}d_n^{1-\delta} \leq |y|\} &&\mathrm{j}_n:y \mapsto y \\
& \mathrm{V_n}=\{|y| \leq b_{0,n}d_n^{-\delta}\} &&\mathrm{v}_n:y \mapsto d_ny. \\
\end{aligned}$$
A direct computation shows that $\mathrm{Im}(\mathrm{J_n})\cup\mathrm{Im}(\mathrm{V_n})=\R^3$.

\begin{proposition}\label{sch}
We have 
$$\begin{aligned}
&\|\mathrm{j}_{n}^*g_n -g_{\mathrm{Schw}}\|_{C^k(\mathrm{J_n}, g_{\mathrm{Schw}})}\to 0,\\
&\left\|d_n^2v_n^*g_n-\Theta_F^4\geucl\right\|_{C^k(V_n,\Theta_F^4\geucl)} \to 0.
\end{aligned}$$
\end{proposition}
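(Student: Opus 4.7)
The plan is to mirror the strategy of Propositions \ref{Jn} and \ref{Vn}, adapted to the fact that for $\alpha=1$ the scaling $d_n^{(1-\alpha)/2}$ collapses to $1$ and the ADM mass survives the limit. For $\mathrm{J_n}$, I would split the region along the Schwarzschild throat $|y|=b_{0,n}$ and treat the piece below the horizon using the inversion $\Lambda_h\colon z\mapsto b_{0,n}^2 z/|z|^2$, which fixes $g_{\mathrm{Schw}}$. For $V_n$, I would split into a compact ball $\{|y|\le Q\}$ with $Q>C\,\mathrm{diam}(D)$ (where Proposition \ref{Fconvergence} applies directly) and an outer annulus which I would transport via the inversion $\Lambda_1\colon z\mapsto z/|z|^2$ onto a bounded region near the origin.

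For the $\mathrm{J_n}$ statement, $\mathrm{j}_n^*g_n=\theta_n^4\geucl$, and substituting $\alpha=1$ into Proposition \ref{thetaexpands} gives
\[
\theta_n(y)=1+\tfrac{b_{0,n}}{|y|}+\sum_{l\ge 1}\tfrac{b_{l,n}C_l(y)d_n^l}{|y|^{2l+1}}.
\]
On $\{|y|\ge b_{0,n}\}$ the correction sum and each of its $C^k$ derivatives is $\mathcal{O}(d_n)$ because $|y|$ is bounded below, and since the Schwarzschild conformal factor $\phi_{\mathrm{Schw}}=1+\tfrac{G\mE}{2c^2|y|}$ is pinched between $1$ and $2$, the $\geucl$ and $g_{\mathrm{Schw}}$ norms are equivalent there; convergence then follows from $b_{0,n}\to \tfrac{G\mE}{2c^2}$. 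For $\{b_{0,n}d_n^{1-\delta}\le |y|\le b_{0,n}\}$, applying $\Lambda_h$ sends the region onto $\{b_{0,n}\le |z|\le b_{0,n}d_n^{\delta-1}\}$, and homogeneity of the $C_l$'s yields
\[
\theta_n(\Lambda_h(z))\cdot \tfrac{b_{0,n}}{|z|}=1+\tfrac{b_{0,n}}{|z|}+\sum_{l\ge 1}\tfrac{b_{l,n}C_l(z)d_n^l}{b_{0,n}^{2l+1}}.
\]
The $k$-th derivative of the $l$-th term is controlled by $\mathcal{O}(d_n^{\delta(l-k)+k})$, tending to zero for every $l\ge 1$ and $k\ge 0$; this reduces the below-horizon piece to the above-horizon case via $\Lambda_h^*g_{\mathrm{Schw}}=g_{\mathrm{Schw}}$.

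For the $V_n$ statement, direct computation gives $d_n^2 v_n^* g_n=\Theta_n^4\geucl$. On the ball $\{|y|\le Q\}$ the claim is immediate from Proposition \ref{Fconvergence}. On the outer annulus $\{Q\le |y|\le b_{0,n}d_n^{-\delta}\}$, the inversion $\Lambda_1$ (combined with $a_n=d_n$ when $\alpha=1$) produces
\[
\tfrac{\Theta_n(\Lambda_1(z))}{|z|}=\tfrac{d_n}{|z|}+b_{0,n}+\sum_{l\ge 1}b_{l,n}C_l(z),\qquad \tfrac{\Theta_F(\Lambda_1(z))}{|z|}=\tfrac{G\mE}{2c^2}+\sum_{l\ge 1}b_l C_l(z),
\]
on the transported bounded region $\{b_{0,n}^{-1}d_n^\delta\le |z|\le 1/Q\}$, where $\Lambda_1^*(\Theta_F^4\geucl)$ is uniformly equivalent to $\geucl$. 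The polynomial tails agree in the limit because $b_{l,n}\to b_l$; the only genuinely new term is $d_n/|z|$, whose $k$-th derivative is $\mathcal{O}(b_{0,n}^{k+1}d_n^{1-(k+1)\delta})$, vanishing in the limit by the standing constraint $\delta<\tfrac{1}{k+1}$ specialized to $\alpha=1$.

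The main obstacle, as in the $0<\alpha<1$ case, is the below-horizon analysis for $\mathrm{J_n}$: the conformal factor $\phi_{\mathrm{Schw}}$ blows up as $|y|\to 0$, so one cannot directly compare $\theta_n^4$ to $\phi_{\mathrm{Schw}}^4$ in the $g_{\mathrm{Schw}}$-norm on tensors. The inversion $\Lambda_h$ circumvents this by exchanging the region for one where both norms are tame, and beyond that point only a careful bookkeeping of the powers of $d_n$ under $\alpha=1$ is required.
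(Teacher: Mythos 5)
Your argument is correct and is essentially the paper's: the proof of Proposition \ref{sch} in the paper consists precisely of running Propositions \ref{Jn} and \ref{Vn} with $\alpha=1$, and your bookkeeping of the powers of $d_n$ (the $\mathcal{O}(d_n^{\delta(l-k)+k})$ estimate below the horizon and the $\mathcal{O}(d_n^{1-(k+1)\delta})$ estimate on the inverted annulus, using $\delta<\tfrac{1}{k+1}$) is exactly what that specialization yields, including the extension of the above-horizon analysis to the now-unbounded region $\{|y|\ge b_{0,n}\}$. The only slip is the claim that $\Lambda_h$ fixes $g_{\mathrm{Schw}}$: it inverts through $|y|=b_{0,n}$ rather than $|y|=\tfrac{G\mE}{2c^2}$, so $\Lambda_h^*g_{\mathrm{Schw}}=\bigl(\tfrac{G\mE}{2c^2b_{0,n}}+\tfrac{b_{0,n}}{|z|}\bigr)^4\geucl$ merely converges to $g_{\mathrm{Schw}}$ (as in the paper's Proposition \ref{Jn}), which, since $b_{0,n}\to\tfrac{G\mE}{2c^2}$, changes nothing in your conclusion.
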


\begin{proof}
Taking $\alpha=1$ in Propositions \ref{Jn} and \ref{Vn} completes the proof.
\end{proof}
Note that while we do obtain Schwarzschild initial data in the limit, Remark \ref{bottomsheet} still holds as stated.
 
\subsection*{Summary} 
As a whole, the parameter $\alpha$ is to be understood as a measure of  interaction present in a construction of point particle initial data. The authors in \cite{ZeroMass} assume that the appropriate matter distribution is $\mE \delta$. Withinn our framework their setting corresponds to $\alpha=0$. Proposition \ref{bubble} shows that the effective mass indeed vanishes but that the contents of $\Omega_F$ are trapped in the bubble depicted in Figure \ref{fig1}. 

Our framework also shows that even if one adds back an insufficient amount of matter, corresponding to $0<\alpha<1$, the mass still vanishes in the limit. In the case $\alpha=1$, in conflict to the vanishing mass claim made in \cite{ZeroMass}, we do obtain the initial data $(\R^3\smallsetminus\{0\},g_{\text{Schw}})$. It is only in the case $\alpha=1$ that one is adding  matter at the correct rate to perfectly balance the effects of interaction, thus yielding non-vanishing mass in the limit. 

\subsection*{Locating the horizons}
Proposition \ref{Jn} and Corollary \ref{adm1} suggest the presence of the horizon(s) i.e the outermost minimal surface(s) of $\theta_n^4 \geucl$ near $|x|=b_{0,n} d_n^{\frac{1-\alpha}{2}}$. That this indeed is the case is the content of the following two theorems. Their proofs, however, are somewhat technical and of very different character than the rest of our paper. For this reason we have placed them in Appendix \ref{minsurf-ivaisms}. 

\begin{theorem}\label{minsurf1:thm}
There exist constants $C$ and $N$ such that for all $n\ge N$  the metric $\theta_n^4 \geucl$ has a minimal surface in the region 
$$(1-Cd_n)b_{0,n} d_n^{\frac{1-\alpha}{2}}\le |x|\le (1+Cd_n)b_{0,n} d_n^{\frac{1-\alpha}{2}}.$$
\end{theorem}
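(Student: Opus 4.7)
The plan is to use a barrier argument: exhibit inner and outer surfaces bracketing the annulus $(1\pm Cd_n)b_{0,n} d_n^{(1-\alpha)/2}$ with mean curvatures of opposite sign in the metric $\theta_n^4\geucl$, and then invoke a standard existence result for a minimal surface trapped between such barriers.

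First I would compute the mean curvature of a coordinate sphere $\{|x|=R\}$ in $\theta_n^4\geucl$; up to a positive factor it equals $\theta_n+2R\,\partial_r\theta_n$, so only the sign of that scalar matters. Set $M_n:=b_{0,n}d_n^{(1-\alpha)/2}$. Since $M_n/d_n = b_{0,n}d_n^{-(1+\alpha)/2}\to\infty$, the expansion of Proposition \ref{thetaexpands} is valid for $|x|\sim M_n$, and using that each $C_l$ is homogeneous of degree $l$ it takes the form
$$\theta_n(x)\;=\;1+\frac{M_n}{|x|}\;+\;\sum_{l\ge 1}\frac{b_{l,n}\,C_l(x/|x|)\,d_n^{l+(1-\alpha)/2}}{|x|^{l+1}}.$$
Differentiating in the radial direction gives, with $R=|x|$,
$$\theta_n+2R\,\partial_r\theta_n\;=\;\Bigl(1-\tfrac{M_n}{R}\Bigr)\;-\;\sum_{l\ge 1}\frac{(2l+1)\,b_{l,n}\,C_l(x/|x|)\,d_n^{l+(1-\alpha)/2}}{R^{l+1}}.$$
Evaluating on the two spheres $R=(1\pm Cd_n)M_n$, the Schwarzschild piece $1-M_n/R$ equals $\pm Cd_n/(1\pm Cd_n)=\pm Cd_n+O(d_n^2)$, while the substitution $R\sim M_n\sim d_n^{(1-\alpha)/2}$ together with uniform bounds for $|C_l|$ on $S^2$ bounds each correction term by a constant multiple of $d_n^{l(1+\alpha)/2}$. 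Choosing $C$ large and $N$ large enough that the correction is dominated by $Cd_n$ on both spheres, the whole expression is then strictly positive on the outer sphere and strictly negative on the inner one, uniformly in $x/|x|$. At that point the outer sphere is strictly mean-convex and the inner sphere strictly mean-concave, and standard barrier existence theory for minimal surfaces (area minimization within the class of surfaces enclosing the inner barrier, or a Meeks--Simon--Yau compactness argument) produces a smooth closed minimal hypersurface in the open annulus.

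The main obstacle is the quantitative balance in the barrier step: when $\alpha<1$ the leading non-spherical correction has size $d_n^{(1+\alpha)/2}$, which \emph{exceeds} the Schwarzschild signal $Cd_n$, so the sign comparison as stated breaks down. To handle this regime I would let the barriers themselves deform, writing each trial surface as a graph $R=(1\pm Cd_n)M_n\cdot(1+\varepsilon(\sigma))$ over $\sigma\in S^2$ and solving a linear elliptic equation on $S^2$ (projection onto spherical harmonics) for $\varepsilon$ so as to cancel the dominant $C_l(\sigma)$ contributions in the boxed expression. After subtracting this absorbed mode the residual error is genuinely of order $d_n$, the sign comparison goes through on the perturbed barriers, and the resulting minimal surface still lies inside the annulus claimed, since the deformation $\varepsilon$ is itself $O(d_n)$ in sup norm.
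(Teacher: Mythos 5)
Your route is genuinely different from the paper's. The paper rescales so that $\theta_n^4\geucl$ becomes an $O(d_n^{\frac{1+\alpha}{2}})$ perturbation of the Schwarzschild factor $1+\tfrac{G\mE}{2c^2|y|}$, writes the putative horizon as a graph $|y|=\tfrac{G\mE}{2c^2}f(s)$, $s\in S^2$, derives the minimal surface equation \eqref{minsurf:eqn}, and runs a contraction-mapping (implicit function theorem) argument about the exact solution $f_\infty\equiv 1$, using that the linearization $\Delta_{S^2}-1$ is invertible uniformly in $n$. You instead stay in the original coordinates and use the sign of $\theta_n+2R\,\partial_r\theta_n$ on coordinate spheres as two-sided barriers, then trap a minimal surface between them. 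Your mean-curvature reduction, the validity of the expansion of Proposition \ref{thetaexpands} at the scale $|x|\sim b_{0,n}d_n^{\frac{1-\alpha}{2}}$, and the sign bookkeeping are correct, and for $\alpha=1$ (where the anisotropic corrections are $O(d_n)$) the naive round-sphere barriers do confine a minimal surface to the stated annulus; the trapping step is standard once each barrier has a strict, uniform sign of mean curvature, with the strong maximum principle keeping the minimizer off the barriers.

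The genuine gap is the quantitative claim in your final paragraph. The terms you propose to absorb have size $d_n^{\frac{1+\alpha}{2}}$, and the linearized graph operator on $S^2$ (of the form $\Delta_{S^2}-1$ up to errors) has an $O(1)$ inverse; hence the deformation $\varepsilon$ cancelling them is necessarily of size $d_n^{\frac{1+\alpha}{2}}$, not $O(d_n)$. Concretely, the $l=1$ term $b_{1,n}C_1(x/|x|)$ is, at linear order, a translation of the sphere by a relative amount $\sim d_n^{\frac{1+\alpha}{2}}$, which for $0\le\alpha<1$ dominates $d_n$; no barrier deformation can then confine the trapped surface to an annulus of relative width $Cd_n$ unless the coefficients $b_{l,n}$, $l\ge 1$, are themselves known to be small, which is not assumed. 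So your construction yields the theorem with the annulus $(1\pm Cd_n^{\frac{1+\alpha}{2}})\,b_{0,n}d_n^{\frac{1-\alpha}{2}}$ rather than $(1\pm Cd_n)\,b_{0,n}d_n^{\frac{1-\alpha}{2}}$. For what it is worth, the paper's own argument produces exactly the same precision, $\|f-f_\infty\|_{H^{k}(S^2)}=O(d_n^{\frac{1+\alpha}{2}})$, and it is that rate which is used downstream (see \eqref{near-penrose}); so with the corrected bookkeeping your barrier argument is a legitimate alternative proof of what the contraction argument actually delivers, but as written the assertions that the residual is ``genuinely of order $d_n$'' and that $\varepsilon=O(d_n)$ in sup norm are unjustified, and the stated $(1\pm Cd_n)$ localization does not follow for $\alpha<1$.
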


To clarify, our definition of outermost minimal surface is that from \cite{Huisken-Ilmanen}. Because of our Theorem \ref{minsurf1:thm},  (the set-up of) Lemma 4.1 from \cite{Huisken-Ilmanen} implies the existence of the outermost minimal surfaces $\Sigma$; we also know that each one of them is a smooth embedded $2$-sphere. 

\begin{theorem}\label{minsurf2:thm}
The outermost minimal surface of $\theta_n^4\geucl$ is connected. Furthermore, there exist constants $C_1, C_2>0$ such that for all $n$ the outermost minimal surface of $\theta_n^4 \geucl$ is located within the region 
$$C_1 b_{0,n} d_n^{\frac{1-\alpha}{2}}\le |x|\le C_2 b_{0,n} d_n^{\frac{1-\alpha}{2}}.$$
\end{theorem}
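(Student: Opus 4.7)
The plan is to prove the location bounds first via barrier arguments derived from Proposition \ref{thetaexpands}, and then extract connectedness from the Schwarzschild limit established in Proposition \ref{Jn}. The central calculation is the mean curvature of a coordinate sphere $\{|x|=r\}$ with respect to $\theta_n^4\geucl$. The conformal transformation formula in ambient dimension three reduces this to
$$H_n(r)=\frac{2}{r\,\theta_n(r)^3}\bigl(\theta_n(r)+2r\,\partial_r\theta_n(r)\bigr),$$
and substituting the asymptotic expansion of $\theta_n$ gives $\theta_n+2r\,\partial_r\theta_n=1-\tfrac{b_{0,n}\,d_n^{(1-\alpha)/2}}{r}+o(1)$, uniformly over the range of $r$ where the expansion is valid.

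From this the outer bound is immediate: for any fixed $C_2>1$ the sphere $\{|x|=C_2\,b_{0,n}d_n^{(1-\alpha)/2}\}$ is strictly mean-convex with the outward normal for all sufficiently large $n$, and the standard barrier principle for minimal surfaces rules out closed minimal surfaces outside this sphere; in particular every component of the outermost minimal surface lies in $\{|x|\le C_2 b_{0,n}d_n^{(1-\alpha)/2}\}$. The inner bound comes in two steps. Theorem \ref{minsurf1:thm} already supplies a minimal surface $\Sigma'$ at radius $(1+O(d_n))b_{0,n}d_n^{(1-\alpha)/2}$, forcing the component of the outermost surface whose interior contains $\Sigma'$ to reach radius $\ge(1-O(d_n))b_{0,n}d_n^{(1-\alpha)/2}$. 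To control any other components, the same expansion shows that for $0<C_1<1$ the sphere $\{|x|=C_1\,b_{0,n}d_n^{(1-\alpha)/2}\}$ has $\theta_n+2r\,\partial_r\theta_n<0$, hence is strictly mean-convex with respect to the \emph{inward} normal, ruling out closed minimal surfaces lying strictly inside it.

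For connectedness, the plan is to pull the outermost minimal surface back through $\mathrm{j}_n\colon y\mapsto d_n^{(1-\alpha)/2}y$ and work with the rescaled metric $d_n^{-(1-\alpha)}\mathrm{j}_n^*g_n$. By the bounds just proved, every component is contained in a fixed annulus $\{b_{0,n}/C_2\le |y|\le b_{0,n}/C_1\}$, on which Proposition \ref{Jn} provides $C^k$-convergence to the Schwarzschild metric $g_{\mathrm{Schw}}$. The Schwarzschild metric admits a unique, connected, strictly stable outermost minimal $2$-sphere at $|y|=b_{0,n}$ (whose limit is $|y|=Gm_{\mathrm{eff}}/2c^2$). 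By White's compactness theorem for minimal surfaces in smoothly converging ambient metrics, every sequence of components $\Sigma_i^{(n)}$ subconverges to this horizon. Strict stability of the Schwarzschild horizon, together with an implicit-function argument in the space of embedded $2$-spheres, then implies that for $n$ sufficiently large there is a unique minimal $2$-sphere within the annulus, so the components must coincide and the outermost minimal surface is connected.

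The main obstacle is the connectedness step. The barrier computations rest directly on Proposition \ref{thetaexpands} and amount to bookkeeping with the expansion. By contrast, ruling out two distinct components of the outermost minimal surface requires genuine input from minimal surface theory---either a stability/implicit-function argument near the Schwarzschild horizon, or a direct maximum-principle comparison between two disjoint minimal $2$-spheres sitting in a thin annular region where $g_n$ is $C^k$-close to $g_{\mathrm{Schw}}$. Making this comparison quantitative enough to hold uniformly in $n$, and in a region where the expansion of $\theta_n$ is only marginally applicable, is the delicate piece and is presumably what necessitates the deferral of the proofs to Appendix \ref{minsurf-ivaisms}.
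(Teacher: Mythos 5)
Your route is genuinely different from the paper's, and its two halves fare differently. The outer bound via the conformal mean-curvature formula for coordinate spheres is sound: the expansion of Proposition \ref{thetaexpands} is valid for all $r\ge C_2\,b_{0,n}d_n^{\frac{1-\alpha}{2}}$ (since $b_{0,n}d_n^{\frac{1-\alpha}{2}}\gg d_n$), the error terms are uniformly $\mathcal{O}(d_n^{\frac{1+\alpha}{2}})$ there, and the touching-point comparison with an outward mean-convex sphere of maximal radius rules out any point of a closed minimal surface beyond radius $C_2\,b_{0,n}d_n^{\frac{1-\alpha}{2}}$ for any fixed $C_2>1$ and $n$ large -- in fact a cleaner and sharper outer bound than the paper's, which instead combines the monotonicity formula for minimal surfaces (with an injectivity-radius estimate for a cut-off metric) with Bray's Penrose inequality. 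The inner-bound step for ``other components'' is, however, a non sequitur: strict inward mean-convexity of the \emph{single} sphere $\{|x|=C_1 b_{0,n}d_n^{\frac{1-\alpha}{2}}\}$ does not exclude closed minimal surfaces inside it; to run a sweeping/touching argument you would need control of the sphere mean curvatures at all smaller radii, where the expansion fails, and indeed the interior region is close to a rescaled chunk of the compact geometry $\Theta_F^4\geucl$ (near the center it is nearly Euclidean, Remark \ref{bottomsheet}), which may well contain compact minimal surfaces. The fix is immediate and is exactly what the paper uses as \eqref{not-too-far}: every component of the \emph{outermost} minimal surface encloses, hence lies (weakly) outside, the minimal surface produced by Theorem \ref{minsurf1:thm}, which already gives $|x|\ge(1-\mathcal{O}(d_n))\,b_{0,n}d_n^{\frac{1-\alpha}{2}}$ on all components.

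The connectedness step is where the real gap lies, and it is also where your approach diverges most from the paper. As written, the compactness-plus-strict-stability plan is missing its hypotheses: you need a uniform area bound for the rescaled components (available from the outward-minimizing property of the outermost surface, or from Bray's Penrose inequality, neither of which you invoke), you need the convergence to the Schwarzschild horizon to be graphical with multiplicity one (stability of outermost components plus curvature estimates must be brought in), and the implicit-function uniqueness must be upgraded from a ball shrinking with $n$ (as in the contraction argument of Theorem \ref{minsurf1:thm}) to a ball of fixed radius, since compactness gives no rate. These pieces can plausibly be assembled, but they constitute genuine additional machinery. The paper avoids all of this: it proves (Proposition \ref{christinathing}, via Gauss--Bonnet and the conformal mean-curvature relation) that \emph{each} spherical component has area at least $64\pi\,b_{0,n}^2 d_n^{1-\alpha}(1+o(1))$, nearly saturating the Penrose bound $16\pi\left(\tfrac{G}{c^2}m_{\mathrm{ADM}}\right)^2=64\pi\,b_{0,n}^2 d_n^{1-\alpha}$, so a second component is impossible; the same pair of inequalities then forces the surface to dip below radius $C\,b_{0,n}d_n^{\frac{1-\alpha}{2}}$ for any $C>1$, which feeds into its outer-bound argument. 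In short: your barrier computation is a nice simplification of the location bounds (once the inner bound is repaired via outermost-ness), but your connectedness argument is an incomplete sketch of a substantially heavier route than the Gauss--Bonnet/Penrose mechanism the paper actually uses.
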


\section{Connections to the literature}\label{horizon}

\subsection{Relation to the intrinsic flat stability of the Positive Mass Theorem}
Recall that $m_{\mathrm{ADM}}(\theta_n^4\geucl)\to 0$ when $0\le \alpha<1$ due to Corollary \ref{adm1}. In view of the rigidity part of the Positive Mass Theorem \cite{SY79} one might suspect that the manifolds $(\mathbb{R}^3, \theta_n^4\geucl)$ would converge to the Euclidean space in some way. The analysis of our Section \ref{NoahsSection} (compare with Figures \ref{fig1} and \ref{fig2}) proves  that is not literally the case.  A rigorous framework for studying the stability of the rigidity part of the Positive Mass Theorem is proposed in \cite{LeeSormani1}. It has been conjectured that if a sequence of pointed\footnote{More precisely, the conjecture also assumes that $x_n$ do not disappear
down increasingly deep wells.} asymptotically flat manifolds $(M'_n, g_n, x_n)$ with nonnegative scalar curvature whose boundaries are outermost minimal surfaces has $m_{\mathrm{ADM}}(M'_n,g_n)\to 0$, then $(M'_n,g_n)$ converge in the \emph{pointed intrinsic flat sense} to Euclidean space, $({\mathbb{R}^3}, \geucl)$. It is known (see \cite{LeeSormani1}) that the conjecture would be false if it were stated with a stronger notion of convergence (e.g Gromov-Hausdorff convergence).   

The intrinsic flat distance between two oriented Riemannian
manifolds with boundary was originally defined in the joint work of C. Sormani and S. Wenger \cite{SW-JDG}. This distance is measured by first viewing each of the two manifolds as an integral current, pushing forward these integral currents into a common complete metric space via distance preserving maps, and then measuring the flat distance between the two push forwards. To ensure that this notion does not depend upon the choice of particular distance preserving maps, one takes the infimum over all distance preserving maps into all complete metric spaces. 

In practice it is often possible to estimate the intrinsic flat distance by only using notions from Riemannian geometry. A particularly easy-to-use estimate was proven by S. Lakzian and C. Sormani in \cite{LS13}. For the convenience of the reader the full statement of the relevant theorem is included in Appendix \ref{ifl}. 

In the context of our work let $(M'_n,g_n)$ denote that portion of $(\mathbb{R}^3, \theta_n^4\geucl)$ located outside its outermost minimal surface. Intuitively speaking, when $0\le \alpha<1$ Proposition \ref{In} and Theorem \ref{minsurf2:thm} indicate that the sequence $(M'_n, g_n)$ exhausts and converges to the entire Euclidean $\mathbb{R}^3$, as conjectured in \cite{LeeSormani1}. Here is a more precise statement. 

\begin{theorem}\label{ifl:thm}
Fix a point $p\in \mathbb{R}^3\smallsetminus \{ 0\}$ away from the origin, and let $R_0=|p|+1$. 
For all $R>R_0$ the balls $M_{1,n}=B_{g_n}(p,R) \subseteq M_n'$ converge to the Euclidean ball $M_2=B_{\geucl}(p,R) \subseteq \mathbb{R}^3$ in the intrinsic flat sense. 
\end{theorem}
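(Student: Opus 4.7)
The plan is to apply the Lakzian--Sormani estimate from \cite{LS13} (restated in Appendix \ref{ifl}) using the identity map as the diffeomorphism between suitable subregions of $M_{1,n}$ and $M_2$ that are bounded away from the origin. The intuition is that both $M_{1,n}$ and $M_2$ look like Euclidean balls and they agree except near the origin, where $M_{1,n}$ is truncated at the outermost minimal surface of $g_n$ (whose Euclidean size vanishes) and where $g_n$ disagrees with $\geucl$ (on a set of vanishing Euclidean size as well).

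Fix $\delta\in (0,1)$ small, to be sent to zero after $n\to\infty$, and set $W_\delta := B_{\geucl}(p,R)\setminus \overline{B_{\geucl}(0,\delta)}$. Because $0\le \alpha<1$ gives $b_{0,n}d_n^{(1-\alpha)/2}\to 0$, Theorem \ref{minsurf2:thm} puts the outermost minimal surface of $g_n$ inside $B_{\geucl}(0,\delta/2)$ for all large $n$, so $W_\delta\subseteq M_n'$. Proposition \ref{In} then yields $\|g_n-\geucl\|_{C^1(\overline{W_\delta})}\to 0$, so on $W_\delta$ the identity is almost an isometry between $g_n$ and $\geucl$. Using this $C^0$-closeness I would first show that the $g_n$- and Euclidean distances from $p$ coincide up to an error $\eta_n\to 0$ on $W_\delta$, so the symmetric difference $(B_{g_n}(p,R)\triangle B_{\geucl}(p,R))\cap W_\delta$ lies in a thin annular shell of Euclidean width $\eta_n$ and has vanishing volume. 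Thus up to negligible correction the identity identifies $B_{g_n}(p,R)\cap W_\delta$ with $B_{\geucl}(p,R)\cap W_\delta=W_\delta$.

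Next I would estimate the complementary volumes. The Euclidean complement $M_2\setminus W_\delta$ sits in $\overline{B_{\geucl}(0,\delta)}$ and has volume $O(\delta^3)$. For $M_{1,n}\setminus W_\delta$, the region between the minimal surface and $\{|x|=\delta\}$, Proposition \ref{thetaexpands} shows that the leading contribution to $\theta_n$ is $1+b_{0,n}d_n^{(1-\alpha)/2}/|x|$; since the minimal surface sits at $|x|\approx b_{0,n}d_n^{(1-\alpha)/2}$ (Theorem \ref{minsurf2:thm}) we have $|x|\gtrsim b_{0,n}d_n^{(1-\alpha)/2}$ throughout, forcing $\theta_n$ (and hence $\theta_n^6$) to be bounded by an absolute constant uniformly in $n$. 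Therefore $\mathrm{vol}_{g_n}(M_{1,n}\setminus W_\delta)\lesssim \delta^3$. The same pointwise bound controls the $g_n$-diameter of $M_{1,n}$ uniformly for fixed $R$ and $\delta$.

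Plugging all of this into the Lakzian--Sormani bound yields an estimate of the form $d_{\mathcal{F}}(M_{1,n},M_2)\le f(\eta_n)+O(\delta^{3/2})$, where $f(\eta_n)\to 0$ as $n\to\infty$ with $\delta$ fixed; sending $n\to\infty$ first and then $\delta\to 0$ finishes the argument. The main obstacle will be the careful bookkeeping needed to fit the hypotheses of the Lakzian--Sormani theorem in the presence of two distinct boundary components---the outermost minimal surface of $g_n$ and the sphere $\partial B_{g_n}(p,R)$---and to verify that the mild distortion of the latter does not create a large ``filling'' contribution. Both issues are resolved by combining the volume estimate above with the $C^1$-convergence of Proposition \ref{In} on $W_\delta$, together with the sharp localization of the horizon provided by Theorem \ref{minsurf2:thm}.
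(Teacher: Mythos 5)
Your proposal is correct and follows essentially the same route as the paper: apply the Lakzian--Sormani estimate with the identity map on the ball minus a small neighborhood of the origin, using Proposition \ref{In} for metric and distance closeness and Theorem \ref{minsurf2:thm} together with $\theta_n=\mathcal{O}(1)$ outside the horizon scale to control the excess volumes. The only difference is bookkeeping: the paper excises an $n$-dependent ball of radius $b_{0,n}d_n^{\frac{1-\alpha}{2}-\e}$ and shrinks the outer radius by the distance distortion $\tilde\lambda_n$ so that the common region lies in both balls, obtaining an explicit rate in a single limit, whereas you use a fixed excision radius $\delta$, fold the outer mismatch into the excess volume, and take a double limit ($n\to\infty$, then $\delta\to 0$).
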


The proof is a direct application of the Lakzian-Sormani estimate and can be found in Appendix \ref{ifl}.

\subsection{Connection to point particle limits of \cite{sb} and \cite{IvasPPpaper}} In  \cite{sb} Gralla and Wald develop a framework for understanding MiSaTaQuWa equations, which are believed to govern the motions of small bodies in general relativity. Their framework involves a one parameter family (``$\e$") of space-times  which satisfies various ``point particle limit" conditions as $\e\to 0$. Examples of initial data which have a potential to produce families of space-times with limit properties of \cite{sb} are constructed in \cite{IvasPPpaper}. In the context of time-symmetric\footnote{For the more general framework please consult \cite{IvasPPpaper}.} initial data ``point particle limits" can be articulated as follows: Let $(M,g)$ be large-scale data, let $S\in M$ and let $(M_0,g_0)$ be asymptotically Euclidean data.  A family of data $(M_\e, g_\e)$ obeys point-particle limit properties with respect to $(M,g)$, $(M_0,g_0)$ and $S\in M$ if the following hold.

\begin{enumerate}
\item  \emph{The ordinary point-particle limit property.} Let $\mathbf{K}\subseteq M\smallsetminus\{S\}$ be a compact set. For small $\e$ there exist embeddings $i_\e:\mathbf{K}\to M_\e$ such that for all $k\in\mathbb{N}\cup\{0\}$ 
$$\|(i_\e)^*g_\e-g\|_{C^{k}(\mathbf{K},g)}\to 0 \text{\ \ as\ \ }\e\to 0.$$

\medbreak
\item \emph{The scaled point-particle limit property.} Let $\mathbf{K}\subseteq M_0$ be a compact set. 
For small $\e$ there exist embeddings $\iota_\e:\mathbf{K}\to M_\e$ such that for all $k\in\mathbb{N}\cup\{0\}$
$$\left\|\tfrac{1}{\e^2}\left(\iota_\e\right)^*g_\e-g_0\right\|_{C^k(\mathbf{K},g_0)}\to 0 \text{\ \ as\ \ }\e\to 0.$$ 
\end{enumerate}

The fact that the family of data $(\R^3,\theta_n^4\geucl)$ satisfies the stated point-particle limit properties is an immediate consequence of Propositions \ref{In} and \ref{Jn}.

\appendix 
\section{Locating the Horizons}\label{minsurf-ivaisms}
This appendix is dedicated to the proofs of Theorems \ref{minsurf1:thm} and \ref{minsurf2:thm}. We begin with the proof of Theorem \ref{minsurf1:thm}, which is more-or-less an Implicit Function Theorem argument applied to the context of Proposition \ref{Jn}. 

\begin{proof}[Proof of Theorem \ref{minsurf1:thm}]
Consider the metric 
$$\psi_n^4\geucl:=\lambda_n^{-2} \mathcal{H}_{\lambda_n}^*(\theta_n^4 \geucl).$$  
As in the proof of Proposition \ref{Jn} we have that on compact subsets of $\mathbb{R}^3\smallsetminus\{0\}$  
\begin{equation}\label{metricapprox}
\psi_n(y) - (1+\tfrac{G\mE}{2c^2|y|}) = O(d_n^{\frac{1+\alpha}{2}}) \text{\ \ as\ \ } n\to \infty;
\end{equation}
furthermore, the same holds for \emph{all} the derivatives. To prove our theorem it suffices to find a minimal surface for $\psi_n^4\geucl$ which is a small perturbation of the sphere $|y|=\frac{G\mE}{2c^2}$. We seek this minimal surface in the form of a (scaled) graph of some positive function $f$ over the unit sphere $S^2$:
$$|y|=\frac{G\mE}{2c^2}f\left(\tfrac{y}{|y|}\right),\ \ \text{i.e.}\ \ y=f(s)\,s\ \  \text{with}\ \ s\in S^2.$$ 
Assuming the standard round metric on $S^2$ throughout, the area element induced on this graph is given by 
$$\left(\frac{G\mE}{2c^2}\right)^2\theta_n(f,s)^4f\sqrt{f^2+|df|^2_{S^2}}\dvol_{S^2}.$$
Ignoring the scalar multiple of $\frac{G\mE}{2c^2}$ the first variation of the area functional is given by 
$$\begin{aligned}
\int \left(4\theta_n^3\tfrac{\partial \theta_n}{\partial f}f\sqrt{f^2+|df|^2}+\theta_n^4\sqrt{f^2+|df|^2}+\theta_n^4\tfrac{f^2}{\sqrt{f^2+|df|^2}}\right)(\delta f)&\dvol_{S^2}\\
-\int \mathrm{div}\left(\theta_n^4\tfrac{f}{\sqrt{f^2+|df|^2}}\, \grad f\right)(\delta f)&\dvol_{S^2}
\end{aligned}$$
Direct expansion of the divergence term, using the decomposition 
$$d\theta_n=\tfrac{\partial \theta_n}{\partial f} df + d_s \theta_n,$$
yields  
$$\begin{aligned}
&4\theta_n^3\left(\tfrac{\partial \theta_n}{\partial f}\tfrac{f|df|^2}{\sqrt{f^2+|df|^2}}+\tfrac{f}{\sqrt{f^2+|df|^2}}\langle d_s\theta_n, df\rangle\right)\\
+&\theta_n^4\left(\tfrac{|df|^2}{\sqrt{f^2+|df|^2}}-\tfrac{f^2|df|^2+f\mathrm{Hess}f(\grad f, \grad f)}{\sqrt{f^2+|df|^2}^3}+\tfrac{f \Delta_{S^2}f}{\sqrt{f^2+|df|^2}}\right).
\end{aligned}$$
Upon an algebraic simplification we obtain the minimal surface equation 
\begin{equation}\label{minsurf:eqn}
\begin{aligned}
\Delta_{S^2}f-\tfrac{1}{f^2+|df|^2}\mathrm{Hess}f(\grad f, \grad f)-\left(2+\tfrac{|df|^2}{f^2+|df|^2}\right)f&\\
-4\theta_n^{-1}\left(\tfrac{\partial \theta_n}{\partial f}f^2 + \tfrac{f}{\sqrt{f^2+|df|^2}}\langle d_s\theta_n, df\rangle\right)&=0.
\end{aligned}
\end{equation}
For the rest of the proof we denote the operator / terms on the first line of \eqref{minsurf:eqn} by 
$\mathcal{P}_0f$.  The approximation \eqref{metricapprox} ensures that 
$$\theta_n^{-1}=(1+\tfrac{1}{f})^{-1}+O\left(d_n^{\frac{1+\alpha}{2}}\right),\ \tfrac{\partial \theta_n}{\partial f}f^2=-1 + O\left(d_n^\frac{1+\alpha}{2}\right),\ d_s\theta_n=O\left(d_n^\frac{1+\alpha}{2}\right)$$
so long as the range of $f$ is contained in a fixed compact subset of $(0,\infty)$. In fact, these estimates hold with first $k$ derivatives so long as there is a uniform bound on the first $k$ derivatives of $f$. 
Overall, this means that \eqref{minsurf:eqn} can be expressed in the form of 
\begin{equation}\label{MinSurf:eqn}
\mathcal{P}_nf:=\mathcal{P}_0f+\langle a_n(f),df\rangle+\Lambda_n(f)=0,
\end{equation}
where 
\begin{itemize}
\item the $1$-form $a_n(f)=a_n(f,s)$, $s\in S^2$ converges to $0$ in the $C^\infty$-sense at the rate of $O\left(d_n^{\frac{1+\alpha}{2}}\right)$. 
\medbreak
\item the function $\Lambda_n(f)=\Lambda_n(f,s)$, $s\in S^2$ converges to the function $4\left(1+\tfrac{1}{f}\right)^{-1}$ in the $C^\infty$-sense at the rate of $O\left(d_n^{\frac{1+\alpha}{2}}\right)$.
\end{itemize}
Note that 
$$\mathcal{P}_\infty f:=\mathcal{P}_0 f+4\left(1+\frac{1}{f}\right)^{-1}=0$$
is the corresponding minimal surface equation for the metric $(1+\tfrac{G\mE}{2c^2|y|})^4\geucl$, and that the constant function $f_\infty=1$ is its solution. To prove our theorem we show that \eqref{MinSurf:eqn} permits a solution for which 
$$\|f-f_\infty\|_{H^k(S^2)}=O(d_n^{\frac{1+\alpha}{2}}) \text{\ \ for\ all\ \ } k.$$

First note that there is a uniform constant $C_P$ such that 
$$\|\mathcal{P}_n(f_\infty)\|_{H^k(S^2)}\le C_P\,d_n^{\frac{1+\alpha}{2}}$$
for all (sufficiently large) $n$.
The linearization $\mathcal{L}_n$ of the operator $\mathcal{P}_n$ at $f_\infty$ takes the form of 
$$\mathcal{L}_n h=\Delta h +\langle b_n, dh\rangle +c_nh$$
where the $1$-form $b_n$ converges to $0$ in the $C^\infty$-sense at the rate of $O(d_n^{\frac{1+\alpha}{2}})$ and where the function $c_n$ converges to the constant function $-1$ in the $C^\infty$-sense at the rate $O(d_n^{\frac{1+\alpha}{2}})$. As such the linearizations $\mathcal{L}_n$ converge to the linearization $\mathcal{L}_\infty=\Delta -1$ of $\mathcal{P}_\infty$ at $f_\infty$:
$$\|\mathcal{L}_n h - \mathcal{L}_\infty h\|_{H^k(S^2)}\le C\, d_n^{\frac{1+\alpha}{2}}\|h\|_{H^{k+1}(S^2)}$$
for some uniform constant $C$. 
Since $\mathcal{L}_\infty:H^{k+2}(S^2)\to H^k(S^2)$ is invertible, we have 
$$\|h\|_{H^{k+2}(S^2)}\le \|\mathcal{L}_\infty h\|_{H^k(S^2)}\le \|\mathcal{L}_n h\|_{H^k(S^2)} + C\, d_n^{\frac{1+\alpha}{2}}\|h\|_{H^{k+1}(S^2)}.$$
For $n$ sufficiently large the last term can be absorbed on the left hand side to yield the uniform invertibility of $\mathcal{L}_n$:
$$\|h\|_{H^{k+2}(S^2)}\le C_L\cdot \|\mathcal{L}_n h\|_{H^k(S^2)};$$
the constant $C_L$ independent of $n$. Inspecting the terms of the remainder 
$$\mathcal{Q}_n(f):=\mathcal{P}_n(f)-\mathcal{P}_n(f_\infty) -\mathcal{L}_n (f-f_\infty)$$ individually we see that for all $\varepsilon>0$ there exists $\nu(\varepsilon)>0$ so that the following holds for all $n\gg1$, $0<\nu<\nu(\varepsilon)$ and $f_1, f_2\in B_\nu(f_\infty)\subseteq H^{k+2}(S^2)$:
$$\|\mathcal{Q}_n(f_1)-\mathcal{Q}_n(f_2)\|_{H^{k}(S^2)}\le \varepsilon \|f_1-f_2\|_{H^{k+2}(S^2)}.$$
In particular, since $\mathcal{Q}_n(f_\infty)=0$, we see that 
$$\|\mathcal{Q}_n(f)\|_{H^{k}(S^2)}\le \varepsilon \nu$$
for all $f\in B_\nu(f_\infty)\subseteq H^{k+2}(S^2)$. 

Choose $\varepsilon$ so that $2C_L\varepsilon<1$ and $n\gg 1$ so that 
$\nu=2C_LC_P d_n^{\frac{1+\alpha}{2}}<\nu(\varepsilon)$.
Then the mapping 
\begin{equation}\label{contraction}
f\mapsto f_\infty-\mathcal{L}_n^{-1}\left(\mathcal{P}_n(f_\infty) +\mathcal{Q}_n(f)\right)
\end{equation}
maps the closed ball $B_{\nu}(f_\infty)\subseteq H^{k+2}(S^2)$ to itself due to 
$$C_LC_Pd_n^{\frac{1+\alpha}{2}}+C_L\varepsilon \nu= C_LC_Pd_n^{\frac{1+\alpha}{2}}(1+2C_L\varepsilon)<\nu.$$ 
Furthermore, the mapping is a contraction since 
$$\begin{aligned}
&\|\mathcal{L}_n^{-1}\left(\mathcal{P}_n(f_\infty) +\mathcal{Q}_n(f_1)\right) - \mathcal{L}_n^{-1}\left(\mathcal{P}_n(f_\infty) +\mathcal{Q}_n(f_2)\right)\|_{H^{k+2}(S^2)}\\
\le &C_L\|\mathcal{Q}(f_1)-\mathcal{Q}(f_2)\|_{H^{k}(S^2)} \le C_L\varepsilon \|f_1-f_2\|_{H^{k+2}(S^2)}\\
\le &\tfrac{1}{2}\|f_1-f_2\|_{H^{k+2}(S^2)}.
\end{aligned}$$
The desired solution $f$ now arises as a fixed point of \eqref{contraction}.
\end{proof}

\bigbreak
Before we proceed to prove Theorem \ref{minsurf2:thm} we remind the reader of several background results. The first result is about a lower bound on the injectivity radius, and it comes out as a consequence of Theorem 4.7 from \cite{CGT} (compare with Theorem 3.7 in \cite{injrad}). For connected, complete Riemannian manifolds with sectional curvature bounds
\begin{equation}\label{sec-curv}
|\mathrm{Sec}(g)| < \kappa
\end{equation}
and for $r<\pi/(4\sqrt{\kappa})$  
we have that 
\begin{equation}\label{lowerbound:injrad}
\mathrm{inj rad}(p)\ge \frac{r}{2}\cdot \frac{\mathrm{Vol}_gB_g(p,r)}{\mathrm{Vol}_gB_g(p,r)+\mathrm{Vol}_{(-\kappa)}(2r)},
\end{equation}
where $\mathrm{Vol}_{(-\kappa)}(2r)$ denotes the volume of the ball of radius $2r$ in the (simply connected) space of constant sectional curvature 
$-\kappa$. 

The second result we review here is  a monotonicity formula for the area of minimal surfaces, e.g. formula (7.5) from \cite{CM}. 
Let $x_0$ be a point on a smooth minimal surface $\Sigma$ in a $3$-manifold with sectional curvature bounds \eqref{sec-curv} and a lower bound $i_0>0$ on the injectivity radius. Then the function  
$$e^{2\sqrt{\kappa}s} s^{-2} \mathrm{Area}_g(B_{g}(x_0,s)\cap \Sigma)$$
of $0<s<\min\{i_0,\tfrac{1}{\sqrt{\kappa}}, \mathrm{dist}_g(x_0, \partial \Sigma)\}$ is non-decreasing. Since the function converges to $\pi$ as $s\to 0$ this monotonicity formula gives us an inequality of the form 
\begin{equation}\label{lowerbound:minsurf}
\mathrm{Area}_g(B_{g}(x_0,s)\cap \Sigma)\geq (\pi e^{-2}) s^2
\end{equation}
on the interval for $s$ stated above. 

The following Proposition is the last remaining background result needed for the proof of Theorem \ref{minsurf2:thm}. The ideas presented here were originally developed for ``Geometrostatic manifolds of small ADM mass" by C. Sormani and I. S., currently in preparation. 

\begin{proposition}\label{christinathing}
Let $\Sigma$ be a minimal surface of $\theta_n^4\geucl$. Suppose $\Sigma$ is diffeomorphic to $S^2$. Then 
$$\pi \le \left(\max_{\Sigma}\left(\theta_n^{-6} |d\theta_n|^2\right)\right) \mathrm{Area}_{\theta_n^4\geucl}(\Sigma).$$
\end{proposition}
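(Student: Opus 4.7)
The plan is to bypass the Gauss equation for $\Sigma$ inside $(\mathbb{R}^3,\theta_n^4\geucl)$ and instead treat $\Sigma$ as a topological sphere embedded in the flat background $\R^3$, where the classical Willmore-type inequality $\int_\Sigma (H^{\geucl})^2\,dA^{\geucl}\ge 16\pi$ is available. The minimality of $\Sigma$ with respect to $\theta_n^4\geucl$ will be converted, via the conformal transformation rule for mean curvature, into a pointwise identity for $H^{\geucl}$ in terms of $\theta_n$ and its normal derivative.

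First I would record the conformal transformation rule. Writing $\theta_n^4\geucl = e^{2u}\geucl$ with $u=2\log\theta_n$ and $n-1=2$, the mean curvature satisfies
$$\tilde H \;=\; \theta_n^{-2}\bigl(H^{\geucl}+4\theta_n^{-1}\partial_\nu\theta_n\bigr),$$
so the minimality condition $\tilde H=0$ gives the pointwise identity $(H^{\geucl})^2 = 16\,\theta_n^{-2}(\partial_\nu\theta_n)^2$ on $\Sigma$.

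Next, applying the Theorema Egregium and AM--GM to the Euclidean principal curvatures, one has $K^{\geucl}_\Sigma = \det A^{\geucl}=\lambda_1\lambda_2 \le (\lambda_1+\lambda_2)^2/4 = (H^{\geucl})^2/4$. Integrating over $\Sigma$ and applying Gauss--Bonnet on the topological sphere $\Sigma$ produces
$$16\pi \;=\; 4\int_\Sigma K_\Sigma^{\geucl}\, dA^{\geucl} \;\le\; \int_\Sigma (H^{\geucl})^2\, dA^{\geucl}.$$
Substituting the formula for $(H^{\geucl})^2$ and converting the area form through $dA^{\geucl} = \theta_n^{-4}\,dA^{\theta_n^4\geucl}$ yields $\pi \le \int_\Sigma \theta_n^{-6}(\partial_\nu\theta_n)^2\,dA^{\theta_n^4\geucl}$. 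Bounding $(\partial_\nu\theta_n)^2 \le |d\theta_n|^2$ pointwise and pulling the maximum outside the integral then delivers the claimed estimate.

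I do not anticipate a serious obstacle here; the only mild care required is in fixing sign conventions for the second fundamental form so that the conformal transformation rule for $H$ is applied consistently, but since only $(H^{\geucl})^2$ enters the final inequality this ambiguity is harmless.
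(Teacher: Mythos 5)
Your proposal is correct and follows essentially the same route as the paper: the conformal transformation of mean curvature plus minimality to express $H_{\geucl}$ through $\theta_n^{-1}\partial_\nu\theta_n$, the pointwise bound $K_{\geucl}\le \tfrac14 H_{\geucl}^2$ from the principal curvatures, Gauss--Bonnet on the sphere, and conversion of the area element via $\theta_n^4$ before pulling out the maximum. The only cosmetic difference is that you package the Gauss--Bonnet step as the Willmore-type inequality $\int_\Sigma H_{\geucl}^2\,dA_{\geucl}\ge 16\pi$ and postpone the estimate $(\partial_\nu\theta_n)^2\le |d\theta_n|^2$ to the end, which changes nothing of substance.
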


\begin{proof}
The mean curvatures of $\Sigma$ computed with respect to two conformally equivalent ambient metrics, $\theta_n^4\geucl$ and $\geucl$, relate as follows:
$$H_{\theta_n^4\geucl}=\theta_n^{-2}H_{\geucl}+4\theta_n^{-3}\grad\theta_n\cdot \vec{N}.$$
Here the gradient, the dot product and the unit normal $\vec{N}$ are all computed with respect to the Euclidean metric $\geucl$. It now follows that $\Sigma$ satisfies the minimal surface equation 
$$H_{\geucl}=-4\theta_n^{-1}\grad(\theta_n)\cdot \vec{N}.$$
The Gauss curvature $K_{\geucl}$ of $\Sigma$ viewed as a submanifold of the Euclidean space satisfies $K_{\geucl}\le \tfrac{1}{4} H_{\geucl}^2$, which is easily seen from the interpretations of said curvatures in terms of the eigenvalues of the shape operator. In particular, we now have 
$$K_{\geucl}\le 4\theta_n^{-2}|d\theta_n|^2.$$ 
The Gauss-Bonnet Theorem implies that 
$$
\begin{aligned}
4\pi =\int_\Sigma K_{\geucl}\dvol \le &4\int_\Sigma \theta_n^{-2}|d\theta_n|^2 \dvol\\
&4\int_\Sigma \left(\theta_n^{-6} |d\theta_n|^2\right) \theta_n^4\dvol\\
\le &4\left(\max_{\Sigma}\left(\theta_n^{-6} |d\theta_n|^2\right)\right) \mathrm{Area}_{\theta_n^4\geucl}(\Sigma)
\end{aligned}
$$
with $\dvol$ on $\Sigma$ referring to the volume element induced by $\geucl$. 
\end{proof}

\begin{proof}[Proof of Theorem \ref{minsurf2:thm}]
In what follows we simplify the notation by setting
$$\varrho_n=b_{0,n}d_n^{\frac{1-\alpha}{2}}.$$
Let $\Sigma$ denote any of the connected components of the outermost minimal surface of $\theta_n^4\geucl$; by \cite{Huisken-Ilmanen} we know that each such $\Sigma$ is diffeomorphic to $S^2$. Our first goal is to estimate $\mathrm{Area}_{\theta_n^4\geucl}(\Sigma)$ using Proposition \ref{christinathing}. We do so by using the series expansion of Proposition \ref{thetaexpands} and by observing that, due to Theorem \ref{minsurf1:thm}, 
\begin{equation}\label{not-too-far}
|x|\ge (1+\mathcal{O}(d_n))\varrho_n
\end{equation}
for all $x\in \Sigma$. Specifically, note that for $x$ satisfying \eqref{not-too-far} we have 
\begin{equation}\label{calc1}
\theta_n^{-3} |d\theta_n|=\frac{\frac{\varrho_n}{|x|^2}+\mathcal{O}(d_n^\alpha)}{\left(1+\frac{\varrho_n}{|x|} +\mathcal{O}(d_n^{\frac{1+\alpha}{2}})\right)^3}=\frac{\varrho_n |x|}{(\varrho_n+|x|)^3}+\mathcal{O}(d_n^\alpha)
\end{equation}
Optimizing the expression $\frac{\varrho_n |x|}{(\varrho_n+|x|)^3}$ over the region \eqref{not-too-far} reveals the maximum of 
$\frac{1}{8\varrho_n}+\mathcal{O}(d_n^\alpha)$
achieved at $|x|=(1+O(d_n))\varrho_n$. 
Proposition \ref{christinathing} now implies $\pi \le \left(\frac{1}{8\varrho_n}+\mathcal{O}(d_n^\alpha)\right)^2 \mathrm{Area}_{\theta_n^4\geucl}(\Sigma)$
i.e. 
\begin{equation}\label{near-penrose}
\begin{aligned}
64\pi\varrho_n^2\left(1+\mathcal{O}(d_n^{\frac{1+\alpha}{2}})\right)^2\le \mathrm{Area}_{\theta_n^4\geucl}(\Sigma).
\end{aligned}
\end{equation}

We now use the Penrose inequality \cite{bray-penrose} to limit the number of connected components of the outermost minimal surface of $\theta_n^4\geucl$. Indeed, if there were a connected component $\Sigma'$ other than the one which contains the minimal surface of Theorem \ref{minsurf1:thm} in its interior, call it $\Sigma$, we would have that \begin{equation}\label{penrose}
\mathrm{Area}_{\theta_n^4\geucl}(\Sigma)+ \mathrm{Area}_{\theta_n^4\geucl}(\Sigma')\le 16\pi \left(\frac{G}{c^2}m_{\mathrm{ADM}}(\theta_n^4\geucl)\right)^2=64\pi \varrho_n^2. 
\end{equation}
For $n$ sufficiently large \eqref{penrose} is in contradiction with \eqref{near-penrose}. In the rest of the proof $\Sigma_n$ denotes the outermost minimal surface which contains the minimal surface of Theorem \ref{minsurf1:thm} in its interior.

Our next observation is that for all $C>1$ (and sufficiently large $n$) the surface $\Sigma_n$ must contain a point in $B_{\geucl}(0,C\varrho_n)$. To prove this, fix $C$ and suppose the opposite: that 
$$|x|\ge C\varrho_n$$
for all $x\in \Sigma_n$. Under this assumption we obtain (compare to \eqref{calc1}) that
$$\theta_n^{-3}|d\theta_n|\le \frac{1}{\varrho_n}\cdot \frac{C}{(1+C)^3}+\mathcal{O}(d_n^\alpha).$$
From the Penrose Inequality $\mathrm{Area}_{\theta_n^4\geucl}(\Sigma_n)\le 64\pi \varrho_n^2$ and Proposition \ref{christinathing} we see that 
$$\pi \le \left(\frac{1}{\varrho_n}\cdot \frac{C}{(1+C)^3}+\mathcal{O}(d_n^\alpha)\right)^2\cdot 64\pi \varrho_n^2.$$
In particular, we arrive at 
$$\frac{1}{8}\le \frac{C}{(1+C)^3}+\mathcal{O}(d_n^\alpha),$$
which is impossible because $C>1$ implies $\frac{C}{(1+C)^3}<\frac{1}{8}$.
This contradiction shows that 
$$\Sigma_n \cap B_{\geucl}(0,C\varrho_n)\neq \emptyset \text{\ \ for all\ \ } C>1.$$
The value of $C_2=2C$ which completes the proof of our theorem is described later on in the proof.  

Suppose now that $\Sigma_n$ contains a point outside of $B_{\geucl}(0,2C\varrho_n)$, and consider the surface 
$$\Sigma'_n= \Sigma_n \cap \left(\bar B_{\geucl}(0,2C\varrho_n)\smallsetminus B_{\geucl}(0, C\varrho_n)\right).$$
Our next step is to estimate the area of $\Sigma'_n$ from below using \eqref{lowerbound:minsurf}. To do so we apply \eqref{lowerbound:injrad} to the complete metric 
$$\tilde{\theta}_n^4\geucl:=\left(\chi_n\theta_n+(1-\chi_n)\right)^4\geucl$$ which is designed to replace $\theta_n^4\geucl$ in regions where its curvature is too high while preserving $\theta_n^4\geucl$ near $\Sigma'_n$. Specifically, we take $\chi_n(x)=\chi(x/(C\varrho_n))$
to be a self-similar family of cut-off functions with 
$$\begin{cases}
\chi_n(x)\equiv 1, \text{\ \ if\ \ } |x|\ge C\varrho_n/2,\\
\chi_n(x)\equiv 0, \text{\ \ if\ \ } |x|\le C\varrho_n/4,\\
\partial \chi_n=O((C\varrho_n)^{-1}), \\
\partial^2 \chi_n=O((C\varrho_n)^{-2}).  
\end{cases}$$

The series expansion of $\theta_n$ from Proposition \ref{thetaexpands} shows that 
$$|\theta_n|=1+O(C^{-1}),\ \ |\partial \theta_n|=O((C^2\varrho_n)^{-1}) \text{\ \ and\ \ }
|\partial^2 \theta_n|=O((C^3\varrho_n^2)^{-1})$$ 
on the regions where $\chi_n\neq 0$. The choice of our $\chi_n$ now ensures that 
$$\left|\partial \tilde{\theta}_n\right|=O((C^2\varrho_n)^{-1}) \text{\ \ and\ \ } \left|\partial^2 \tilde{\theta}_n\right|=O((C^3\varrho_n^2)^{-1});$$
consequently, we obtain 
$$\left|\mathrm{Sec}\left(\tilde{\theta}_n^4\geucl\right)\right|=O(\kappa) \text{\ \ with\ \ } \kappa=\frac{1}{C^4\varrho_n^2}.$$
Next, note that 
$B_{\tilde{\theta}_n^4\geucl}(p,r)\subseteq B_{\geucl}(p,r)$ due to $\theta_n>1$. It follows that 
$$\mathrm{Vol}_{\tilde{\theta}_n^4\geucl}B_{\tilde{\theta}_n^4\geucl}(p,r)\ge \tfrac{4\pi}{3 }r^3 \text{\ \ and\ \ }\mathrm{inj rad}(p)\ge \frac{r}{2}\cdot \frac{\frac{4\pi}{3 }r^3}{\frac{4\pi}{3 }r^3+\mathrm{Vol}_{(-\kappa)}(2r)}.$$
By taking $r=\frac{\pi}{8\sqrt{\kappa}}$, for example, we get 
$\mathrm{Vol}_{(-\kappa)}(2r)=\pi \frac{\sinh(\pi/2)-\pi/2}{\sqrt{\kappa}^3}$, and 
$$\mathrm{inj rad}\ge \frac{C_{\mathrm{inj}}}{\sqrt{\kappa}}=C_{\mathrm{inj}}C^2\varrho_n$$
for some (universal) constant $C_{\mathrm{inj}}$.

Choose a point $x_0\in \Sigma'_n$ such that $|x_0|=\tfrac{3}{2}C\varrho_n$. Observe that  
$$\tfrac{1}{2}C\varrho_n\le \mathrm{dist}_{\tilde{\theta}_n^4\geucl}(x_0,\partial \Sigma'_n)\le \mathrm{Const}\cdot C\varrho_n$$
due to the fact that  
$\tilde{\theta}_n=\theta_n=1+O(C^{-1})$ on the region of interest.
Assuming that $C\gg \tfrac{1}{C_{\mathrm{inj}}}$, we have 
$$\tfrac{1}{2}C\varrho_n< \mathrm{min}\{\mathrm{inj rad}, \tfrac{1}{\sqrt{\kappa}}, \mathrm{dist}_{\tilde{\theta}_n^4\geucl}(x_0,\partial \Sigma'_n)\}.$$
In particular, we may apply \eqref{lowerbound:minsurf} with $s=\tfrac{1}{4}C\varrho_n$:
$$\mathrm{Area}_{\theta_n^4\geucl}(\Sigma_n)\ge \mathrm{Area}_{\tilde{\theta}_n^4\geucl}(B_{\tilde{\theta}_n^4\geucl}(x_0,s)\cap \Sigma'_n)\ge (C^2\pi e^{-2}/16)\varrho_n^2$$
For a (universal) large constant $C$ the last inequality contradicts the Penrose inequality. Our proof is now complete. 
\end{proof}

\section{The Intrinsic Flat Distance}\label{ifl}

Our proof of Theorem \ref{ifl:thm} relies on an estimate for the intrinsic flat distance extracted from \cite{LS13}.

\begin{theorem} \label{thm-subdiffeo} 
Suppose $(M_1,g_1)$ and $(M_2,g_2)$ are oriented
precompact Riemannian manifolds with diffeomorphic subregions $W_i \subseteq M_i$.
Identifying $W_1=W_2=W$ assume that
on $W$ we have
\begin{equation*}
g_1 \le (1+\varepsilon)^2 g_2 \textrm{ and }
g_2 \le (1+\varepsilon)^2 g_1
\end{equation*}
Let $\mathrm{Diam}= \max\{1,\mathrm{diam}(M_1), \mathrm{diam}(M_2)\}$, $\lambda=\displaystyle{\sup_{q_1,q_2 \in W}} |d_{M_1}(q_1,q_2)-d_{M_2}(q_1,q_2)|$ and 
$$a>\frac{\arccos(1+\varepsilon)^{-1} }{\pi}\mathrm{Diam},\ \ 
\bar{h}= \max\{\sqrt{2\lambda \mathrm{Diam} },  \sqrt{\varepsilon^2 + 2\varepsilon} \; \mathrm{Diam} \}.$$
Then
\begin{align*}
d_{\mathcal{IF}}(M_1, M_2) \le&
\left(2\bar{h} + a\right) \Big(
\mathrm{Vol}_{g_1}(W)+\mathrm{Vol}_{g_2}(W)+\mathrm{Vol}_{g_1}(\partial W)+\mathrm{Vol}_{g_2}(\partial W)\Big)\\
&+\mathrm{Vol}_{g_1}(M_1\setminus W)+\mathrm{Vol}_{g_2}(M_2\setminus W).
\end {align*}
\end{theorem}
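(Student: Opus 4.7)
The plan is to construct a common complete metric space $Z$ together with distance-preserving embeddings $\phi_i: M_i \to Z$, and then exhibit an explicit decomposition $\phi_{1*}[M_1] - \phi_{2*}[M_2] = U + \partial V$ in terms of integral currents on $Z$. The intrinsic flat distance bound will then follow from computing $M(U) + M(V)$ for this specific witness (rather than truly infimizing over all admissible $Z$).

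First I would build $Z$ as a quotient of $M_1 \sqcup M_2 \sqcup (W \times [0, \bar h])$, gluing the bottom face $W \times \{0\}$ to the copy of $W$ sitting in $M_1$ and the top face $W \times \{\bar h\}$ to the copy of $W$ sitting in $M_2$, and then placing on the cylindrical piece a warped-product metric that interpolates between $g_1|_W$ and $g_2|_W$. The length metric $d_Z$ is then defined as the infimum of path lengths through any combination of the three pieces. The height $\bar h$ is chosen precisely so that neither the bi-Lipschitz distortion $(1+\varepsilon)$ on $W$ nor the distance distortion $\lambda$ between $d_{M_1}|_W$ and $d_{M_2}|_W$ can be exploited to shortcut an $M_i$-path by dipping into the bridge: elementary Pythagorean comparisons force $\bar h \ge \sqrt{2\lambda \mathrm{Diam}}$ and $\bar h \ge \sqrt{\varepsilon^2+2\varepsilon}\,\mathrm{Diam}$. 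To prevent global shortcuts across the entire two manifolds (joining a remote point of $M_1$ to a remote point of $M_2$), one additionally attaches a hemispheric cap of diameter $a$ above the cylinder; the threshold $a > \frac{\arccos(1+\varepsilon)^{-1}}{\pi}\mathrm{Diam}$ is the exact point at which the great-circle arc over the cap is long enough to beat any chord that the $(1+\varepsilon)$-distortion of the glued copies of $W$ could permit, as one sees by solving the corresponding spherical-chord inequality. With $\bar h$ and $a$ chosen in this way, both inclusions $\phi_i: M_i \to Z$ are distance-preserving.

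Next, I would exhibit the filling explicitly. Set $V$ equal to the integral current associated with the cylinder together with its hemispheric cap, oriented so that $\partial V$ produces the two copies of $W$ sitting inside $\phi_1(M_1)$ and $\phi_2(M_2)$ (with opposite signs) plus a thin lateral correction along $\partial W \times [0, \bar h]$ capped by the hemisphere, whose ``height'' dimension contributes the factor $(2\bar h + a)$. Set $U$ equal to the sum $[\phi_1(M_1 \setminus W)] + [\phi_2(M_2 \setminus W)]$ with suitable orientations, which absorbs the mismatch outside the diffeomorphic region. By construction $\phi_{1*}[M_1] - \phi_{2*}[M_2] = U + \partial V$. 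A direct mass calculation from the interpolating metric gives
\begin{equation*}
M(V) \le (2\bar h + a)\bigl(\mathrm{Vol}_{g_1}(W) + \mathrm{Vol}_{g_2}(W) + \mathrm{Vol}_{g_1}(\partial W) + \mathrm{Vol}_{g_2}(\partial W)\bigr),
\end{equation*}
while $M(U) = \mathrm{Vol}_{g_1}(M_1 \setminus W) + \mathrm{Vol}_{g_2}(M_2 \setminus W)$. Summing $M(U) + M(V)$ yields exactly the stated upper bound for $d_{\mathcal{IF}}(M_1, M_2)$.

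The hard part will be the rigorous verification that $\phi_i$ is distance-preserving. One must show that for any two points $p, q \in M_1$, the $d_Z$-infimum over \emph{all} paths in $Z$ --- including those that detour through the cylinder, cross into $M_2$ and come back, or traverse the hemispheric cap --- is no smaller than $d_{M_1}(p,q)$. This forces a case-by-case analysis of every admissible itinerary, where the Pythagorean threshold for $\bar h$ controls local dip-shortcuts and the spherical threshold for $a$ controls the long-range ones. Once this geometric verification is in place, the current-theoretic construction of $U$ and $V$ and the mass computation are routine.
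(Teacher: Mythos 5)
The paper does not prove this statement: Theorem \ref{thm-subdiffeo} is quoted verbatim from Lakzian and Sormani \cite{LS13} and imported into Appendix~B purely as a tool for proving Theorem \ref{ifl:thm}.  There is no ``paper's own proof'' to compare your attempt against, so I can only assess your sketch on its own merits and against the published argument in \cite{LS13}.

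Your outline captures the correct macroscopic architecture of the Lakzian--Sormani construction: build a filling space $Z$ by inserting a bridge $W\times[0,\cdot]$ between $M_1$ and $M_2$, verify that the inclusions $\phi_i$ are distance preserving, and then produce an explicit decomposition $\phi_{1*}[M_1]-\phi_{2*}[M_2]=U+\partial V$ whose masses give the stated bound.  That much is accurate.  But two things are off.  First, the bridge geometry: in \cite{LS13} the filling region is a \emph{single} piece $W\times[0,2\bar h + a]$ carrying a carefully designed metric (the ``hemispherically defined'' metric, which in the middle band mimics a geodesic arc over a sphere so that paths cutting from $M_1$ into $M_2$ and back gain at least the $\arccos$ deficit), not a cylinder of height $\bar h$ with a separately attached hemispheric cap of diameter $a$.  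Describing $a$ as the diameter of a cap bolted on top muddles where the $\arccos$ threshold enters and why the $(2\bar h + a)$ factor is the length of the interval fibering over $W$.  Second, and more seriously, you explicitly defer the only nontrivial part of the argument --- the case-by-case verification that every path in $Z$ joining two points of $M_i$ is at least as long as the $M_i$-distance between them --- flagging it as ``the hard part'' without actually carrying it out.  That verification, with the Pythagorean estimates that force $\bar h \ge \sqrt{2\lambda\,\mathrm{Diam}}$ and $\bar h \ge \sqrt{\varepsilon^2+2\varepsilon}\,\mathrm{Diam}$ and the spherical-chord estimate that forces the lower bound on $a$, \emph{is} the content of the Lakzian--Sormani theorem; without it the mass computation is just bookkeeping.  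As it stands this is a plausible plan, not a proof.
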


\begin{proof}[Proof of Theorem \ref{ifl:thm}]
Fix $0<\e<\tfrac{1-\alpha}{4}$ and consider 
$$U_n=B_{\geucl}(p,R)\smallsetminus \{|x|\ge b_{0,n}d_n^{\frac{1-\alpha}{2}-\e}\}.$$ 
By (the proof of) Proposition \ref{In} we have 
$$\geucl\le g_n\le (1+\mathcal{O}(d_n^\e))\geucl \text{\ \ on\ \ } U_n.$$
Now let $q_1, q_2\in U_n$. A consideration of the straight line segment possibly interrupted by a semi-circular arc shows that    
$$|q_1-q_2|\le d_{(U_n, g_n)}(q_1, q_2)\le (1+\mathcal{O}(d_n^\e))\left(|q_1-q_2|+\pi b_{0,n}d_n^{\frac{1-\alpha}{2}-\e}\right).$$
Define $\tilde{\lambda}_n:=\displaystyle{\sup_{q_1,q_2 \in U_n}} |d_{U_n}(q_1,q_2)-d_{\mathbb{R}^3}(q_1,q_2)|$. By our choice of $\e$ we have 
$$\tilde{\lambda}_n=\mathcal{O}\left(Rd_n^\e\right).$$

Let $W_n=B_{\geucl}(p,R-\tilde{\lambda}_n)\smallsetminus \{|x|\ge b_{0,n}d_n^{\frac{1-\alpha}{2}-\e}\}\subseteq U_n$ and note that  
$$W_n\subseteq M_{1,n}\subseteq M_2.$$
With this choice of $W_n$ the value of $\lambda$ needed to apply Theorem \ref{ifl:thm} can still be taken to be $\tilde{\lambda}_n$. Consequently, may choose $a$ and $\bar{h}$ with 
$$a=\mathcal{O}(Rd_n^{\frac{\e}{2}}),\ \ \bar{h}=\mathcal{O}(Rd_n^{\frac{\e}{2}}).$$
In addition, Theorem \ref{minsurf2:thm} implies  
$$\begin{aligned}
M_{1,n}\smallsetminus W_n&\subseteq \{C_1b_{0,n}d_n^{\frac{1-\alpha}{2}}\le |x|\le b_{0,n}d_n^{\frac{1-\alpha}{2}-\e}\}\cup \{R-\tilde{\lambda}_n\le |x-p|\le R\}\\
M_2\smallsetminus W_n&\subseteq \{|x|\le b_{0,n}d_n^{\frac{1-\alpha}{2}-\e}\}\cup \{R-\tilde{\lambda}_n\le |x-p|\le R\}.
\end{aligned}$$
By Proposition \ref{thetaexpands} we have $\theta_n=\mathcal{O}(1)$ for $|x|\ge C_1b_{0,n}d_n^{\frac{1-\alpha}{2}}$ and thus
$$
d_{\mathcal{IF}}(M_{1,n}, M_2) \leq
O(Rd_n^{\frac{\e}{2}})\cdot (\mathcal{O}(R^3)+\mathcal{O}(R^2))
 +\mathcal{O}(d_n^{3(\frac{1-\alpha}{2}-\e)}) + \mathcal{O}(\tilde{\lambda}_nR^2),
$$
which can clearly be made as small as possible. 
\end{proof}

\bibliographystyle{plain}
\bibliography{2014}

\end{document}